\newcommand{\vnorm}[1]{\|#1\|}
\newcommand{\mcal}{\mathcal}
\newcommand{\bmd}[1]{\dot{\bm{#1}}}
\newcommand{\lag}{\mathcal{L}}
\newcommand{\ham}{\mathcal{H}}
\newcommand{\abs}[1]{\vert#1\vert}
\newcommand{\divr}{\operatorname{div}}
\newcommand{\tr}{\operatorname{tr}}
\newcommand{\vol}{\operatorname{vol}}
\newcommand{\dfm}{\bm{F}}
\newcommand{\stress}{\bm{\sigma}}
\newcommand{\stressave}{\bm{\sigma}_{\rm av}}
\newcommand{\euclid}{\mathcal{E}}
\newcommand{\body}{\mathcal{B}}
\newcommand{\pot}{\mathcal{V}}
\newcommand{\poteam}{\mathcal{V}^{\rm EAM}}
\newcommand{\potfisher}{\widehat{\mathcal{V}}}
\newcommand{\kin}{\mathcal{T}}
\newcommand{\inv}[1]{#1^{-1}}
\newcommand{\murdoch}{DA}
\newcommand{\Qz}{\bm{Q}_{\bm{z}}}
\newcommand{\rhomh}{\tilde{\rho}}
\newcommand{\stressmh}{\tilde{\stress}}
\newcommand{\bmh}{\tilde{\bm{b}}}
\newcommand{\fmh}{\tilde{\bm{f}}}
\newcommand{\vmh}{\tilde{\bm{v}}}
\newcommand{\pmh}{\tilde{\bm{p}}}
\newcommand{\eref}[1]{(\ref{#1})}
\newcommand{\sref}[1]{Section~\ref{#1}}
\newcommand{\fref}[1]{Fig.~\ref{#1}}
\newcommand{\real}[1]{\mathbb{R}^#1}
\newcommand{\tbm}[1]{\tilde{\bm{#1}}}
\newcommand{\ol}[1]{\overline{#1}}
\newcommand{\T}{{\rm T}}
\newcommand{\kb}{\ensuremath{k_{\rm B}}}
\newcommand{\ang}{\text{\AA}}
\journalname{J Elast}
\begin{document}

\title{A unified interpretation of stress in molecular systems\thanks{This work
was partly supported through NSF (DMS-0757355). This article has drawn heavily
upon material from Ellad~Tadmor and Ronald~Miller, {\em Modeling Materials:
Continuum, Atomistic and Multiscale Techniques}, \copyright 2010 Ellad~Tadmor
and Ronald Miller, forthcoming Cambridge University Press, reproduced with
permission. The final publication is available at www.springerlink.com. \\
}
}
\dedication{The authors would like to dedicate this article to Jack~Irving, who passed away in 2008 at the age of 87. Irving, while a graduate student on leave from Princeton, worked with Prof. John Kirkwood at Caltech on the fundamental non-equilibrium statistical mechanics theory which serves as the basis for the present article.}

\author{Nikhil~Chandra~Admal \and
        E.~B.~Tadmor
}


\institute{Nikhil~Chandra~Admal \at
              Department of Aerospace Engineering and Mechanics, The University of Minnesota.\\
              \email{admal002@umn.edu}           
           \and
           Ellad~B.~Tadmor \at
              Department of Aerospace Engineering and Mechanics, The University of Minnesota.\\
              \email{tadmor@aem.umn.edu}
}

\date{Received: 11 January 2010/ Published online: 27 May 2010}

\maketitle
\begin{abstract}
The microscopic definition for the Cauchy stress tensor has been examined in the past from many different perspectives. This has led to different expressions for the stress tensor and consequently the ``correct'' definition has been a subject of debate and controversy.  In this work, a unified framework is set up in which all existing definitions can be derived, thus establishing the connections between them. The framework is based on the non-equilibrium statistical mechanics procedure introduced by Irving, Kirkwood and Noll, followed by spatial averaging. The Irving--Kirkwood--Noll procedure is extended to multi-body potentials with continuously differentiable extensions and generalized to \emph{non-straight bonds}, which may be important for particles with internal structure. Connections between this approach and the direct spatial averaging approach of Murdoch and Hardy are discussed and the Murdoch--Hardy procedure is systematized.  Possible sources of non-uniqueness of the stress tensor, resulting separately from both procedures, are identified and addressed. Numerical experiments using molecular dynamics and lattice statics are conducted to examine the behavior of the resulting stress definitions including their convergence with the spatial averaging domain size and their symmetry properties.
\keywords{Stress \and Microscopic Interpretation \and Statistical Mechanics \and Irving--Kirkwood--Noll procedure \and Spatial averaging}
\PACS{74A25}
\end{abstract}

\section{Introduction}
\label{ch:intro}
Continuum mechanics provides an efficient theoretical framework for modeling materials science phenomena.  To characterize the behavior of materials, \emph{constitutive relations} serve as an input to the continuum theory. These constitutive models have functional forms which must be consistent with material frame-indifference and the laws of thermodynamics and include parameters that are fitted to reproduce experimental observations.  With the advent of modern computing power, atomistic simulations through ``numerical experiments'' offer the potential for studying different materials and arriving at their constitutive laws from first principles. This could make it possible to design new materials and to improve the properties of existing materials in a systematic fashion. To use the data obtained from an atomistic simulation to build a constitutive law, which is framed in the language of continuum mechanics, it is necessary to understand the connection between continuum fields and the underlying microscopic dynamics.  

Another arena where the connection between continuum and atomistic concepts is important is the field of \emph{multiscale modeling}. This discipline involves the development of computational tools for studying problems where two or more length and/or time scales play a major role in determining macroscopic behavior. A prototypical example is fracture mechanics where the behavior of a crack is controlled by atomic-scale phenomena at the crack-tip, while at the same time long-range elastic stress fields are set up in the body.  Many advances have been made in the area of multiscale modeling in recent years. Some common atomistic/continuum coupling methods are quasicontinuum \cite{tadmor1996,shenoy1999}, coupling of lengthscales \cite{rudd2000}, cluster quasicontinuum \cite{knap2001}, bridging domain \cite{xiao2004}, coupled atomistics and discrete dislocations \cite{shilkrot2004}, and heterogeneous multiscale methods \cite{e2007}, to name just a few. Refer to \cite{tadmor2009} for a comparison of some prominent atomistic/continuum coupling multiscale methods. In a multiscale method, a key issue involves the transfer of information between the discrete model and the continuum model. It is therefore of practical interest to understand how to construct definitions of continuum fields for an atomistic system, to ensure a smooth transfer of information between the discrete and continuum domains.

In this paper, we focus on just one aspect of the continuum-atomistic connection, namely the interpretation of the (Cauchy) stress tensor in a discrete system. This question has been explored from many different perspectives for nearly two hundred years and this has led to various definitions that do not appear to be consistent with each other. As a result, the ``correct'' definition for the stress tensor has been a subject of great debate and controversy.  We begin with a brief historical survey. 

\subsubsection*{A brief history of microscopic definitions for the stress tensor}
Interest in microscopic definitions for the stress tensor dates back at least to Cauchy in the 1820s \cite{Cauchy1828a,Cauchy1828b} with his aim to define stress in a crystalline solid. Cauchy's original definition emerges from the intuitive idea of identifying stress with the force per unit area carried by the bonds that cross a given surface. A comprehensive derivation of Cauchy's approach is given in Note B of Love's classic book on the theory of elasticity \cite{love}. Since this approach is tied to the particular surface being considered, it actually constitutes a definition for the {\em traction} (or {\em stress vector}) and not for the stress tensor.  The first definition of stress as a tensorial quantity follows from the works of Clausius \cite{clausius1870} and Maxwell \cite{maxwell1870,maxwell1874} in the form of \emph{virial theorem}. Clausius states the virial theorem as
\begin{quotation}
\emph{The mean vis viva of a system is equal to its virial.}
\end{quotation}
By ``vis viva'' (literally ``living force''), Clausius means kinetic energy, while the term ``virial'' comes from the Latin ``vis'' (pl. ``vires'') meaning force.  The virial theorem leads to a definition for pressure in a gas. Maxwell \cite{maxwell1870,maxwell1874} extended Clausius' work and showed the existence of a tensorial version of the virial theorem (see Appendix \ref{ch:virial}). The {\em virial stress} resulting from the virial theorem is widely used even today in many atomistic simulations due to its simple form and ease of computation.  Unlike Cauchy's original definition for stress, the virial stress includes a contribution due to the kinetic energy of the particles. This discrepancy was addressed by Tsai \cite{tsai1979}, who extended the definition given by Cauchy to finite temperature by taking into consideration the momentum flux passing through the surface. Let us refer to this stress vector as the \emph{Tsai traction}. 

An alternative approach for defining the stress tensor was pioneered in the landmark paper of Irving and Kirkwood \cite{ik1950}. Irving and Kirkwood derived the equations of hydrodynamics from the principles of non-equilibrium classical statistical mechanics and in the process established a pointwise definition for various continuum fields including the stress tensor. Although their work was indeed noteworthy, the stress tensor obtained involved a series expansion of the Dirac delta distribution which is not mathematically rigorous. Continuing their work, Noll \cite{noll1955} proved two lemmas, which allowed him to avoid the use of the Dirac delta distribution, and thus arrive at a closed-form expression for the stress tensor which does not involve a series expansion. We refer to the procedure introduced by Irving and Kirkwood and extended by Noll as the \emph{Irving--Kirkwood--Noll procedure}. Schofield and Henderson \cite{schofield1982} highlighted the non-uniqueness present in the stress tensor derived by Irving and Kirkwood, and pointed out that it could result in a non-symmetric stress tensor. There have been several attempts to improve on the Irving and Kirkwood procedure. In particular, Lutsko \cite{lutsko1988} reformulated this procedure in Fourier space.  An error in Lutsko's derivation was corrected by Cormier et al. \cite{cormier2001}. 

Due to the stochastic nature of the Irving and Kirkwood stress, many difficulties arise when one tries to use their expression in atomistic simulations. To avoid these difficulties, Hardy and co-workers \cite{hardy1982,hardy2002} and independently Murdoch \cite{murdoch1982,murdoch1993,murdoch1994,murdoch2003,murdoch2007} developed a new approach that bypasses the mathematical complexity of the Irving and Kirkwood procedure. This is done by defining continuum fields as direct spatial averages of the discrete equations of motion using weighting functions with compact support. In particular, this approach leads to the so-called \emph{Hardy stress} \cite{hardy1982} often used in molecular dynamics simulations. Murdoch in \cite{murdoch2007} provides an excellent description of the spatial averaging approaches currently being used and discusses the non-uniqueness of the stress tensor resulting from the spatial averaging procedure. We refer to the direct spatial averaging approach as the \emph{Murdoch--Hardy procedure}.

Another approach, which leads to a stress tensor very similar to that obtained by Irving and Kirkwood is the reformulation of elasticity theory using peridynamics \cite{silling2000}. Lehoucq and Silling \cite{lehoucq2008} have recently shown that Noll's solution is a minimum solution in a variational sense. Morante et al. \cite{morante2006} proposed a new approach for defining the stress tensor using the invariance of partition function under infinitesimal canonical point transformations. However, their approach is limited to equilibrium statistical mechanics and involves taking derivatives of delta distributions.

We can summarize the ``state of the art'' for the microscopic definition of the stress tensor as follows.  There are currently at least three definitions for the stress tensor which are commonly used in atomistic simulations: the virial stress, the Tsai traction, and the Hardy stress \cite{zim2004}. The importance of the Irving and Kirkwood formulation is recognized, however, it is not normally used in practice and its connection with the other stress definitions is not commonly understood. The difference between \emph{pointwise} stress measures and temporal and/or spatially-averaged quantities is often not fully appreciated. The result is that the connection between the Cauchy stress tensor defined in continuum mechanics and its analogue, defined for a discrete system, remains controversial and continues to be a highly-debated problem.

\subsubsection*{A unified framework for the microscopic definition for the stress tensor}
In this paper, a unified framework based on the Irving--Kirkwood--Noll procedure is established which \emph{leads to all of the major stress definitions} discussed above and identifies additional possible definitions. Since all of the definitions are obtained from a common framework the connections between them can be explored and analyzed and the uniqueness of the stress tensor can be established. An overview of the approach and the organization of the paper are described below.

Before turning to the general framework, we begin in \sref{ch:canonical} with a derivation of the virial stress tensor within the framework of equilibrium statistical mechanics using the technique of canonical transformations.  Although this derivation is quite different from the Irving--Kirkwood--Noll procedure, it provides insight into how the geometric ideas of mechanics can be used to derive the stress tensor. It also provides a limit to which the general non-equilibrium stress tensor must converge under equilibrium conditions in the thermodynamic limit. This is used later to establish the uniqueness of the stress tensor obtained from our general unified framework.

Next, we turn to the construction of the new unified framework. In \sref{ch:phase}, we extend the Irving--Kirkwood--Noll procedure \cite{ik1950,noll1955}, originally derived for pair potential interactions, to multi-body potentials. Due to the invariance of the potential energy function with respect to the Euclidean group, it can be shown that any multi-body potential can be expressed as a function of distances between particles. When expressed in this form, we note that for a system of more than 4 particles, this function is only defined on a manifold since the $N(N-1)/2$ distances between $N$ particles in $\real{3}$ are not independent for $N\ge5$. To apply the Irving--Kirkwood--Noll procedure to multi-body potentials, we recognize that the potential energy function must be {\em extended} from its manifold to a higher-dimensional Euclidean space as a continuously differentiable function.  We show that if such an extension exists, then an infinite number of equivalent extensions can be constructed using \emph{Cayley-Menger determinants}, which describe the constraints that the distances between particles embedded in $\real{3}$ must satisfy. Then for multi-body potentials that possess continuously differentiable extensions (which is the case for most practical interatomic potentials), we establish the key result that due to the balance of linear and angular momentum, \emph{the force on a particle in a discrete system can always be decomposed as a sum of central forces between particles}, i.e., forces that are parallel to the lines connecting the particles. In other words, the \emph{strong law of action and reaction} is always satisfied for such multi-body potentials.  We show, that although the net force on a particle calculated using \emph{any} extension is the same, its decomposition into central forces is generally different for different extensions. Using this result we show that the pointwise stress tensor resulting from the Irving--Kirkwood--Noll procedure is non-unique and symmetric. We also show, that a generalization of Noll's lemmas \cite{noll1955} to \emph{non-straight bonds} gives a non-symmetric stress tensor that may be important for particles with internal structure, such as liquid crystals.

The {\em macroscopic} stress tensor corresponding to the pointwise stress tensor described above is obtained in \sref{ch:spatial} through a procedure of spatial averaging.  The connection between this stress and the stress tensors obtained via the direct spatial averaging procedure introduced by Murdoch \cite{murdoch1982,murdoch1993,murdoch1994,murdoch2003,murdoch2007} and Hardy \cite{hardy1982} is explored and in the process the Murdoch--Hardy procedure is systematized and generalized to multi-body potentials using the results of \sref{ch:phase}.  The non-uniqueness of the stress tensor, inherent in the Murdoch--Hardy procedure is studied and a general class of possible definitions under this procedure are identified. The connection between the non-uniqueness in the Murdoch--Hardy procedure and the non-uniqueness mentioned in \sref{ch:phase} is addressed. 

In \sref{ch:compare}, various stress definitions including the Hardy stress, the Tsai traction and the virial stress are shown to be special cases of the macroscopic stress tensor derived from the extended Irving--Kirkwood--Noll procedure in \sref{ch:spatial}. The original definitions for these measures are generalized in this manner to multi-body potentials. The existence of different extensions for the potential energy function, which led to non-uniqueness of the pointwise stress tensor discussed in \sref{ch:phase}, also result in the non-uniqueness of these definitions. However it is shown that the difference in the macroscopic stress tensor resulting from this non-uniqueness tends to zero in the \emph{thermodynamic limit}\footnote{\label{foot:tdlimit} The thermodynamic limit is the state obtained as the number of particles, $N$, and the volume, $V$, of the system tend to infinity in such a way that the ratio $N/V$ is constant.}. Another source of non-uniqueness explored in this section is that given any definition for the stress tensor, a new definition, which also satisfies the balance of linear momentum, can be obtained by adding to it an arbitrary tensor field with zero divergence.  It is shown that in the thermodynamic limit the macroscopic stress tensor obtained in \sref{ch:spatial} converges to the virial stress derived in \sref{ch:canonical}.

To address practical aspects of the different definitions obtained within the unified framework, \sref{ch:experiment} describes several ``numerical experiments'' involving molecular dynamics and lattice statics. These simulations are designed to examine the behavior of these stress definitions, including their convergence with averaging domain size and their symmetry properties. Our conclusions and directions for future research are presented in \sref{ch:conclusions}.

\subsubsection*{Notation}
In this paper, vectors are denoted by lower case letters in bold font and tensors of higher order are denoted by capital letters in bold font. The tensor product of two vectors is denoted by the symbol ``$\otimes$'' and the inner product of two vectors is denoted by a dot ``$\cdot$''. The inner product of two second-order tensors is denoted by ``:''. A second-order tensor operating on a vector is denoted by juxtaposition, e.g., $\bm{T}\bm{v}$.  The gradient of a vector field, $\bm{v}(\bm{x})$, is denoted by $\nabla_{\bm{x}} \bm{v}(\bm{x})$, which in indicial notation is given by
$[ \nabla_{\bm{x}} \bm{v} ]_{ij} = \partial \bm{v}_i/\partial \bm{x}_j$.
The divergence of a tensor field, $\bm{T}(\bm{x})$, is denoted by $\divr_{\bm{x}} \bm{T}(\bm{x})$. The divergence of a vector field is defined as the trace of its gradient. The divergence of a second-order tensor field in indicial notation (with Einstein's summation convention) is given by
$[ \divr_{\bm{x}} \bm{T} ]_i = \partial \bm{T}_{ij}/\partial \bm{x}_j $. The notation described above is followed unless otherwise explicitly stated.

\section{Stress in an equilibrium system}
\label{ch:canonical}
In this section, we obtain expressions for the Cauchy stress in an equilibrium system using the technique of canonical transformations. The basic philosophy behind canonical transformation is explained in the next section.

\subsection{Canonical transformations}
\label{sec:can_trans}
Consider a system consisting of $N$ point masses whose behavior is governed by classical mechanics. Let $\bm{q}_\alpha(t)$ and $\bm{p}_\alpha(t)$ ($\alpha = 1,2,\dots,N)$ denote the generalized coordinates and momenta of the system.\footnote{In a general theory of canonical transformations, $\bm{q}_\alpha$ and $\bm{p}_\alpha$ need not denote the actual position and momentum of particle $\alpha$.} For brevity, we sometimes use $\bm{q}(t)$ and $\bm{p}(t)$ to denote the vectors $(\bm{q}_1(t),\bm{q}_2(t),\dots,\bm{q}_N(t))$ and $(\bm{p}_1(t), \bm{p}_2(t),\dots, \bm{p}_N(t))$, respectively. The time evolution of the system can be studied through three well-known approaches, referred to as the {\em Newtonian formulation}, the {\em Lagrangian formulation}, and the {\em Hamiltonian formulation}. The first approach is used in molecular dynamics simulations, while the latter two approaches are more elegant and can sometimes be used to obtain useful information from systems in the absence of closed-form solutions.

In the Lagrangian formulation, a system is characterized by the vector $\bm{q}(t)$ and a Lagrangian function $\lag$, given by
\begin{equation}
\lag(\bm{q},\dot{\bm{q}};t) = \kin(\dot{\bm{q}}) - \pot(\bm{q}),
\end{equation}
where $\kin$ is the kinetic energy of the system, $\pot$ is the potential energy of the system, and $\dot{\bm{q}}(t)$ represents the time derivative of $\bm{q}(t)$. It is useful to think of $\bm{q}$ as a point in a $3N$-dimensional {\em configuration space}.  The time evolution of $\bm{q}(t)$ in configuration space is described by a variational principle called \emph{Hamilton's principle}. Hamilton's principle states that the time evolution of $\bm{q}(t)$ corresponds to the extremum of the action integral defined as a functional of $\bm{q}$ by
\begin{equation}
\mathcal{A}[\bm{q}] = \int_{t_1}^{t_2} \lag(\bm{q},\dot{\bm{q}};t) \, dt,
\label{eqn:action}
\end{equation}
where $t_1$, $t_2$, $\bm{q}(t_1)$ and $\bm{q}(t_2)$ are held fixed with respect to the class of variations being considered \cite[Section V.1]{lanczos}. In mathematical terms, we require that
\begin{equation}
\delta \mcal{A} = 0,
\label{eqn:var_lag}
\end{equation}
while keeping the ends fixed as described above. The Euler--Lagrange equation arising from \eref{eqn:var_lag} is
\begin{equation}
\frac{d}{dt} \left (\frac{\partial \lag}{\partial \bmd{q}_\alpha} \right) - \frac{\partial \lag}{\partial \bm{q}_\alpha} = \bm{0}.
\label{eqn:lagrange}
\end{equation}
The Lagrangian formulation is commonly used as a calculation tool in solving simple problems. 

Next, we note that the Lagrangian is the Legendre transform of the Hamiltonian $\ham$, \cite[Section VI.2]{lanczos}, 
\begin{equation}
\lag(\bm{q},\dot{\bm{q}}; t) 
= \sup_{\bm{p}} [ \bm{p} \cdot \bmd{q} - \ham(\bm{p},\bm{q};t) ].
\end{equation}
The Hamiltonian is the total energy of the system. Using the Hamiltonian, equation~\eref{eqn:var_lag} can be rewritten as 
\begin{equation}
\delta \int_{t_1}^{t_2} \left [ \bm{p} \cdot \bmd{q} - \ham(\bm{p},\bm{q};t) \right ] \, dt = 0.
\label{eqn:var_hamilton}
\end{equation}
Note that in \eref{eqn:var_lag}, the variation is only with respect to $\bm{q}$, whereas in \eref{eqn:var_hamilton}, the functional depends on the functions $\bm{q}$ and $\bm{p}$, and variations are taken with respect to both $\bm{q}$ and $\bm{p}$ independently. In both cases, $t_1$, $t_2$, $\bm{q}(t_1)$ and $\bm{q}(t_2)$ are held fixed. The variational principle given in \eref{eqn:var_hamilton} is commonly referred as the \emph{modified Hamilton's principle} \cite{goldstein} or simply as the ``Hamiltonian formulation''.  The advantage of the Hamiltonian formulation lies not in its use as a calculation tool, but rather in the deeper insight it affords into the formal structure of mechanics. The Euler--Lagrange equations associated with \eref{eqn:var_hamilton} are
\begin{align}
\bmd{q}_\alpha &= \nabla_{\bm{p}_\alpha} \ham,\label{eqn:ham1}\\
\bmd{p}_\alpha &= -\nabla_{\bm{q}_\alpha} \ham,\label{eqn:ham2}
\end{align}
commonly called Hamilton's equations. The above equations are also referred to as the \emph{canonical equations of motion}\footnote{The term ``canonical'' in this context has nothing to do with the canonical ensemble of statistical mechanics.  The terminology was introduced by Jacobi to indicate that Hamilton's equations constitute the simplest form of the equations of motion.}. 

It is important to note that the Hamiltonian formulation is more general than the Lagrangian formulation, since it accords the coordinates and momenta independent status, thus providing the analyst with far greater freedom in selecting generalized coordinates.  We now think of $(\bm{q},\bm{p})$ as a point in a $6N$-dimensional \emph{phase space}, as opposed to the $3N$-dimensional configuration space of the Lagrangian formulation.  The choice of $\bm{q}$ and $\bm{p}$ is not arbitrary, however, since the selected variables must satisfy the canonical equations of motion. For this reason $\bm{q}$ and $\bm{p}$ are called \emph{canonical variables}. 

The requirement that the generalized coordinates and momenta must be canonical means that new sets of generalized coordinates can be derived from a given set through a special kind of transformation defined below.

\newtheorem{definitions}{Definition}
\begin{definitions}
Any transformation of generalized coordinates that preserves the canonical form of Hamilton's equations is said to be a canonical transformation.\footnote{This definition suffices for our purpose, but a more correct definition can be found in \cite{arnold} using \emph{differential forms}.}
\end{definitions}

The construction of canonical transformations is facilitated by the introduction of \emph{generating functions} as explained below.

\subsubsection*{Generating functions}
Consider two sets of canonical variables $(\bm{Q},\bm{P})$ and $(\bm{q},\bm{p})$, related to each other through a canonical transformation given by
\begin{equation}
\bm{Q} = \bm{Q}(\bm{q},\bm{p},t), \qquad 
\bm{P} = \bm{P}(\bm{q},\bm{p},t).
\label{eqn:trans_pq_PQ}
\end{equation}
Since the variables are canonical, they satisfy the modified Hamilton's principle in \eref{eqn:var_hamilton},
\begin{align}
\delta \int_{t_1}^{t_2} \left [ \bm{p} \cdot \bmd{q} - \ham(\bm{p},\bm{q};t) \right ] \, dt &= 0, \label{eqn:actionvarone} \\
\delta \int_{t_1}^{t_2} \left [ \bm{P} \cdot \bmd{Q} - \hat{\ham}(\bm{P},\bm{Q};t) \right ] \, dt &= 0, \label{eqn:actionvartwo}
\end{align}
where $\hat{\ham}$ is defined later as part of the canonical transformation. The integrands of \eref{eqn:actionvarone} and \eref{eqn:actionvartwo} can therefore only differ by a quantity whose variation after integration is identically zero. A possible solution is
\begin{equation}
\delta \int_{t_1}^{t_2} \left [ \bm{p} \cdot \bmd{q} - \bm{P} \cdot \bmd{Q} -  (\ham - \hat{\ham}) \right ] \, dt = \delta \int_{t_1}^{t_2} \frac{dG}{dt} \, dt,
\end{equation}
where $G$ is an arbitrary scalar function of the canonical variables and time, with continuous second derivatives. The integral on the right is only evaluated at fixed integration bounds and its variation is zero. This is not obvious since there is no restriction on the variation of the momenta at the ends. We assume this to be true to avoid the introduction of differential forms. For a mathematically rigorous argument refer to \cite[Section 45]{arnold}\footnote{Briefly the proof is based on the symmetry present in the geometry of any Hamiltonian system commonly called \emph{symplectic geometry}.}. The difference between the integrands of \eref{eqn:actionvarone} and \eref{eqn:actionvartwo} therefore satisfies,
\begin{equation}
dG - \bm{p} \cdot d\bm{q} + \bm{P} \cdot d\bm{Q} + (\ham - \hat{\ham})dt = 0.
\label{eqn:diff_G_1}
\end{equation}
Now, consider the case where $G = G_1(\bm{q},\bm{Q},t)$. The total differential of $G$ is then
\begin{equation}
dG = \nabla_{\bm{q}} G_1 \cdot d\bm{q} + \nabla_{\bm{Q}} G_1 \cdot d\bm{Q} + \frac{\partial G_1}{\partial t} dt.
\label{eqn:diff_G_3}
\end{equation}
Substituting \eref{eqn:diff_G_3} into \eref{eqn:diff_G_1} gives
\begin{equation}
\left ( \nabla_{\bm{q}} G_1 - \bm{p} \right ) \cdot d\bm{q} + \left ( \nabla_{\bm{Q}} G_1 + \bm{P} \right ) \cdot d\bm{Q} + \left ( \frac{\partial G_1}{\partial t} + \ham - \hat{\ham} \right ) dt = 0.
\end{equation}
Since $\bm{q}$, $\bm{Q}$ and $t$ are independent, the above equation is satisfied provided that
\begin{equation}
\bm{p}_\alpha = \frac{\partial G_1}{\partial \bm{q}_\alpha}, \qquad \bm{P}_\alpha = -\frac{\partial G_1}{\partial \bm{Q}_\alpha}, \qquad \hat{\ham} = \ham + \frac{\partial G_1}{\partial t}.
\label{eqn:tranform_1}
\end{equation}
The above relations define the canonical transformation. Since $G_1$ generates the transformation, it is commonly called the \emph{generating function} of the canonical transformation. Note that if $G_1$ does not depend on time $t$, then $\hat{\ham} = \ham$.

The generating functions of the form, $G = G_1(\bm{q},\bm{Q},t)$, does not generate all possible canonical transformations. In general, there are four primary classes of generating functions where the functional dependence is $(\bm{q},\bm{Q})$, $(\bm{q},\bm{P})$, $(\bm{p},\bm{Q})$ and $(\bm{p},\bm{P})$.\footnote{In addition to these four classes of transformation, it is possible to have a mixed dependence, where each degree of freedom can belong to a different class \cite{goldstein}.} We have already encountered the first class, where $G = G_1(\bm{q},\bm{Q},t)$. The remaining classes can be obtained from the first through Legendre transformations. Consider for example, the following definition,
\begin{equation}
G = G_3(\bm{p},\bm{Q},t) + \bm{q} \cdot \bm{p}.
\end{equation}
The total differential of this expression is
\begin{equation}
dG = \nabla_{\bm{p}} G_3 \cdot d\bm{p} + \nabla_{\bm{Q}} G_3 \cdot d\bm{Q} + \frac{\partial G_3}{\partial t} dt + \bm{q} \cdot d\bm{p} + \bm{p} \cdot d\bm{q}.
\end{equation}
Substituting the above equation into \eref{eqn:diff_G_1} gives
\begin{equation}
\left ( \nabla_{\bm{p}} G_3  + \bm{q} \right ) \cdot d\bm{p} + \left( \nabla_{\bm{Q}} G_3 + \bm{P} \right ) \cdot d\bm{Q} + \left( \frac{\partial G_3}{\partial t} + \ham - \hat{\ham} \right ) dt = 0,
\end{equation}
which leads to the following canonical transformation:
\begin{equation}
\bm{q}_\alpha = -\frac{\partial G_3}{\partial \bm{p}_\alpha}, \qquad \bm{P}_\alpha = -\frac{\partial G_3}{\partial \bm{Q}_\alpha}, \qquad \hat{\ham} = \ham + \frac{\partial G_3}{\partial t}.
\label{eqn:transform_2}
\end{equation}
The other two classes of transformation can be derived in a similar way. 

Finally, an important property of a canonical transformation is that it preserves the volume of any element in phase space, i.e., $d\bm{q}d\bm{p}=d\bm{Q}d\bm{P}$ \cite[page 402]{goldstein}. This means that for a change of variables between $(\bm{p},\bm{q})$ and $(\bm{P},\bm{Q})$, the Jacobian of the transformation is unity.

\subsection{A derivation of the stress tensor under equilibrium conditions}
In this section, we use the method of canonical transformations to derive an expression for the Cauchy stress tensor. In continuum mechanics, a body is identified with a regular region of Euclidean space $\euclid$ referred to as the reference configuration. Any point $\bm{X} \in \body$ is referred to as a material point. The body $\body$ is deformed via a smooth, one-to-one mapping $\bm{\varphi}:\euclid \to \euclid$, which maps each $\bm{X} \in \body$ to a point,
\begin{equation}
\bm{x} = \bm{\varphi}(\bm{X}),
\end{equation}
in the deformed configuration,\footnote{We adopt the continuum mechanics convention of denoting variables in the reference configuration with upper-case letters, and variables in the deformed configuration with lower-case letters.} where we have assumed that the deformation is independent of time. The deformation gradient $\dfm$ is defined as
\begin{equation}
\dfm(\bm{X}) = \nabla_{\bm{X}} \bm{\varphi}.
\end{equation}
The mapping $\varphi$ is assumed to satisfy the condition that $\det \bm{F}$ is strictly positive. The Cauchy stress, $\stress$, is defined by \cite{malvern}
\begin{equation}
\label{eqn:cauchy_psi}
\stress(T,\bm{F}) = \frac{1}{\det \dfm}\nabla_{\dfm} \bm{\psi} \dfm^{\T},
\end{equation}
where $\psi(T,\bm{F})$ is the Helmholtz free energy density function relative to the reference configuration. We are only focusing on a conservative elastic body.

A system in thermodynamic equilibrium\footnote{\label{foot:tdequil} A system is said to be in a state of thermodynamic equilibrium when all of its properties are independent of time and all of its intensive properties are independent of position \cite{weiner}. To stress this, the term {\em uniform state of thermodynamic equilibrium} is sometimes used to describe this state.} can by definition only support a uniform state of deformation. Therefore, our material system is deformed via the 
affine mapping\footnote{To understand this mapping, consider a system of $N$ particles with positions $\bm{q}_\alpha$ ($\alpha=1,2,\dots,N$) confined to a parallelepiped container defined by the three linearly independent vectors $\bm{l}_1$, $\bm{l}_2$ and $\bm{l}_3$, which need not be orthogonal. This selection is done for convenience and does not limit the generality of the derivation as explained below.  The position of a particle in the container can be expressed in terms of scaled coordinates $\xi^\alpha_i\in[0,1]$ as
\begin{equation*}
\bm{q}_\alpha = \xi^\alpha_i\bm{l}_i,
\tag{*}
\label{eq:atomposnorm}
\end{equation*}
where Einstein's summation convention is applied to spatial indices.  The deformation of the container is defined relative to a reference configuration where the cell vectors are $\bm{L}_1$, $\bm{L}_2$ and $\bm{L}_3$. The current and reference cell vectors are related through an affine mapping defined by $\dfm$,
\begin{equation*}
\bm{l}_i = \dfm \bm{L}_i.
\tag{**}
\label{eq:cellaffine}
\end{equation*}
Equations \eref{eq:atomposnorm} and \eref{eq:cellaffine} can be combined to relate the position $\bm{q}_\alpha$ of particle $\alpha$ in the deformed configuration with its position in the reference configuration $\bm{Q}_\alpha$,
\begin{equation*}
\bm{q}_\alpha 
= \xi^\alpha_i(\dfm\bm{L}_i)
= \dfm(\xi^\alpha_i\bm{L}_i) 
= \dfm\bm{Q}_\alpha.
\tag{***}
\label{eq:posaffinemap}
\end{equation*}
This is exactly the mapping defined in \eref{eqn:map}. It provides a direct relationship between the positions of particles in the reference configuration and their position in the deformed configuration. Note that the assumed (parallelepiped) shape of the container does not enter into the relation, $\bm{q}_\alpha=\dfm\bm{Q}_\alpha$, which means that this relation holds for a container of any shape.  

It is important to note that \eref{eq:posaffinemap} does {\em not} impose a kinematic constraint that dictates the position of particle $\alpha$ in the deformed configuration based on its position in the reference configuration (as does the Cauchy--Born rule used in multiscale methods \cite{tadmor2009}). We will see later that this will merely be used as a change of variables, where, instead of integrating over the deformed configuration with the variables $\bm{q}$, the integration is carried out over a given reference configuration using the variables $\bm{Q}$. In both cases the same result is obtained. However, by using the referential variables the dependence on the deformation gradient is made explicit.
}
\begin{equation}
\label{eqn:map}
\bm{q}_\alpha = \dfm \bm{Q}_\alpha.
\end{equation}
It is clear that if we enforce this mapping on our system, with no change in the momentum coordinates, then the newly obtained variables will not satisfy Hamilton's equations. Therefore any change of variables should be governed by a canonical transformation. The following generator function provides the desired canonical transformation
\begin{equation}
G_3(\bm{p},\bm{Q}) = -\sum_{\alpha} \bm{p}_{\alpha} \cdot \dfm \bm{Q}_{\alpha}.
\end{equation}
Substituting this generating function into \eref{eqn:transform_2} gives
\begin{equation}
\label{eqn:transform_3}
\bm{q}_\alpha = -\frac{\partial G_3}{\partial \bm{p}_\alpha} = \dfm \bm{Q}_\alpha, \qquad 
\bm{P}_\alpha = -\frac{\partial G_3}{\partial \bm{Q}_\alpha} = \dfm^{\T}\bm{p}_\alpha, \qquad 
\hat{\ham} = \ham.
\end{equation}
The first relation in the above equation is the desired transformation in \eref{eqn:map}. The second relation is the corresponding transformation that the momentum degrees of freedom must satisfy, so that the new set of coordinates $(\bm{Q},\bm{P})$ are canonical. The third relation refers to the Hamiltonian of the system, which is assumed to be given by
\begin{equation}
\ham(\bm{p},\bm{q}) 
= \sum_{\alpha=1}^N \frac{\bm{p}_\alpha\cdot\bm{p}_\alpha}{2m_\alpha} 
+ \pot(\bm{q}_1,\dots,\bm{q}_N),
\label{eq:ham}
\end{equation}
where $\mcal{V}$ denotes the potential energy of the system. Expressed in terms of the reference variables, \eref{eq:ham} becomes
\begin{align}
\hat{\ham}(\bm{P},\bm{Q},\dfm) &= \ham(\bm{p}(\bm{Q},\bm{P},\dfm), \bm{q}(\bm{Q},\bm{P},\dfm)) \notag \\
&= \sum_{\alpha=1}^{N} \frac{\dfm^{-\T} \bm{P}_\alpha \cdot \dfm^{-\T} \bm{P}_\alpha}{2m_\alpha} + \pot(\dfm \bm{Q}_1,\dots,\dfm \bm{Q}_N).
\end{align}
We now proceed to derive the expression for the Cauchy stress tensor using \eref{eqn:cauchy_psi}. The Helmholtz free energy density for the canonical ensemble is given by \cite{huang}
\begin{equation}
\psi(T,\dfm) = -\frac{k_B T \ln{Z}}{V_0},
\label{eqn:psi}
\end{equation}
where $k_B$ is the Boltzmann's constant, $T$ is the absolute temperature, $V_0$ is the volume of the body in the reference configuration, and $Z(T,\dfm)$ is the \emph{partition function} defined as
\begin{equation}
Z(T,\bm{F}) := \frac{1}{N! h^{3N}} \int_{\Gamma_0} e^{-\hat{\ham}/k_B T} \, d\bm{P} d\bm{Q},
\label{eqn:part_func}
\end{equation}
where $h$ is Planck's constant and $\Gamma_0$ denotes the phase space in the reference configuration. With this definition, the statistical mechanics phase average of a function $A(\bm{P},\bm{Q})$ in the canonical ensemble is
\begin{equation}
\langle A \rangle(T,\dfm) = \int_{\Gamma_0} A(\bm{P},\bm{Q}) W_{\rm{c}}(\bm{P},\bm{Q},T,\dfm) \, d\bm{P} \, d\bm{Q},
\label{eqn:phaseave}
\end{equation}
where
\begin{equation}
\label{eqn:w_canonical}
W_{\rm{c}}(\bm{P},\bm{Q},T,\dfm) = \frac{1}{N! h^{3N} Z} e^{-\hat{\ham}(\bm{P},\bm{Q},\dfm)/k_B T}
\end{equation}
is the canonical distribution function.
Substituting \eref{eqn:psi} and \eref{eqn:part_func} into \eref{eqn:cauchy_psi}, we obtain
\begin{equation}
\label{eqn:cauchy_psi_2}
\stress = -\frac{k_B T}{(\det \dfm) V_0 Z} \nabla_{\dfm} Z \dfm^{\T} = \frac{1}{V} \left \langle \nabla_{\dfm} \hat{\ham} \right \rangle \dfm^{\T},
\end{equation}
where in the last step we have used the identity $V = (\det \dfm) V_0$ and where
\begin{align}
\nabla_{\dfm} Z &= \frac{\partial}{\partial \dfm} \left [ \frac{1}{N! h^{3N}} \int_{\Gamma_0} e^{-\hat{\ham}/k_B T} \, d\bm{Q} d\bm{P} \right ] \notag \\
&= -\frac{1}{k_B T N! h^{3N}} \int_{\Gamma_0} \nabla_{\dfm} \hat{\ham} e^{-\hat{\ham}/k_B T} \, d\bm{Q} d\bm{P}.
\end{align}
Next, we compute $\nabla_{\dfm} \hat{\ham}$. In our derivation, we make use of indicial notation and the Einstein summation rule. To accommodate for the spatial indices, we push $\alpha$ representing the particle to the superscript position. Following this adjustment, we have 
\begin{equation}
\label{eqn:dH_dF}
\frac{\partial \hat{\ham}}{\partial F_{iJ}}  = \frac{\partial}{\partial F_{iJ}} \left [ \sum_{\alpha} \frac{p_{k}^{\alpha} p_{k}^{\alpha}}{2m^\alpha} + \mcal{V}(\bm{q}^1,\dots,\bm{q}^N) \right ] = \sum_\alpha \left [ \frac{1}{m^\alpha}\frac{\partial p_{k}^{\alpha}}{\partial F_{iJ}} p_{k}^{\alpha} + \frac{\partial \pot}{\partial q_{k}^\alpha} \frac{\partial q_{k}^{\alpha}}{\partial F_{iJ}} \right ].
\end{equation}
From \eref{eqn:transform_3}, we have
\begin{align}
\frac{\partial q_{k}^{\alpha}}{\partial F_{iJ}} &= \frac{\partial}{\partial F_{iJ}} (F_{kL} Q_{L}^{\alpha}) = \delta_{ik} Q_{J}^{\alpha}, \label{eqn:dr_dF} \\
\frac{\partial p_{k}^{\alpha}}{\partial F_{iJ}} &= \frac{\partial}{\partial F_{iJ}} (\inv{F}_{Lk} P_{L}^{\alpha}) = -\inv{F_{Jk}} \inv{F_{Li}} P_{L}^{\alpha} = -\inv{F_{Jk}}p^\alpha_{i}, \label{eqn:dp_dF}
\end{align}
where in \eref{eqn:dp_dF}, we have used the following identity:
\begin{equation}
\frac{\partial \inv{F_{Lk}}}{\partial F_{iJ}} = -\inv{F_{Li}} \inv{F_{Jk}}.
\end{equation}
Substituting \eref{eqn:dr_dF} and \eref{eqn:dp_dF} into \eref{eqn:dH_dF}, we have
\begin{equation}
\frac{\partial \hat{\ham}}{\partial F_{iJ}} = -\sum_{\alpha} \left [\frac{p_{i}^{\alpha} \inv{F_{Jk}} p_{k}^\alpha}{m^\alpha} + f^{\rm int}_{\alpha,i} Q_{J}^\alpha  \right ],
\end{equation}
where $\bm{f}^{\rm int}_\alpha = -\partial \mcal{V}/\partial \bm{r}_\alpha$ is the internal force, defined in the deformed configuration, on particle $\alpha$.\footnote{There is a subtle point here. Since we are using the canonical ensemble, the Hamiltonian $\ham$ neglects the interaction term of the system with the surrounding ``heat bath''. This means that the potential energy $\pot$ in $\ham$ only includes the {\em internal} energy of the system and, therefore, its derivative with respect to the position of particle $\alpha$ gives the force $\bm{f}^{\rm int}_{\alpha}$ on this particle due to its interactions with other particles in the system.} In direct notation, we have
\begin{equation}
\nabla_{\dfm} \hat{\ham} = -\sum_\alpha \left [\frac{\bm{p}_\alpha \otimes \inv{\dfm} \bm{p}_\alpha}{m_\alpha} + \bm{f}^{\rm int}_\alpha \otimes \bm{Q}_\alpha \right ].
\end{equation}
Substituting the above equation into \eref{eqn:cauchy_psi_2} and using \eref{eqn:transform_3}, we obtain an expression for the Cauchy stress:
\begin{equation}
\stress(T,\dfm) = -\frac{1}{V} \sum_\alpha \left \langle \frac{\bm{p}_\alpha \otimes \bm{p}_\alpha}{m_\alpha}  + \bm{f}^{\rm int}_\alpha \otimes \bm{q}_\alpha \right \rangle,
\label{eqn:cauchy_canonical}
\end{equation}
where the phase averaging is now being performed with respect to the variables $\bm{p}$ and $\bm{q}$. The switch from phase averaging over $\bm{P}$ and $\bm{Q}$ in \eref{eqn:phaseave} to $\bm{p}$ and $\bm{q}$ above can be made because canonical transformations preserve the volume element in phase space as explained at the end of \sref{sec:can_trans}.

The expression in \eref{eqn:cauchy_canonical} for the Cauchy stress tensor is called the \emph{virial stress}. A simpler derivation of the virial stress, based on time averages, is given in Appendix \ref{ch:virial}. Although, the derivation here made use of the canonical ensemble, it is expected to apply to any ensemble in the thermodynamic limit (see footnote~\ref{foot:tdlimit} on page~\pageref{foot:tdlimit}) where all ensembles are equivalent. Continuum mechanics also tells us that the Cauchy stress tensor is symmetric, something that is not evident from the above equation. Discussion of the symmetry of the stress tensor, which hinges on an important property of $\bm{f}^{\rm int}_\alpha$, is postponed to \sref{ch:compare}.

\medskip
The virial stress defined above corresponds to the macroscopic stress tensor only under conditions of thermodynamic equilibrium in the thermodynamic limit. We now show that this expression for the stress tensor, as well as all other expressions in common use, can be derived as limiting cases of a more general formulation which begins with the Irving--Kirkwood--Noll procedure. We refer to this as the ``unified framework'' for the stress tensor.

\section{Continuum fields as phase averages}
\label{ch:phase}
In this section, we discuss the Irving and Kirkwood procedure \cite{ik1950}, which laid the foundation for the microscopic definition of continuum fields for non-equilibrium systems. This work was later extended by Walter Noll \cite{noll1955}\footnote{An English translation of this article appears in the current issue of the {\em Journal of Elasticity}.}, who showed how closed-form analytical solutions can be obtained for the definition of certain continuum fields, which otherwise involved a non-rigorous\footnote{The derivation is non-rigorous in the sense that expressing the stress tensor as a series expansion is only possible when the probability density function, which is used in the derivation, is an analytic function of the spatial variables \cite{noll1955}.} series expansion of the Dirac delta distribution in the original procedure. We refer to the procedure proposed by Noll in \cite{noll1955} as the \emph{Irving--Kirkwood--Noll procedure}.  The derivation presented in this section largely follows that of Noll \cite{noll1955}, but extends it to more general atomistic models. 

Consider a system $\mathcal{M}$ modeled as a collection of $N$ point masses/particles, each particle referred to as $\alpha$ $(\alpha=1,2,\ldots,N)$. We use the terms ``particle'' and ``atom'' interchangeably. The position, mass and velocity of particle $\alpha$ are denoted by $\bm{x}_\alpha$, $m_\alpha$ and $\bm{v}_\alpha$, respectively. The complete microscopic state of the system  is known, at any instant of time, from the knowledge of position and velocity of each particle in $\real{3}$. Hence, the state of the system  at time $t$, may be represented by a point $\bm{\Xi}(t)$ in a $6N$-dimensional phase space\footnote{The usual convention is to represent the phase space via positions and momenta of the particles. For convenience, in this section, we represent the phase space via positions and velocities of the particles.}. Let $\Gamma$ denote the phase space. Therefore any point $\bm{\Xi}(t) \in \Gamma$, can be represented as,
\begin{align}
\bm{\Xi}(t) &= (\bm{x}_1(t),\bm{x}_2(t),\dots,\bm{x}_N(t);\bm{v}_1(t),\bm{v}_2(t),\dots,\bm{v}_N(t)) \notag \\
&=: (\bm{x}(t);\bm{v}(t)).  \label{eqn:define_X}
\end{align}
In reality, the microscopic state of the system is never known to us, and the only observables identified are the macroscopic fields as defined in continuum mechanics. We identify the continuum fields with macroscopic observables obtained in a two-step process: (1) a pointwise field is obtained as a statistical mechanics phase average; (2) a macroscopic field is obtained as a spatial average over the pointwise field.  The phase averaging in step (1) is done with respect to a probability density function $W:\Gamma \times \mathbb{R}^+ \rightarrow \mathbb{R}^+$ of class $C^1$ defined on all phase space for all $t$ ($W_{\rm{c}}$, defined in \eref{eqn:w_canonical}, is an example of a stationary (time-independent) probability density function defined for the canonical ensemble). The explicit dependence of $W$ on time $t$, indicates that our system need not be in thermodynamic equilibrium.

As discussed in \sref{ch:canonical}, the evolution of $\bm{\Xi}(t)$ in the phase space is given by the following set of $2N$ first-order equations (Hamilton's equations of \eref{eqn:ham1}--\eref{eqn:ham2}):
\begin{subequations} \label{eqn:hamilton}
\begin{align}
\dot{\bm{p}} &= -\nabla_{\bm{x}}\ham, \\
\dot{\bm{x}} &= \nabla_{\bm{p}}\ham, 
\end{align}
\end{subequations}
where $\bm{p} := (\bm{p}_1,\bm{p}_2,\dots,\bm{p}_N)$, $\bm{p}_\alpha$ denotes the momentum of each particle, and $\ham(\bm{p},\bm{x})$ is the Hamiltonian of the system.

The basic idea behind the original Irving and Kirkwood procedure is to prescribe/derive microscopic definitions for continuum fields, such that they are consistent with the balance laws of mass, momentum and energy. To arrive at these definitions, we repeatedly use the following theorem, commonly referred to as \emph{Liouville's theorem}, which relates to the conservation of volume in phase space.  

As a  system evolves, the phase space $\Gamma$ is mapped into itself at every instant of time, and this mapping is governed by \eref{eqn:hamilton}. If $g_t$ denotes this mapping, then Liouville's theorem essentially says that for any subset $U$ of $\Gamma$, the volume of $U$ remains invariant under the mapping $g_t$. This can be be formally stated as,
\spnewtheorem*{liouville}{Liouville's Theorem}{\bfseries}{\itshape}
\begin{liouville} 
For any $U \subseteq \Gamma$, volume is preserved under the one-parameter group of transformations of phase space, $g_t:U \rightarrow \Gamma$, given by the mapping 
\[
(\bm{x}(0),\bm{p}(0)) \mapsto (\bm{x}(t),\bm{p}(t)),
\]
where $\bm{x}(t)$ and $\bm{p}(t)$ are solutions of the Hamilton's system of equations {\rm \eref{eqn:hamilton}}, i.e., \\
\begin{equation}
\label{eqn:volume}
\vol(U) = \vol(g_tU).
\end{equation}
\end{liouville}
\begin{proof}
Let $\dot{\overline{\vol(g_tU)}}$ denote the material time derivative of $\vol(g_tU)$ in the sense that $\bm{\Xi}(0)$ is held fixed while performing this differentiation. Then we have,
\[
\dot{\overline{\vol(g_tU)}} = \dot{\overline{\int_{g_tU} d\bm{\Xi}(t)}}
= \int_{U} (\dot{\overline{\det \bm{F}}}) d \bm{\Xi}_0,
\]
where $\bm{F}(\bm{\Xi}_0,t):= \nabla_{\bm{\Xi}_0} \bm{\Xi}(\bm{\Xi}_0,t)$, $\bm{\Xi}(\bm{\Xi}_0,t) = g_t(\bm{\Xi_0})$ and $\bm{\Xi}_0 = \bm{\Xi}(0)$. Using the fact that
\[
\dot{\overline{\det \bm{F}}} = (\det \bm{F}) \tr(\dot{\bm{F}}\bm{F}^{-1}),
\]
we obtain
\begin{equation}
\dot{\overline{\vol(g_tU)}} = \int_{U} (\det \bm{F}) \tr(\dot{\bm{F}}\bm{F}^{-1}) d\bm{\Xi}_0.
\label{eqn:liouville_proof_1}
\end{equation}
Let 
\begin{equation}
\dot{\bm{\Xi}}(\bm{\Xi}) := \left. \frac{d\bm{\Xi}}{dt}  \right |_{\bm{\Xi}_0 = g_t^{-1}(\bm{\Xi})}.
\end{equation}
From chain rule, we have
\begin{equation}
\nabla \dot{\bm{\Xi}} = \left. \frac{d (\nabla \bm{\Xi})}{dt} \right |_{\bm{\Xi}_0 = g_t^{-1}(\bm{\Xi})} \nabla_{\bm{\Xi}} \bm{\Xi}_0 =  \dot{\bm{F}}\bm{F}^{-1}.
\end{equation}
Therefore $\divr \, \dot{\bm{\Xi}} = \tr(\dot{\bm{F}}\bm{F}^{-1})$. Equation \eref{eqn:liouville_proof_1} can now be rewritten as
\begin{equation}
\dot{\overline{\vol(g_tU)}} = \int_{U} (\det \bm{F}) (\divr \, \dot{\bm{\Xi}})\mid_{\bm{\Xi}(t) = g_t(\bm{\Xi}_o)} d\bm{\Xi}_0.
\end{equation}
But from \eref{eqn:define_X} and \eref{eqn:hamilton} we also have, 
\[
\divr \dot{\bm{\Xi}} = \divr_{\bm{x}} \dot{\bm{x}} + \divr_{\bm{p}} \dot{\bm{p}} = \divr_{\bm{x}} (\nabla_{\bm{p}} \ham) - \divr_{\bm{p}}(\nabla_{\bm{x}} \ham) = 0.
\]
Therefore $\dot{\overline{\vol(g_tU)}} = 0$ for arbitrary $t$. Thus \eref{eqn:volume} holds.\qed   
\end{proof}

Let $W(\bm{\Xi};t)$ denote the probability density function defined on $g_t(\Gamma)$. Hence, we have
\begin{equation}
\int_{g_tU} W(\bm{\Xi}(t);t) d \bm{\Xi}(t) = \int_{U} W(\bm{\Xi}_0;0) (\det \dfm) d \bm{\Xi}_0.
\end{equation}
As a consequence of Liouville's theorem, we have, $\det \dfm =1$. Therefore
\begin{equation}
\label{eqn:corr}
\frac{d}{dt}\int_{g_tU} W(\bm{\Xi}(t);t) d \bm{\Xi}(t) = 0.
\end{equation}
Since \eref{eqn:corr} holds for all $U \subseteq \Gamma$, we have $\dot{W}(\bm{\Xi(t)};t) = 0$. Hence, the time evolution of the probability density function is given by
\begin{equation}
\label{eqn:liouville}
\frac{\partial W}{\partial t} + \sum_{\alpha=1}^{N} \left [ \bm{v}_\alpha \cdot \nabla_{\bm{x}_\alpha}W + \dot{\bm{v}}_\alpha \cdot \nabla_{\bm{v}_\alpha}W \right ] = 0.
\end{equation}
The above equation can be rewritten as
\begin{equation}
\frac{\partial W}{\partial t} + \sum_{\alpha=1}^{N} \left [ \bm{v}_\alpha \cdot \nabla_{\bm{x}_\alpha}W - \frac{\nabla_{\bm{x}_\alpha} \pot }{m_\alpha} \cdot \nabla_{\bm{v}_\alpha}W \right ] = 0,
\label{eqn:useful_liouville}
\end{equation}
where, as before, $\pot(\bm{x}_1,\bm{x}_2,\dots,\bm{x}_N)$ denotes the potential energy of the system. Equation~\eref{eqn:useful_liouville} is called \emph{Liouville's equation}.

\subsection{Phase averaging}
\label{sec:phase}
Under the Irving--Kirkwood--Noll procedure, pointwise fields are defined as phase averages. This phase averaging is expressed via weighted marginal densities. For example, the pointwise mass density field is defined as
\begin{equation}
\rho(\bm{x},t) := \sum_{\alpha} m_\alpha \int_{\real{{3N}} \times \real{{3N}}} W \delta (\bm{x}_\alpha - \bm{x}) \, d\bm{x} d\bm{v},
\label{eqn:define_density_delta}
\end{equation}
where the integral represents a marginal density defined on $\real{3}$, $\delta$ denotes the Dirac delta distribution, and $\sum_\alpha$ denotes summation from $\alpha=1$ to $N$.  To avoid the Dirac delta distribution and for greater clarity we adopt Noll's notation as originally used in \cite{noll1955}. Hence \eref{eqn:define_density_delta} can be rewritten as
\begin{align}
\rho(\bm{x},t) &= \sum_{\alpha} m_\alpha \int W \, d\bm{x}_1 \dots d\bm{x}_{\alpha-1} d\bm{x}_{\alpha+1}\dots d\bm{x}_N d\bm{v} \notag \\
&=: \sum_{\alpha} m_\alpha \left \langle W \mid \bm{x}_\alpha = \bm{x} \right \rangle,
\label{eqn:define_density}
\end{align}
where $\left \langle W \mid \bm{x}_\alpha = \bm{x} \right \rangle$ denotes an integral of $W$ over all its arguments except $\bm{x}_\alpha$ and $\bm{x}_\alpha$ is substituted with $\bm{x}$.
Now consider the continuum velocity field. Unlike the definition of pointwise density field, which appears unambiguous, the pointwise velocity field can be defined in different ways. It may seem more natural to define the continuum velocity in an analogous fashion to the density field, i.e.,
\begin{equation}
\bm{v}(\bm{x},t) = \frac{\sum_{\alpha} \left \langle W \bm{v}_\alpha \mid \bm{x}_\alpha = \bm{x} \right \rangle}{\sum_{\alpha} \left \langle W \mid \bm{x}_\alpha = \bm{x} \right \rangle}.
\label{eqn:define_velocity_alt}
\end{equation}
Alternatively, the pointwise velocity field can be defined via the momentum density field, $\bm{p}(\bm{x},t)$, as follows:
\begin{align}
\bm{p}(\bm{x},t) &:= \sum_{\alpha} m_\alpha \left \langle W \bm{v}_\alpha \mid \bm{x}_\alpha = \bm{x} \right \rangle,\label{eqn:define_mom_density}\\ 
\bm{v}(\bm{x},t) &:= \frac{\bm{p}(\bm{x},t)}{\rho(\bm{x},t)}. \label{eqn:define_velocity}
\end{align}
Note that definitions \eref{eqn:define_velocity_alt} and \eref{eqn:define_velocity} are equivalent for a single species material, but are not so in general. The definition given by \eref{eqn:define_velocity} is the one used in practice. There are two reasons for this. First, the definition in \eref{eqn:define_velocity} makes more physical sense since, following spatial averaging, it associates the continuum velocity with the velocity of the center of mass of the system of particles. Second, the definition in \eref{eqn:define_velocity} satisfies the continuity equation as shown in \sref{sec:continuity}, whereas \eref{eqn:define_velocity_alt} does not.\footnote{It would be interesting to explore how the equation of continuity fails for the definition in \eref{eqn:define_velocity_alt} by identifying the regions that act as sinks and sources. This is difficult to do for a general $N$ particle system because the continuity equation quickly becomes unwieldy. Even for the much simpler case of a two-particle system, the answer is not trivial. A quick examination shows that the distribution of sinks and sources depends not only on the ratio of the masses but also on the probability density function $W$.}

\subsection{Regularity assumptions for the probability density function}
It is clear from the definitions in \eref{eqn:define_density}, \eref{eqn:define_mom_density} and \eref{eqn:define_velocity} that the integrals in these equations converge under appropriate decay conditions on $W$. The following two conditions are sufficient for the convergence of all the integrals and the validity of the results in this section \cite{noll1955}:
\begin{enumerate}
\item There exists a $\delta>0$ such that the function
\label{cond_1}
\begin{equation}
W(\bm{\Xi};t) \prod_{\alpha=1}^{N} \vnorm{\bm{x}_\alpha}^{3+\delta} \prod_{\beta=1}^{N} \vnorm{\bm{v}_\beta}^{6+\delta}
\label{eqn:w_decay}
\end{equation}
and its first derivatives are bounded by a constant that only depends on time.
\item $\pot(\bm{x}_1,\bm{x}_2,\cdots \bm{x}_N)$ is a bounded $C^1$ function defined on the phase space, and having bounded first derivatives.\footnote{If any two particles overlap, we would normally expect $\pot \to \infty$. By specifying additional decay conditions for $W$, the case of unbounded $\pot$ can be handled. For simplicity, we assume $\pot$ to be bounded.}
\label{cond_2}
\end{enumerate}
Conditions (\ref{cond_1}) and (\ref{cond_2}) ensure the convergence of all the integrals considered in this section and swapping of integration and differentiation. Furthermore, let $\bm{G}(\bm{\Xi};t)$ be any vector or tensor-valued function of class $C^1$ defined on the phase space for all $t$, and which, for suitable functions $g(t)$ and $h(t)$, satisfies the condition
\begin{equation}
\sup_{\bm{x}_1 \in \real{3},\bm{x}_2 \in \real{3},\cdots,\bm{x}_N \in \real{3}}(\vnorm{\bm{G}},\vnorm{\divr_{\bm{v}_\alpha} \bm{G}},\vnorm{\divr_{\bm{x}_\alpha}\bm{G}}) < g(t) \prod_{\beta=1}^{N}\vnorm{\bm{v}_\beta}^3 +h(t),
\end{equation}
where $\vnorm{\cdot}$ refers to the norm defined through the inner product. Since the space of all tensors has a natural inner product defined as
\begin{equation}
\bm{S} : \bm{T} = \tr(\bm{S}^{\T} \bm{T}),
\end{equation}
we have $\vnorm{\bm{S}} = \sqrt{\bm{S} : \bm{S}}$.
Under these conditions on $\bm{G}(\bm{\Xi};t)$, we have\footnote{If $\bm{G}$ is a second-order tensor or higher, then the dot product indicates tensor operating on a vector. Note that in \eref{eqn:reg}, in the interest of brevity, we are breaking our notation of denoting a second-order tensor operating on a vector by juxtaposition.}
\begin{subequations}
\label{eqn:reg}
\begin{align}
\int_{\real{3}}\bm{G} \cdot \nabla_{\bm{x}_\alpha}W \, d\bm{x}_\alpha &= -\int_{\real{3}} W \divr_{\bm{x}_\alpha} \bm{G} \, d\bm{x}_\alpha,\label{eqn:reg_1} \\ 
\int_{\real{3}}\bm{G} \cdot \nabla_{\bm{v}_\alpha}W \, d\bm{v}_\alpha &= -\int_{\real{3}} W \divr_{\bm{v}_\alpha} \bm{G} \, d\bm{v}_\alpha. \label{eqn:reg_2}
\end{align}
\end{subequations}
The above identities are repeatedly used in deriving the equation of continuity and the equation of motion in the following sections.

\subsection{Equation of continuity}
\label{sec:continuity}
Let us demonstrate that the pointwise fields defined in \sref{sec:phase} satisfy the equation of continuity. The equation of continuity from continuum mechanics is given by \cite{malvern}
\begin{equation}
\label{eqn:continuity}
\frac{\partial \rho}{\partial t} + \divr_{\bm{x}}(\rho \bm{v}) = 0.
\end{equation}
From \eref{eqn:define_density} we have
\[
\frac{\partial \rho} {\partial t} (\bm{x},t) = \sum_{\alpha}m_\alpha \left \langle \left. \frac{\partial W}{\partial t} \right | \bm{x}_\alpha = \bm{x} \right \rangle.
\]
Using Liouville's equation in \eref{eqn:useful_liouville}, we have
\[
\frac{\partial \rho} {\partial t} (\bm{x},t) = \sum_\alpha m_\alpha \left \langle \left. \sum_\beta \left ( -\bm{v}_\beta \cdot \nabla_{\bm{x}_\beta}W + \frac{\nabla_{\bm{x}_\beta} \pot}{m_\beta}\cdot \nabla_{\bm{v}_\beta}W \right ) \right | \bm{x}_\alpha=\bm{x} \right \rangle.
\]
Now, consider the summand on the right-hand side of the above equation for a fixed $\alpha$. From \eref{eqn:reg_2}, it is clear that $\left \langle \left. \frac{\nabla_{\bm{x}_\beta} \pot}{m_\beta}\cdot \nabla_{\bm{v}_\beta}W \right | \bm{x}_\alpha=\bm{x} \right \rangle=0$, for $\beta=1,2,\cdots N$, and from \eref{eqn:reg_1}, we also have $\left \langle \bm{v}_\beta \cdot \nabla_{\bm{x}_\beta}W \mid \bm{x}_\alpha=\bm{x} \right \rangle = 0$, for $\beta \ne \alpha$. Therefore the above equation simplifies to
\[
\frac{\partial \rho} {\partial t} (\bm{x},t) = -\sum_{\alpha} m_\alpha \left \langle \bm{v}_\alpha \cdot \nabla_{\bm{x}_\alpha} W\mid \bm{x}_\alpha =\bm{x} \right \rangle.
\]
Using the identity,
\begin{equation}
\label{eqn:id1}
\divr_{\bm{x}}(a \bm{w}) = \nabla_{\bm{x}} a \cdot \bm{w},
\end{equation}
where $a(\bm{x})$ is any $C^1$ scalar function of $\bm{x}$, and $\bm{w}$ is any vector independent of $\bm{x}$, we obtain
\[
\frac{\partial \rho} {\partial t} (\bm{x},t) = -\sum_{\alpha} m_\alpha \divr_{\bm{x}} \left \langle W \bm{v}_\alpha \mid \bm{x}_\alpha = \bm{x} \right \rangle.
\]
Using \eref{eqn:define_mom_density} and \eref{eqn:define_velocity} for the definition of the pointwise momentum density field, we have
\[
\frac{\partial \rho} {\partial t} (\bm{x},t) + \divr_{\bm{x}} (\rho \bm{v}) = 0,
\]
which is the continuity equation. We have established that the definitions given in \eref{eqn:define_density_delta} and \eref{eqn:define_mom_density} identically satisfy conservation of mass.

\subsection{Equation of Motion}
\label{sec:s_motion}
The equation of motion from continuum mechanics is given by \cite{malvern}
\begin{eqnarray}
\label{eqn:motion}
\frac{\partial(\rho \bm{v})}{\partial t} + \divr_{\bm{x}}(\rho \bm{v} \otimes \bm{v}) = \divr_{\bm{x}}\bm{\sigma} + \bm{b}.
\end{eqnarray}
Here we identify $\bm{\sigma}$ with the pointwise stress tensor.  From \eref{eqn:define_mom_density}, we have
\[
\frac{\partial \bm{p}}{\partial t}(\bm{x},t) = \sum_{\alpha} m_\alpha \left \langle \left. \bm{v}_\alpha \frac{\partial W}{\partial t} \right | \bm{x}_\alpha = \bm{x} \right \rangle. 
\]
Again, using \eref{eqn:useful_liouville} we obtain,
\begin{align}
\frac{\partial \bm{p}}{\partial t} (\bm{x},t) &=  \sum_{\alpha} m_\alpha \left \langle \left. \bm{v}_\alpha \sum_\beta \left (-\nabla_{\bm{x}_\beta} W \cdot \bm{v}_\beta + \frac{\nabla_{\bm{x}_\beta} \pot}{m_\beta} \cdot \nabla_{\bm{v}_\beta} W \right ) \right | \bm{x}_\alpha = \bm{x} \right \rangle \notag \\
&= \sum_\alpha m_\alpha \sum_\beta \left \langle \left. -\left ( \bm{v}_\alpha \otimes \bm{v}_\beta \right ) \nabla_{\bm{x}_\beta} W + \left ( \bm{v}_\alpha \otimes \frac{\nabla_{\bm{x}_\beta} \pot}{m_\beta} \right ) \nabla_{\bm{v}_\beta} W \right |  \bm{x}_\alpha = \bm{x} \right \rangle.
\label{eqn:motion_alt}
\end{align}
Now, consider the summand on the right-hand side of the above equation for fixed $\alpha$ and $\beta$. Using \eref{eqn:reg_1}, we have $\left \langle ( \bm{v}_\alpha \otimes \bm{v}_\beta ) \nabla_{\bm{x}_\beta} W \mid \bm{x}_\alpha = \bm{x} \right \rangle = \bm{0}$, for $\beta \ne \alpha$. From \eref{eqn:reg_2}, we have $\left \langle (\bm{v}_\alpha \otimes \nabla_{\bm{x}_\beta} \pot ) \nabla_{\bm{v}_\beta}W \mid \bm{x}_\alpha = \bm{x} \right \rangle = \bm{0}$, for $\beta \ne \alpha$, and for $\beta = \alpha$, we have
\[
\left \langle (\bm{v}_\alpha \otimes \nabla_{\bm{x}_\alpha} \pot ) \nabla_{\bm{v}_\alpha} W\mid \bm{x}_\alpha = \bm{x} \right \rangle = -\left \langle \nabla_{\bm{x}_\alpha} \pot W \mid \bm{x}_\alpha = \bm{x} \right \rangle,
\]
using the fact that $\divr_{\bm{u}}(\bm{u} \otimes \bm{w}) = \bm{w}$, for any vector $\bm{u}$ and for any vector $\bm{w}$ independent of $\bm{u}$. Therefore \eref{eqn:motion_alt} simplifies to
\begin{equation}
\frac{\partial \bm{p}}{\partial t} (\bm{x},t) = -\sum_{\alpha} m_\alpha \left \langle (\bm{v}_\alpha \otimes \bm{v}_\alpha) \nabla_{\bm{x}_\alpha} W \mid \bm{x}_\alpha = \bm{x} \right \rangle   -  \sum_{\alpha} \left \langle W \nabla_{\bm{x}_\alpha} \pot \mid \bm{x}_\alpha = \bm{x} \right \rangle.  \label{eqn:motion_1}
\end{equation}
Using the identity,
\begin{equation}
\divr_{\bm{x}}(a \bm{T})= \bm{T}\nabla_{\bm{x}}a,
\label{eqn:id2}
\end{equation}
where $a(\bm{x})$ is any $C^1$ scalar function of $\bm{x}$, and $\bm{\bm{T}}$ is any tensor independent of $\bm{x}$, we can rewrite \eref{eqn:motion_1} as
\begin{equation}
\frac{\partial \bm{p}}{\partial t} (\bm{x},t) = -\divr_{\bm{x}} \sum_{\alpha} m_\alpha \left \langle (\bm{v}_\alpha \otimes \bm{v}_\alpha) W \mid \bm{x}_\alpha = \bm{x} \right \rangle   -  \sum_{\alpha} \left \langle W \nabla_{\bm{x}_\alpha} \pot \mid \bm{x}_\alpha = \bm{x} \right \rangle. \label{eqn:motion_2}
\end{equation}
Now, note that the term $ \bm{v}_\alpha \otimes \bm{v}_\alpha $ can be written as \\
\begin{align}
\bm{v}_\alpha \otimes \bm{v}_\alpha &= (\bm{v}_\alpha - \bm{v}) \otimes (\bm{v}_\alpha - \bm{v}) + \bm{v} \otimes \bm{v}_\alpha + \bm{v}_\alpha \otimes \bm{v} - \bm{v} \otimes \bm{v} \notag \\
&= \bm{v}_\alpha^{\rm{rel}} \otimes \bm{v}_\alpha^{\rm{rel}} + \bm{v} \otimes \bm{v}_\alpha + \bm{v}_\alpha \otimes \bm{v} - \bm{v} \otimes \bm{v},  \label{eqn:id3}
\end{align}
where $\bm{v}_\alpha^{\rm{rel}}$ is the velocity of particle $\alpha$ relative to the pointwise velocity field.  Consider the first term on the right-hand side of \eref{eqn:motion_2}. Substituting \eref{eqn:id3} into this expression we have,
\begin{align}
&-\divr_{\bm{x}} \sum_{\alpha} m_\alpha \left \langle (\bm{v}_\alpha \otimes \bm{v}_\alpha) W \mid \bm{x}_\alpha = \bm{x} \right \rangle \notag\\
&= -\sum_{\alpha} m_\alpha \divr_{\bm{x}} \left \langle (\bm{v}_\alpha ^{\rm{rel}} \otimes \bm{v}_\alpha ^{\rm{rel}}) W \mid \bm{x}_\alpha = \bm{x} \right \rangle - \divr_{\bm{x}} \sum_\alpha \big [ \bm{v} \otimes m_\alpha \left \langle \bm{v}_\alpha W \mid \bm{x}_\alpha = \bm{x} \right \rangle  \notag\\
&\qquad + m_\alpha \left \langle \bm{v}_\alpha W \mid \bm{x}_\alpha = \bm{x} \right \rangle \otimes \bm{v} - m_\alpha \left \langle W \mid \bm{x}_\alpha = \bm{x} \right \rangle \bm{v} \otimes \bm{v} \big ] \notag \\ 
&=-\divr_{\bm{x}} \sum_{\alpha} m_\alpha \left \langle (\bm{v}_\alpha^{\rm{rel}} \otimes \bm{v}_\alpha^{\rm{rel}}) W \mid \bm{x}_\alpha = \bm{x} \right \rangle - \divr_{\bm{x}}(\rho \bm{v} \otimes \bm{v}), \label{eqn:motion_3}
\end{align}
where we have used \eref{eqn:define_density}, \eref{eqn:define_mom_density} and \eref{eqn:define_velocity} in the last step.
Substituting \eref{eqn:motion_3} into \eref{eqn:motion_2}, we obtain
\begin{align}
\frac{\partial \bm{p}}{\partial t} (\bm{x},t) + \divr_{\bm{x}}(\rho \bm{v} \otimes \bm{v}) = &-\sum_{\alpha} m_\alpha \divr_{\bm{x}} \left \langle (\bm{v}_\alpha^{\rm{rel}} \otimes \bm{v}_\alpha^{\rm{rel}}) W \mid \bm{x}_\alpha = \bm{x} \right \rangle \notag \\
&- \sum_{\alpha} \left \langle W \nabla_{\bm{x}_\alpha} \pot \mid \bm{x}_\alpha = \bm{x} \right \rangle. \label{eqn:motion_4}
\end{align}
The left-hand sides of \eref{eqn:motion_4} and \eref{eqn:motion} are identical. Therefore, the right-hand sides must also be equal. Hence
\begin{equation}
\divr_{\bm{x}}\bm{\sigma} + \bm{b} = -\sum_{\alpha} m_\alpha \divr_{\bm{x}} \left \langle (\bm{v}_\alpha^{\rm{rel}} \otimes \bm{v}_\alpha ^{\rm{rel}}) W \mid \bm{x}_\alpha = \bm{x} \right \rangle - \sum_{\alpha} \left \langle W \nabla_{\bm{x}_\alpha} \pot \mid \bm{x}_\alpha = \bm{x} \right \rangle.
\label{eqn:motion_5}
\end{equation}
To proceed, we divide the potential energy $\pot(\bm{x}_1,\bm{x}_2,\dots,\bm{x}_N)$ into two parts: 
\begin{enumerate}
\item An \emph{external} part, $\pot_{\rm{ext}}$, associated with long-range interactions such as gravity or electromagnetic fields.
\item An \emph{internal} part, $\pot_{\rm{int}}$, associated with short-range particle interactions. 
\end{enumerate}
It is natural to associate $\pot_{\rm{ext}}$ with the body force field $\bm{b}$ in \eref{eqn:motion_5}. We therefore define $\bm{b}(\bm{x},t)$ as
\begin{equation}
\label{eqn:body}
\bm{b}(\bm{x},t) := - \sum_{\alpha} \left \langle W \nabla_{\bm{x}_\alpha} \pot_{\rm{ext}} \mid \bm{x}_\alpha = \bm{x} \right \rangle.
\end{equation}
Substituting \eref{eqn:body} into \eref{eqn:motion_5}, we have
\begin{equation}
\divr_{\bm{x}} \bm{\sigma} = -\sum_{\alpha} m_\alpha \divr_{\bm{x}} \left \langle (\bm{v}_\alpha ^{\rm{rel}} \otimes \bm{v}_\alpha^{\rm{rel}}) W \mid \bm{x}_\alpha = \bm{x} \right \rangle - \sum_{\alpha} \left \langle W \nabla_{\bm{x}_\alpha} \pot_{\rm{int}}  \mid \bm{x}_\alpha = \bm{x} \right \rangle.
\label{eqn:motion_6}
\end{equation}
From \eref{eqn:motion_6}, we see that the pointwise stress tensor has two contributions:
\begin{equation}
\label{eqn:stress_split}
\bm{\sigma}(\bm{x},t) = \bm{\sigma}_{\rm{k}}(\bm{x},t) + \bm{\sigma}_{\rm{v}}(\bm{x},t),
\end{equation}
where $\bm{\sigma}_{\rm{k}}$ and $\bm{\sigma}_{\rm{v}}$ are, respectively, the \emph{kinetic} and \emph{potential} parts of the pointwise stress.  The kinetic part is given by
\begin{equation}
\bm{\sigma}_{\rm{k}}(\bm{x},t) = -\sum_{\alpha} m_\alpha \left \langle (\bm{v}_\alpha^{\rm{rel}} \otimes \bm{v}_\alpha ^{\rm{rel}}) W \mid \bm{x}_\alpha = \bm{x} \right \rangle.
\label{eqn:stress_kinetic}
\end{equation}
It is evident that the kinetic part of the stress tensor is symmetric. The presence of a kinetic contribution to the stress tensor appears at odds with the continuum definition of stress that is stated solely in terms of the forces acting between different parts of the body. This discrepancy has led to controversy in the past about whether the kinetic term belongs in the stress definition \cite{zhou2003}. The confusion is related to the difference between absolute velocity and relative velocity defined in \eref{eqn:id3} \cite{tsai1979}. The kinetic stress reflects the momentum flux associated with the vibrational kinetic energy portion of the internal energy.

Continuing with \eref{eqn:motion_6}, the potential part of the stress must satisfy the following differential equation:
\begin{equation}
\divr_{\bm{x}} \bm{\sigma}_{\rm{v}}(\bm{x},t) = \sum_{\alpha} \left \langle W \bm{f}^{\rm int}_\alpha \mid \bm{x}_\alpha = \bm{x} \right \rangle,
\label{eqn:stress_force_differential}
\end{equation}
where
\begin{equation}
\label{eqn:def_fi}
\bm{f}^{\rm int}_\alpha := -\nabla_{\bm{x}_\alpha} \pot_{\rm{int}},
\end{equation}
is the force on particle $\alpha$ due to internal interactions.  Equation~\eref{eqn:stress_force_differential} needs to be solved in order to obtain an explicit form for $\bm{\sigma}_{\rm{v}}$. In the original paper of Irving and Kirkwood \cite{ik1950}, this was done by applying a Taylor expansion to the Dirac delta distribution appearing in the right-hand side of the equation. In contrast, Noll showed that a closed-form solution for $\bm{\sigma}_{\rm{v}}$ can be obtained by recasting the right-hand side in a different form and applying a lemma proved in \cite{noll1955}.  We proceed with Noll's approach, except we place no restriction on the nature of the interatomic potential energy $\pot_{\rm{int}}$. The potential energy considered in \cite{ik1950} and \cite{noll1955} is limited to pair potentials. 

\subsubsection*{General interatomic potentials}
In general, the internal part of the potential energy, also called the \emph{interatomic potential energy}, depends on the positions of all particles in the system:
\begin{equation}
\pot_{\rm{int}} = \potfisher_{\rm int}(\bm{x}_1,\bm{x}_2,\dots,\bm{x}_N),
\end{equation}
where the ``hat'' indicates that the functional dependence is on absolute particle positions (as opposed to distances later on). We assume that $\potfisher_{\rm int}:\mathbb{R}^{3N} \to \mathbb{R}$ is a continuously differentiable function.\footnote{Note that this assumption may fail in systems undergoing first-order magnetic or electronic phase transformations.} This function must satisfy the following invariance principle:
\label{page:invariance}
\begin{quote}
The internal energy of a material system is invariant with respect to the Euclidean group $\mcal{G} := \{\bm{x} \mapsto \bm{Q}\bm{x}+\bm{c} \mid \bm{x} \in \real{3},\bm{Q} \in O(3),\bm{c} \in \real{3}\}$, where $O(3)$ denotes the full orthogonal group.
\end{quote}
To exploit this invariance, let us consider the action of $\mcal{G}$ on $\mathbb{R}^{3N}$, i.e., the action of any combination of translation and rotation (proper or improper), which is represented by an element $g:\bm{x} \mapsto \bm{Q}\bm{x}+\bm{c}$ in $\mcal{G}$, on any configuration of $N$ particles represented by a vector $(\bm{x}_1,\dots,\bm{x}_N) \in \mathbb{R}^{3N}$:
\begin{align}
\label{eqn:group_action}
g \cdot (\bm{x}_1,\dots,\bm{x}_N) = (\bm{Q}\bm{x}_1 + \bm{c},\dots,\bm{Q}\bm{x}_N+\bm{c}).
\end{align}
This action splits $\mathbb{R}^{3N}$ into disjoint sets of equivalence classes \cite{dummit}, which we now describe. For any $\bm{u}=(\bm{x}_1,\dots,\bm{x}_N) \in \mathbb{R}^{3N}$, let $\mcal{O}_{\bm{u}} \subset \mathbb{R}^{3N}$ denote an equivalence class which is defined as\footnote{The notation ``$\{g\cdot \bm{u} \mid g \in \mcal{G}\}$'' should be read as ``the set of all $g\cdot \bm{u}$, such that $g$ is in the Euclidean group $\mcal{G}$''.} 
\begin{align}
\mcal{O}_{\bm{u}} := \{g\cdot \bm{u} \mid  g \in \mcal{G}\},
\label{eqn:orbit}
\end{align}
where $g \cdot \bm{u}$ denotes the action of $g$ on $\bm{u}$ defined in \eref{eqn:group_action}. In other words, $\mcal{O}_{\bm{u}}$ represents the set of all configurations which are related to the configuration $\bm{u}$ by a rigid body motion and/or reflection. Due to the invariance of the potential energy, we can view the function $\pot_{\rm int}$ as a function on the set of equivalence classes, i.e., 
\begin{align}
\overline{\pot}_{\rm int}(\mcal{O}_{\bm{u}}) = \potfisher_{\rm int}(\bm{u}),
\label{eqn:pot_orbit}
\end{align}
because
\begin{align}
\potfisher_{\rm int}(\bm{v}) = \potfisher_{\rm int}(\bm{u}) \qquad \forall \bm{v} \in \mcal{O}_{\bm{u}}.
\end{align}
Now, consider a set $\mcal{S} \subset \mathbb{R}^{N(N-1)/2}$, defined as
\begin{align}
\mcal{S} := \{(r_{12},r_{13}, \dots,& r_{1N}, r_{23}, \dots, r_{(N-1)N}) \mid \notag \\
&r_{\alpha \beta} = \vnorm{\bm{x}_\alpha - \bm{x}_\beta}, (\bm{x}_1,\dots,\bm{x}_N) \in \mathbb{R}^{3N}\}.
\end{align}
In other words, the set $\mcal{S}$ consists of all possible $N(N-1)/2$-tuples of real numbers which correspond to the distances between $N$ particles in $\real{3}$.\footnote{\label{fn:phys_dist}The key here is that not all $N(N-1)/2$ combinations of real numbers constitute a valid set of physical distances. The distances must satisfy certain geometric constraints in order to be physically meaningful as explained below.} In technical terms, the coordinates of any point in $\mcal{S}$ are said to be \emph{embeddable} in $\real{3}$. Note that $\mcal{S}$ is a proper subset of $\mathbb{R}^{N(N-1)/2}$ as it consists of only those $N(N-1)/2$-tuple distances which satisfy certain geometric constraints. In fact, the set $\mcal{S}$ represents a $(3N-6)$-dimensional manifold in $\mathbb{R}^{N(N-1)/2}$, commonly referred to as the \emph{shape space}.

Let $\phi$ be the mapping taking a point in configuration space to the corresponding set of distances in $\mcal{S}$, i.e., $\phi:\mathbb{R}^{3N} \to \mcal{S}:(\bm{x}_1,\dots,\bm{x}_N) \mapsto (r_{12},\dots,r_{(N-1)N})$, where $r_{\alpha \beta}$ from here onwards is used to denote $\vnorm{\bm{x}_\alpha-\bm{x}_\beta}$. Since the Euclidean group preserves distances, it immediately follows that the map 
\begin{align}
\label{eqn:bijection}
\bar{\phi}:\{\text{Equivalence classes}\} \to \mcal{S},
\end{align}
defined as $\bar{\phi}(\mcal{O}_{\bm{u}}) = \phi(\bm{u})$, is a bijection (one-to-one and onto mapping) from the set of equivalence classes to the set $\mcal{S}$.\footnote{$\bar{\phi}$ is surjective (onto) by the definition of $\mcal{S}$. The proof that it is injective (one-to-one) is similar to the proof of the \emph{basic invariance theorem} for the simultaneous invariants of vectors due to Cauchy, which can be found in \cite[Section 11]{truesdell}.} This essentially means that for every set of equivalent configurations, i.e., configurations related to each other by a rigid body motion and/or reflection, there exists a unique $N(N-1)/2$-tuple of distances and vice versa. From \eref{eqn:pot_orbit} and \eref{eqn:bijection}, it immediately follows that the potential energy of the system can be completely described by a function $\breve{\pot}_{\rm int}:\mcal{S} \to \mathbb{R}$, defined as
\begin{align}
\breve{\pot}_{\rm int}(\bm{s}) := 
\overline{\pot}_{\rm int}(\bar{\phi}^{-1}(\bm s)) \qquad \forall \bm{s} \in \mcal{S}.
\label{eqn:pot_shape}
\end{align}

We now restrict our discussion to those systems for which there exists a continuously differentiable extension of $\breve{\pot}_{\rm int}$, defined on the shape space, to $\mathbb{R}^{N(N-1)/2}$.\footnote{The extension is necessary since $\breve{\pot}_{\rm int}$ is defined in \eref{eqn:pot_shape} only on the set $\mcal{S}$. We need to extend the definition to {\em all} points in $\mathbb{R}^{N(N-1)/2}$, whether they correspond to a set of physical distances or not, in order to be able to compute derivatives as explained later in the text. This issue has been overlooked in the past (see for example \cite{delph2005}), which leads to the conclusion that the stress tensor is always symmetric. It turns out that this conclusion is correct (at least for point masses without internal structure), but the reasoning is more involved as we show later.} This is justifiable because of the fact that all interatomic potentials used in practice, for a system of $N$ particles, are either continuously differentiable functions on $\mathbb{R}^{N(N-1)/2}$, or can easily be extended to one. For example, the pair potential and the embedded-atom method (EAM) potential \cite{eam} are continuously differentiable functions on $\mathbb{R}^{N(N-1)/2}$, while the Stillinger-Weber \cite{stillinger1985} and the Tersoff \cite{tersoff1988} potentials can be easily extended to $\mathbb{R}^{N(N-1)/2}$ by expressing the angles appearing in them as a function of distances between particles. Therefore, we assume that there exist a continuously differentiable function $\pot_{\rm int}:\mathbb{R}^{N(N-1)/2} \to \mathbb{R}$, such that the restriction of $\pot_{\rm int}$ to $\mcal{S}$ is equal to $\breve{\pot}_{\rm int}$:
\begin{align}
\pot_{\rm int} (\bm{s}) = \breve{\pot}_{\rm int}(\bm{s}) \quad
\forall \bm{s}=(r_{12},\dots,r_{(N-1)N}) \in \mcal{S}.
\label{eqn:restriction}
\end{align}                  

An immediate question that arises is whether this extension is unique in a neighborhood of $\bm{s} \in \mcal{S}$. Note that for $N \le 4$, $3N-6 = N(N-1)/2$. Therefore, for $N \le 4$, for every point $\bm{s} \in \mcal{S}$, there exists a neighborhood in $\mathbb{R}^{N(N-1)/2}$ which lies in $\mcal{S}$. However, for $N>4$, there may be multiple extensions of $\breve{\pot}_{\rm{int}}$.

As noted above, the reason we are considering an extension is to define the partial derivative of the potential energy with respect to each coordinate of a point in $\mathbb{R}^{N(N-1)/2}$. This will be used later to define the stress tensor. For example, the partial derivative of $\pot_{\rm int}(\zeta_{12},\dots,\zeta_{(N-1)N})$ with respect to $\zeta_{12}$ at any point $\bm{s}=(r_{12},\dots,r_{(N-1)N}) \in \mcal{S}$, defined as
\begin{align}
\label{eqn:diff_extension}
\frac{d \pot_{\rm int}}{d \zeta_{12}}(\bm{s}) = \lim_{\epsilon \to 0} 
\frac{\pot_{\rm int}(r_{12}+\epsilon,\dots,r_{N(N-1)/2}) -\pot_{\rm int}(r_{12},\dots,r_{N(N-1)/2})}{\epsilon},
\end{align}
requires us to evaluate the function at non-embeddable points. 

It will be shown later that the quantity evaluated in \eref{eqn:diff_extension} may differ for different extensions. On the other hand, $\nabla_{\bm{x}_\alpha} \pot_{\rm int}$ is uniquely defined for any extension. This is because
\begin{align}
\nabla_{\bm{x}_\alpha} \pot_{\rm int}(\bm{s}) &= \nabla_{\bm{x}_\alpha} \breve{\pot}_{\rm int}(\bm{s}) \notag \\
&= \nabla_{\bm{x}_\alpha}\overline{\pot}_{\rm int}(\bar{\phi}^{-1}(\bm{s})) \notag \\
&= \nabla_{\bm{x}_\alpha}\potfisher_{\rm int}(\bm{u}), \label{eqn:force_equality}
\end{align}
where $\bar{\phi}^{-1}(\bm{s})=\mcal{O}_{\bm{u}}$, which implies $\phi(\bm{u})=\bm{s}$,\footnote{Note that the vector $\bm{u}$ appearing in \eref{eqn:force_equality} can be replaced by any $\bm{v} \in \mcal{O}_{\bm{u}}$.} and we have used \eref{eqn:restriction}, \eref{eqn:pot_shape} and \eref{eqn:pot_orbit} in the first, second and the last equality respectively. 

We next address the possibility of having multiple extensions for the potential energy by studying the various constraints that the distances between particles have to satisfy in order to be embeddable in $\real{3}$.  We demonstrate, through a simple example, how multiple extensions for the potential energy can lead to a non-unique decomposition of the force on a particle, which in turn leads to a non-unique pointwise stress tensor.

\subsubsection*{Central-force decomposition and the possibility of alternate extensions}
\label{page:altext}
We will now show that the force on a particle can always be decomposed as a sum of central forces. The force on a particle due to internal interactions is defined in \eref{eqn:def_fi}. Therefore, for any configuration $\bm{u} \in \mathbb{R}^{3N}$, we have
\begin{align}
\label{eqn:force_alpha}
\bm{f}^{\rm int}_\alpha(\bm{u}) &= -\nabla_{\bm{x}_\alpha} \potfisher_{\rm{int}}(\bm{u}).
\end{align}
Using \eref{eqn:force_equality}, \eref{eqn:force_alpha} takes the form
\begin{align}
\bm{f}^{\rm int}_\alpha(\bm{u}) &= -\nabla_{\bm{x}_\alpha} \pot_{\rm{int}}(\bm{s}) \vert_{\bm{s}=\phi(\bm{u})} \notag \\
&= \sum_{\substack{\beta \\ \beta \ne \alpha}} \bm{f}_{\alpha\beta}(\bm{u}), \label{eqn:f_decomp_general}
\end{align}
where $\bm{s}=\phi(\bm{u})=(r_{12},\dots,r_{(N-1)N})$ and 
\begin{equation}
\label{eqn:define_fij}
\bm{f}_{\alpha \beta}(\bm{u}) := \left \{
\begin{array}{ll}
\frac{\partial\pot_{\rm int}}{\partial \zeta_{\alpha\beta}}(\phi(\bm{u})) \frac{\bm{x}_\beta - \bm{x}_\alpha}{r_{\alpha \beta}} & 
                \mbox{if $\alpha<\beta$}, \\
\frac{\partial\pot_{\rm int}}{\partial \zeta_{\beta\alpha}}(\phi(\bm{u})) \frac{\bm{x}_\beta - \bm{x}_\alpha}{r_{\alpha \beta}} & 
                \mbox{if $\alpha>\beta$},
\end{array}
\right.
\end{equation}
is the contribution to the force on particle $\alpha$ due to the presence of particle $\beta$.

Note that $\bm{f}_{\alpha\beta}$ is parallel to the direction $\bm{x}_\beta - \bm{x}_\alpha$ and satisfies $\bm{f}_{\alpha \beta}=-\bm{f}_{\beta \alpha}$.  We therefore note the important result that the \emph{internal force on a particle, for any interatomic potential that has a continuously differentiable extension, can always be decomposed as a sum of central forces, i.e., forces parallel to directions connecting the particle to its neighbors}.\footnote{
The result that the force on a particle, modeled using any interatomic potential with a continuously differentiable extension, can be decomposed as sum of central forces may seem strange to some readers. This may be due to the common confusion in the literature of using the term ``central-force models'' to refer to simple pair potentials. In fact, we see that due to the invariance requirement stated on Page~\pageref{page:invariance}, {\em all} interatomic potentials (including those with explicit bond angle dependence) that can be expressed as a continuously differentiable function as described in the text, are central-force models. By this we mean that the force on any particle (say $\alpha$) can be decomposed as a sum of terms, $\bm{f}_{\alpha\beta}$, aligned with the vectors joining particle $\alpha$ with its neighbors and satisfying action and reaction. 

The difference between the general case and that of a pair potential is that for a pair potential, $\vnorm{\bm{f}_{\alpha\beta}}$ depends {\em only} on the distance $r_{\alpha\beta}$ between the particles, whereas for a general potential, the dependence is on a larger set of distances, $\vnorm{\bm{f}_{\alpha\beta}}=\frac{\partial \pot_{\rm int}}{\partial \zeta_{\alpha\beta}}(r_{12},r_{13},\dots,r_{(N-1),N})$, i.e., $\vnorm{\bm{f}_{\alpha\beta}}$ depends on the {\em environment} of the ``bond'' between $\alpha$ and $\beta$. For this reason, $\bm{f}_{\alpha\beta}$ for a pair potential is a property of particles $\alpha$ and $\beta$ alone and can be physically interpreted as the ``force exerted on particle $\alpha$ by particle $\beta$''. Whereas, in the more general case of arbitrary interatomic potentials, the physical significance of the interatomic force is less clear and at best we can say that $\bm{f}_{\alpha\beta}$ is the ``contribution to the force on particle $\alpha$ due to the presence of particle $\beta$''.
}
We will see later in \sref{sec:stronglaw} that the central-force decomposition is the only physically-meaningful partitioning of the force.

The remaining question is how different potential energy extensions affect the force decomposition in \eref{eqn:f_decomp_general}. We have already established in \eref{eqn:force_equality} and \eref{eqn:force_alpha} that the force $\bm{f}_\alpha^{\rm int}$ is independent of the particular extension used. However, we show below that the individual terms in the decomposition, $\bm{f}_{\alpha\beta}$, are {\em not} unique. These terms depend on the manner in which the potential energy, defined on the shape space, is extended to its neighborhood in the higher-dimensional Euclidean space.

In order to construct different extensions, we use the geometric constraints that the distances have to satisfy in order for them to be embeddable in $\real{3}$.\footnote{We thank Ryan Elliott for suggesting this line of thinking.} The nature of these constraints is studied in the field of \emph{distance geometry}, which describes the geometry of sets of points in terms of the distances between them (see Appendix \ref{sec:geometry}). One of the main results of this theory, is that the constraints are given by \emph{Cayley-Menger determinants}, which are related to the volume of a simplex formed by $N$ points in an $N-1$ dimensional space.

For simplicity let us restrict our discussion to one dimension. It is easy to see that in one dimension the number of independent coordinates are $N-1$ and for $N>2$ the number of interatomic distances exceeds the number of independent coordinates. Therefore, let the material system $\mathcal{M}$ consist of three point masses interacting in one dimension. The standard pair potential representation for this system, which is an extension of the potential energy to the higher-dimensional Euclidean space, is given by
\begin{equation}
\pot_{\rm{int}}(\zeta_{12},\zeta_{13},\zeta_{23}) = 
\pot_{12}(\zeta_{12}) + \pot_{13}(\zeta_{13}) + \pot_{23}(\zeta_{23}).
\label{eqn:pair_pot}
\end{equation}                                                           
Since the calculation gets unwieldy, let us consider the special case when the particles are arranged to satisfy $x_1<x_2<x_3$, for which $r_{13}=r_{12}+r_{23}$. Using \eref{eqn:f_decomp_general}, the internal force, $f_1^{\rm{int}}$, evaluated at this configuration, is decomposed as
\begin{align}
f_1^{\rm{int}}(r_{12},r_{13},r_{23}) = -\frac{d\pot_{\rm{int}}}{dx_1} &= -\frac{d\pot_{12}}{dx_1}-\frac{d\pot_{13}}{dx_1}  \notag \\
&= \pot'_{12}(r_{12})  + \pot'_{13}(r_{13}) \notag \\
& =: f_{12} + f_{13}. \label{eqn:define_fij_pair}
\end{align}                         
We now provide an alternate extension to the standard pair potential representation given in \eref{eqn:pair_pot}. The Cayley-Menger determinant corresponding to a cluster of three points (see \eref{eq:CMD4}) is identically equal to zero at every point on the shape space. This is because the shape space corresponds to a configuration of three collinear points, and the area of the triangle formed by three collinear points is zero. Thus, we have
\begin{align}
\chi(r_{12},r_{13},r_{23}) &= (r_{12}-r_{13}-r_{23})(r_{23}-r_{12}-r_{13}) \notag \\
& \quad \times (r_{13}-r_{23}-r_{12})(r_{12}+r_{13}+r_{23}) \notag \\
&= 0. \label{eqn:cayley_1d}
\end{align}
Using the identity in \eref{eqn:cayley_1d}, an alternate extension $\pot^{\mcal{A}}_{\rm{int}}$ is constructed:
\begin{align}
\pot^{\mcal{A}}_{\rm{int}}(\zeta_{12},\zeta_{13},\zeta_{23}) &= \pot_{\rm int}(\zeta_{12},\zeta_{13},\zeta_{23}) + \chi(\zeta_{12},\zeta_{13},\zeta_{23}).
\label{eqn:pot_rep_2}
\end{align}
Note that $\pot^{\mcal{A}}_{\rm{int}}$ is indeed an extension because from \eref{eqn:cayley_1d} it is clear that $\pot^{\mcal{A}}_{\rm{int}}$ is equal to $\pot_{\rm{int}}$ at every point on the shape space of the system and it is continuously differentiable because $\chi(\zeta_{12},\zeta_{13},\zeta_{23})$, being a polynomial, is infinitely differentiable. Let us now see how the internal force, $f_1^{\rm int}$, for the special configuration considered in this example, is decomposed using the new extension:
\begin{align}
f_1^{\rm{int}} = -\frac{d\pot_{\rm{int}}^{\mcal{A}}}{dx_1} &= 
-\frac{d\pot_{\rm{int}}}{dx_1}  - \frac{d\chi}{dx_1} \notag \\
&= \left (\pot'_{12} - \frac{\partial \chi}{\partial \zeta_{12}}(\bm{s}) \frac{\partial \zeta_{12}}{\partial x_1}(\bm{s}) \right ) +
\left( \pot'_{13} - \frac{\partial \chi}{\partial \zeta_{13}}(\bm{s}) \frac{\partial \zeta_{13}}{\partial x_1}(\bm{s}) \right ) \notag \\
&= \left (f_{12}-8r_{12}r_{23}(r_{12} + r_{23}) \right ) + \left ( f_{13}+8r_{12}r_{23}(r_{12} + r_{23}) \right ) \notag \\
&=: \tilde{f}_{12} + \tilde{f}_{13}, \label{eqn:f1_decomp}
\end{align}
It is clear from \eref{eqn:define_fij_pair} and \eref{eqn:f1_decomp} that the central-force decomposition is not the same for the two representations, i.e., $f_{12} \ne \tilde{f}_{12}$ and $f_{13} \ne \tilde{f}_{13}$, however the force on particle 1, $f_1^{\rm int}$, is the same in both cases as expected.

It is very interesting to note that $\pot^{\mcal{A}}_{\rm int}$ is \emph{not} a pair potential (based on the definition of a pair potential), but it is equivalent to a pair potential, i.e., it agrees with a pair potential on the shape space. Thus, the set of continuously differentiable extensions of a given interatomic potential function form an equivalence class.  It is not clear at this stage if these equivalence classes can be fully expressed in terms of the Cayley-Menger determinant constraints. 

Although the above example is quite elementary, this process can be extended to any arbitrary number of particles in three dimensions. Any given potential can be altered to an equivalent potential by adding a function of the Cayley-Menger determinants corresponding to any cluster of 5 or 6 particles (see Appendix \ref{sec:geometry}). This function must be continuously differentiable and equal to zero when all of its arguments are zero. For example, a new representation in three dimensions can be constructed by adding a linear combination of the Cayley-Menger determinants:
\begin{equation}
\pot_{\rm int}^* = \pot_{\rm int}(\zeta_{12}, \dots, \zeta_{(N-1)N}) + \sum_{k=1}^m \lambda_k \chi_k,
\label{eqn:3drep}
\end{equation}
where there are $m$ constraints defined by the Cayley-Menger determinants $\chi_k$, and $\lambda_k$ are constants.\footnote{Note that \eref{eqn:3drep} has the same form as a Lagrangian with the $\lambda$ terms playing the role of Lagrange multipliers. For a static minimization problem, we seek to minimize $\pot_{\rm int}^*$, without violating the physical constraints relating the distances to each other. (This is equivalent to minimizing $\potfisher_{\rm int}$ with respect to the positions of particles.) Thus, the original constrained minimization of $\pot_{\rm int}$ is replaced by the problem of finding the saddle points of $\pot_{\rm int}^*$.} 

From this point on, we abuse our notation slightly, and write for any $\bm{s}=(r_{12}, \dots,\\ r_{(N-1)N})\in\mcal{S}$:
\begin{align}
\frac{\partial \pot_{\rm int}}{\partial r_{\alpha \beta}} \quad \text{for} \quad \frac{\partial \pot_{\rm int}}{\partial \zeta_{\alpha \beta}}(\bm{s}).
\end{align}
Also, we assume that there exists a continuously differentiable extension whenever we write $\bm{f}_{\alpha\beta}$ and sometimes refer to a continuously differentiable extension as an extension.
\subsubsection*{Derivation of the pointwise stress tensor}
We now return to the differential equation in \eref{eqn:stress_force_differential} for the potential part of the pointwise stress tensor. Substituting the force decomposition given in \eref{eqn:f_decomp_general} corresponding to a continuously differentiable extension, into \eref{eqn:stress_force_differential}, we obtain
\begin{equation}
\label{eqn:stress_force_differential_*}
\divr_{\bm{x}} \stress_{\rm{v}}(\bm{x},t) = \sum_{\substack{\alpha,\beta \\ \alpha \ne \beta}} \langle W \bm{f}_{\alpha \beta} \mid \bm{x}_\alpha = \bm{x} \rangle.
\end{equation}
On using the identity 
\begin{equation}
\left \langle \bm{f}_{\alpha\beta} W \mid \bm{x}_\alpha = \bm{x} \right \rangle = \int_{\real{3}} \left \langle \bm{f}_{\alpha\beta} W \mid \bm{x}_\alpha = \bm{x}, \bm{x}_\beta = \bm{y} \right \rangle \, d \bm{y},
\end{equation}
equation \eref{eqn:stress_force_differential_*} takes the form
\begin{equation}
\label{eqn:stress_force_differential_**}
\divr_{\bm{x}} \bm{\sigma}_{\rm{v}}(\bm{x},t) = \sum_{\substack{\alpha,\beta \\ \alpha \ne \beta}} \int_{\real{3}} \left \langle W \bm{f}_{\alpha\beta} \mid \bm{x}_\alpha = \bm{x}, \bm{x}_\beta = \bm{y} \right \rangle \, d\bm{y}.
\end{equation}
We now note that, being anti-symmetric, the integrand in the right-hand side of the above equation satisfies all the necessary conditions for the application of Lemma \ref{lem1} given in Appendix A. Conditions (1) and (2) in Appendix A are satisfied through the regularity conditions on $W$. Therefore, using Lemma \ref{lem1}, which was proved by Noll in \cite{noll1955}, we have
\begin{align}
\label{eqn:stress_force_general}
&\bm{\sigma}_{\rm{v}}(\bm{x},t) \\ 
&= \frac{1}{2} \sum_{\substack{\alpha,\beta \\  \alpha \neq \beta}} \int_{\real{3}} \int_{s=0}^{1} \left \langle -\bm{f}_{\alpha\beta} W \mid \bm{x}_\alpha=\bm{x}+s\bm{z},\bm{x}_\beta = \bm{x} - (1-s)\bm{z} \right \rangle \, ds \otimes \bm{z} \, d\bm{z} \notag \\
&= \frac{1}{2} \sum_{\substack{\alpha,\beta \\  \alpha \neq \beta}} \int_{\real{3}} \frac{\bm{z} \otimes \bm{z}}{\vnorm{\bm{z}}} \int_{s=0}^{1} \left \langle \frac{\partial\pot_{\rm int}}{\partial r_{\alpha\beta}} W \mid \bm{x}_\alpha=\bm{x}+ s\bm{z},\bm{x}_\beta = \bm{x} - (1-s)\bm{z} \right \rangle \, ds \, d\bm{z},
\end{align}
where in passing to the second line, we have used \eref{eqn:define_fij} and the identity $\bm{x}_\alpha-\bm{x}_\beta = \bm{x}+s\bm{z} - [\bm{x}-(1-s)\bm{z}] = \bm{z}$.  For the special case of a pair potential, $\partial\pot_{\rm int}/\partial r_{\alpha\beta}=\pot'_{\alpha\beta}(r_{\alpha\beta})$, and \eref{eqn:stress_force_general} reduces to the expression originally given in \cite{noll1955}.

\begin{figure}
\centering
\includegraphics[scale=0.4]{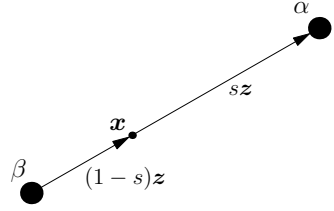}
\caption{A schematic diagram helping to explain the vectors appearing in the pointwise potential stress expression in \eref{eqn:stress_force_general}. The bond $\alpha$--$\beta$ is defined by the vector $\bm{z}$. When $s=0$, atom $\alpha$ is located at point $\bm{x}$, and when $s=1$, atom $\beta$ is located at $\bm{x}$.}
\label{fig:stressbond}
\end{figure} 

\medskip
The expression for the potential part of the pointwise stress tensor in \eref{eqn:stress_force_general} is a general result applicable to all interatomic potentials. We make some important observations regarding this expressions below:
\begin{enumerate}

\item Although the expression for $\bm{\sigma}_{\rm{v}}$ appears complex, it is actually conceptually quite simple. $\bm{\sigma}_{\rm{v}}$ at a point $\bm{x}$ is the superposition of the expectation values of the forces in all possible bonds passing through $\bm{x}$. The variable $\bm{z}$ selects a bond length and direction and the variable $s$ slides the bond through $\bm{x}$ from end to end (see \fref{fig:stressbond}).

\item $\bm{\sigma}_{\rm{v}}$ is symmetric. This is clear because the term $\bm{z}\otimes\bm{z}$ is symmetric. Since the kinetic part of the stress in \eref{eqn:stress_kinetic} is also symmetric, the conclusion is that the \emph{pointwise stress tensor is symmetric for all interatomic potentials}.

\item Since $\bm{\sigma}_{\rm{v}}$ depends on the nature of the force decomposition and different extensions of a given potential energy can result in different force decompositions, we conclude that the pointwise stress tensor is \emph{non-unique} for all interatomic potentials (including the pair potential). We show in \sref{sec:unique_macro_stress} that the difference due to any two pointwise stress tensors, resulting from different extensions for the interatomic potential energy, tends to zero as the volume of the domain over which these pointwise quantities are spatially averaged tends to infinity. Therefore, as expected, the macroscopic stress tensor, which is defined in the thermodynamic limit (see footnote~\ref{foot:tdlimit} on page~\pageref{foot:tdlimit}), is always unique and is independent of the potential energy extension. 

\item Another source of non-uniqueness is that any expression of the form, $\bm{\sigma}_{\rm{v}}+\tilde{\bm{\sigma}}$, where $\divr_{\bm{x}} \tilde{\bm{\sigma}} = {\bf 0}$, also satisfies the balance of linear momentum and is therefore also a solution.  We address this issue in \sref{sec:hardy}, where we show that in the thermodynamic limit under equilibrium conditions, the spatially averaged counterpart to $\bm{\sigma}_{\rm{v}}$ converges to the virial stress derived in \sref{ch:canonical}.
\end{enumerate}

The above results hinge on the use of the central-force decomposition in \eref{eqn:f_decomp_general}. One may wonder whether other \emph{non-central} decompositions exist, and if yes, why are these discarded. This is discussed in the next section.

\subsection{Non-central-force decompositions and the strong law of action and reaction}
\label{sec:stronglaw}

In the previous section, we showed that as a consequence of the invariance of the potential energy with respect to the Euclidean group, for any interatomic potential with a continuously differentiable extension, the force on a particle can always be represented as a sum of central forces.  In this section, we show that other \emph{non-central-force decompositions} are possible, however that these violate the \emph{strong law of action and reaction}, which we prove below, and therefore do not constitute physically-meaningful force decompositions.

\subsubsection*{A proposal for a non-symmetric stress tensor for a three-body potential}
As an example, let us now consider the case of a three-body potential. For simplicity, we assume that the potential only has three-body terms and all particles are identical. Under these conditions, the internal potential energy is
\begin{equation}
\label{eqn:pot_rep_3body}
\pot_{\rm{int}} = \sum_{\substack{\alpha,\beta,\gamma \\ \alpha < \beta < \gamma}} \potfisher(\bm{x}_{\alpha},\bm{x}_{\beta},\bm{x}_{\gamma}),
\end{equation}
where $\potfisher(\bm{x}_\alpha,\bm{x}_\beta,\bm{x}_\gamma)$ is the potential energy of an isolated cluster, $\{\alpha,\beta,\gamma\}$, and $\sum_{\substack{\alpha,\beta,\gamma \\ \alpha < \beta <\gamma}}$ represents a triple sum. We know that a central-force decomposition can be obtained by following the procedure outlined in the previous section and that this leads to a symmetric pointwise stress tensor in \eref{eqn:stress_force_general}.  Alternatively, a \emph{non-symmetric} three-body stress tensor is derived as follows. To keep things simple, we derive the stress tensor for a system containing only three particles.  Rewriting \eref{eqn:pot_rep_3body} for this case, we have
\begin{equation}
\label{eqn:summation_split}
\pot_{\rm{int}} = \potfisher(\bm{x}_1,\bm{x}_2,\bm{x}_3) = \sum_{\alpha=1}^3 \phi_\alpha,
\end{equation}
where
\begin{equation}
\label{eqn:def_phi_i}
\phi_\alpha= \frac{1}{3} \potfisher(\bm{x}_1,\bm{x}_2,\bm{x}_{3})
\end{equation}
is the potential energy assigned to particle $\alpha$, equal to one-third of the total potential energy. Substituting \eref{eqn:summation_split} into \eref{eqn:stress_force_differential}, we obtain
\begin{align}
\divr_{\bm{x}} \bm{\sigma}_{\rm{v}}(\bm{x},t) &= - \sum_{\alpha,\beta} \left \langle W \nabla_{\bm{x}_\alpha} \phi_\beta \mid \bm{x}_\alpha = \bm{x} \right \rangle \notag \\
&=- \sum_{\substack{\alpha,\beta \\ \alpha \ne \beta}} \left \langle W \nabla_{\bm{x}_\alpha} \phi_\beta \mid \bm{x}_\alpha = \bm{x} \right \rangle -  \sum_{\alpha} \left \langle W \nabla_{\bm{x}_\alpha} \phi_\alpha \mid \bm{x}_\alpha = \bm{x} \right \rangle. \label{eqn:stress_force_differential_3body_*}
\end{align}
Since the cluster of three particles is isolated, the net force on the cluster due to internal interactions is zero. Therefore, from \eref{eqn:summation_split}, we have
\begin{equation}
\label{eqn:postulate}
\nabla_{\bm{x}_\alpha}\phi_\alpha = -\sum_{\beta \ne \alpha} \nabla_{\bm{x}_\beta}\phi_\alpha.
\end{equation}
Using this relation, equation \eref{eqn:stress_force_differential_3body_*} simplifies to
\begin{equation}
\label{eqn:stress_force_differential_3body_**}
\divr_{\bm{x}} \bm{\sigma}_{\rm{v}}(\bm{x},t) = - \sum_{\substack{\alpha,\beta \\ \alpha \ne \beta}} \left \langle W (\nabla_{\bm{x}_\alpha} \phi_\beta - \nabla_{\bm{x}_\beta} \phi_\alpha) \mid \bm{x}_\alpha = \bm{x} \right \rangle.
\end{equation}
Let
\begin{equation}
\label{eqn:f_ij_3body}
\bar{\bm{f}}_{\alpha\beta} := \nabla_{\bm{x}_\beta} \phi_\alpha  - \nabla_{\bm{x}_\alpha} \phi_\beta.
\end{equation}
Now, using the identity
\[
\left \langle \bar{\bm{f}}_{\alpha\beta} W \mid \bm{x}_\alpha = \bm{x} \right \rangle = \int_{\real{3}} \left \langle \bar{\bm{f}}_{\alpha\beta} W \mid \bm{x}_\alpha = \bm{x}, \bm{x}_\beta = \bm{y} \right \rangle \, d \bm{y},
\]
and the definition given in \eref{eqn:f_ij_3body}, equation \eref{eqn:stress_force_differential_3body_**} takes the form
\begin{equation}
\label{eqn:stress_force_differential_3body_***}
\divr_{\bm{x}} \bm{\sigma}_{\rm{v}}(\bm{x},t) = \sum_{\substack{\alpha,\beta \\ \alpha \ne \beta}} \int_{\real{3}} \left \langle W \bar{\bm{f}}_{\alpha\beta} \mid \bm{x}_\alpha = \bm{x}, \bm{x}_\beta = \bm{y} \right \rangle \, d\bm{y}.
\end{equation}
Let us now study the definition of $\bar{\bm{f}}_{\alpha\beta}$ given in \eref{eqn:f_ij_3body}. From \eref{eqn:def_phi_i} we have
\begin{equation}
\bar{\bm{f}}_{\alpha\beta} = -\nabla_{\bm{x}_\alpha} \phi_\beta + \nabla_{\bm{x}_\beta} \phi_\alpha = \frac{1}{3} \left [ - \frac{\partial \potfisher}{\partial \bm{x}_\alpha} + \frac{\partial \potfisher}{\partial \bm{x}_\beta} \right ]
= \frac{1}{3} (\bm{f}^{\rm int}_\alpha  - \bm{f}^{\rm int}_\beta). \label{eqn:phi_ij_antisym}
\end{equation}
The above equation suggests how the force $\bm{f}^{\rm int}_\alpha$ is decomposed. For example, $\bm{f}^{\rm int}_1$ is decomposed as 
\begin{equation}
\bm{f}^{\rm int}_1 = \bar{\bm{f}}_{12} + \bar{\bm{f}}_{13} = \frac{1}{3}(\bm{f}^{\rm int}_1 - \bm{f}^{\rm int}_2) + \frac{1}{3}(\bm{f}^{\rm int}_1 - \bm{f}^{\rm int}_3).
\end{equation}
Rearranging this relation gives
\begin{equation}
\bm{f}^{\rm int}_1 + \bm{f}^{\rm int}_2 + \bm{f}^{\rm int}_3 = \bm{0}, 
\end{equation}
which is true since the cluster $\{1,2,3\}$ is isolated. 

\begin{figure}
\centering
\subfigure[]{\includegraphics[scale=0.6]{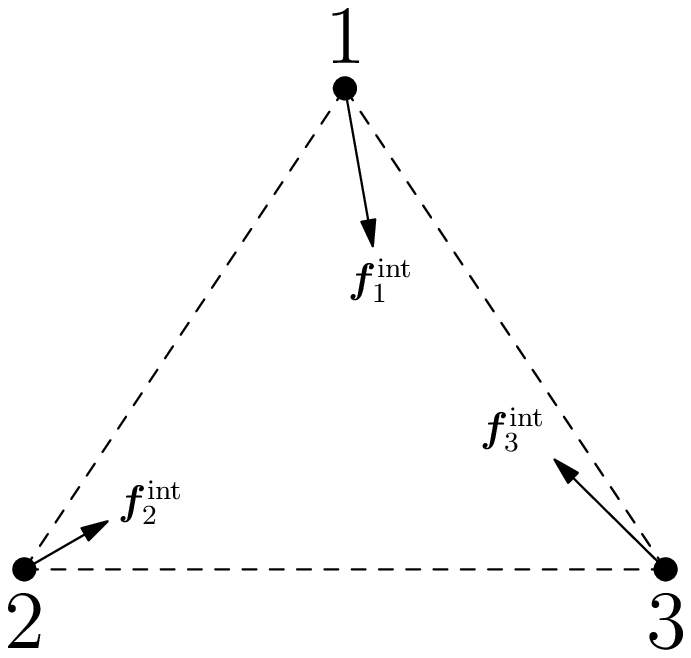} \label{fig:force}}
\subfigure[]{\includegraphics[scale=0.6]{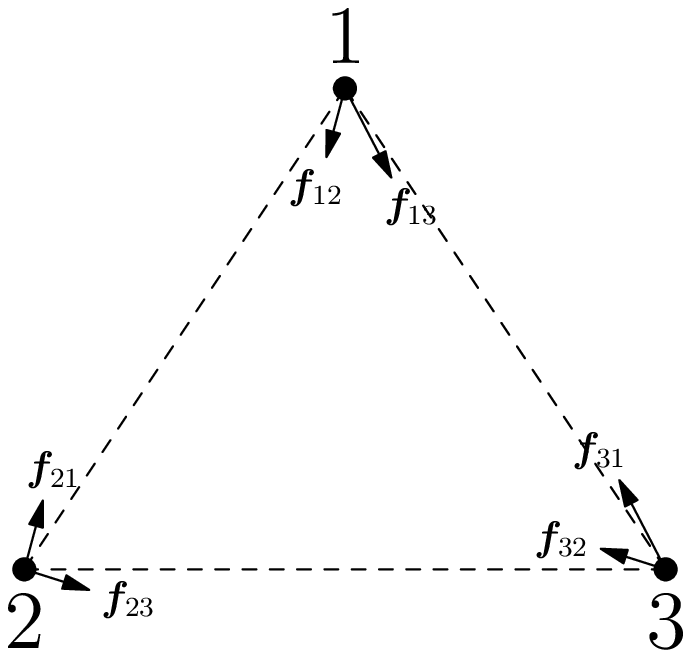} \label{fig:force_decomp}}
\caption{(a) shows the force on each particle in a system consisting of 3 particles which interact through a 3-body potential given in \eref{eqn:summation_split}. Since the potential is derived from an energy decomposition, we have $\bm{f}_1^{\rm{int}} + \bm{f}_2^{\rm{int}} +\bm{f}_3^{\rm{int}}=\bm{0}$. (b) shows the force decomposition of each $\bm{f}_{\alpha}^{\rm{int}}$ such that $\bm{f}_{\alpha\beta}=-\bm{f}_{\beta\alpha}$, but not necessarily parallel to the line joining particles $\alpha$ and $\beta$.}
\label{fig:force_3body}
\end{figure}

From \eref{eqn:phi_ij_antisym} it is clear that $\bar{\bm{f}}_{\alpha\beta}$ is anti-symmetric with respect to its arguments. Therefore, the integrand on the right-hand side of \eref{eqn:stress_force_differential_3body_***} satisfies all the necessary conditions for the application of Lemma \ref{lem1} given in Appendix \ref{ch:noll}. Conditions (1) and (2) in Appendix \ref{ch:noll} are satisfied through the regularity conditions on W. Therefore, using Lemma \ref{lem1}, we have
\begin{align}
\label{eqn:stress_general}
\bar{\stress}_{\rm{v}}(\bm{x},t) &= 
\frac{1}{2} \sum_{\substack{\alpha,\beta \\ \alpha \ne \beta}} \int_{\real{3}} \int_{s=0}^{1} \notag \\
& \left \langle -( \nabla_{\bm{x}_\beta} \phi_\alpha - \nabla_{\bm{x}_\alpha} \phi_\beta ) W \mid \bm{x}_\alpha = \bm{x} + s\bm{z}, \bm{x}_\beta = \bm{x} - (1-s) \bm{z} \right \rangle \, ds\otimes \bm{z} \, d\bm{z}.
\end{align}
The stress $\bar{\stress}_{\rm{v}}$ is non-symmetric in general because $\bar{\bm{f}}_{\alpha\beta}$, defined in \eref{eqn:phi_ij_antisym}, need not be parallel to the line joining particles $\alpha$ and $\beta$ as shown in \fref{fig:force_3body}. We therefore have two expressions for the stress for the same three-body potential. The symmetric expression in \eref{eqn:stress_force_general} and the non-symmetric expression in \eref{eqn:stress_general}. We show next that the non-central-force decomposition that led to the non-symmetric stress tensor is not physically meaningful since it violates the strong law of action and reaction.

\subsubsection*{Weak and strong laws of action and reaction\footnote{This derivation is due to Roger~Fosdick \cite{fosdick}.}}
The following derivation hinges on the fact that in a material system the balance laws of linear and angular momentum must be satisfied for any part of the body.

Consider a system of $N$ particles with masses $m_\alpha$ $(\alpha=1,\dots,N)$. The total force on particle $\alpha$ is 
\begin{equation}
\bm{f}_{\alpha} = \bm{f}^{\rm{ext}}_{\alpha} + \sum_{\substack{\beta \\ \beta \ne \alpha}} \bm{f}_{\alpha\beta},
\end{equation}
where $\bm{f}^{\rm{ext}}_{\alpha}$ is the external force on particle $\alpha$, and, as above, $\bm{f}_{\alpha\beta}$ is the contribution to the force on particle $\alpha$ due to the presence of particle $\beta$. No assumptions are made regarding the terms $\bm{f}_{\alpha\beta}$ or the interatomic potential from which they are derived.

A ``part'' $\wp_t$ of the system consists of $K \le N$ particles. We suppose $\bm{x}_0$ is a fixed point in space. Let $\bm{F}^{\rm{ext}}(\wp_t)$ denote the total force on the part $\wp_t$ external to the part. Let $\bm{M}^{\rm{ext}}(\wp_t; \bm{x}_0)$ denote the total external moment on $\wp_t$ about $\bm{x}_0$. Let $\bm{L}(\wp_t)$ be the linear momentum of the part $\wp_t$ and $\bm{H}(\wp_t; \bm{x}_0)$ be the angular momentum of $\wp_t$ about $\bm{x}_0$.

We adopt the following balance laws, valid for all parts of the system:\footnote{The view that the balance of linear momentum and the balance of angular momentum are fundamental laws of mechanics lies at the basis of continuum mechanics. See, for example, Truesdell's article ``Whence the Law of Moment and Momentum?'' in \cite{truesdell_essays}.}
\begin{align}
\bm{F}^{\rm{ext}}(\wp_t) &= \frac{d\bm{L}}{dt}(\wp_t), \label{eqn:force_balance}\\
\bm{M}^{\rm{ext}}(\wp_t;\bm{x}_0) &= \frac{d\bm{H}}{dt}(\wp_t;\bm{x}_0).\label{eqn:mom_balance}
\end{align}
We now show that by applying these balance laws to particular parts of the system, that the strong law of action and reaction can be established. As a first observation, let $\wp_t$ consist of the single particle $\alpha$. The external force and linear momentum for $\wp_t = \{\alpha\}$ is 
\begin{align}
\bm{F}^{\rm{ext}}(\{\alpha\}) &= \bm{f}^{\rm{ext}}_{\alpha}(t) + \sum_{\substack{\gamma \\ \gamma \ne \alpha}}\bm{f}_{\alpha\gamma}(t), \\
\bm{L}(\{\alpha\}) &= m_{\alpha}\dot{\bm{x}}_{\alpha}(t) \qquad \text{(no sum)}.
\end{align}
The balance of linear momentum in \eref{eqn:force_balance} requires
\begin{equation}
\label{eqn:newton_law_1}
\bm{f}^{\rm{ext}}_{\alpha} + \sum_{\substack{\gamma \\ \gamma \ne \alpha}} \bm{f}_{\alpha\gamma} = m_{\alpha} \ddot{\bm{x}}_{\alpha}.
\end{equation}
The external moment and angular momentum of $\wp_t$ is 
\begin{equation}
\bm{M}^{\rm{ext}}(\{\alpha\};\bm{x}_0) = (\bm{x}_{\alpha}(t) - \bm{x}_0) \times (\bm{f}^{\rm{ext}}_{\alpha} + \sum_{\substack{\gamma \\ \gamma \ne \alpha}} \bm{f}_{\alpha\gamma}) = m_{\alpha}(\bm{x}_{\alpha}(t) - \bm{x}_0) \times \ddot{\bm{x}}_{\alpha}(t),
\end{equation}
where we have used \eref{eqn:newton_law_1}, and 
\begin{equation}
\bm{H}(\{\alpha\};\bm{x}_0) = (\bm{x}_\alpha-\bm{x}_0) \times m_{\alpha} \dot{\bm{x}}_{\alpha}(t).
\end{equation}
The balance of angular momentum in \eref{eqn:mom_balance} is satisfied identically, since
\begin{align*}
m_{\alpha}(\bm{x}_{\alpha}(t) - \bm{x}_0) \times \ddot{\bm{x}}_\alpha(t) &= \frac{d}{dt} \left [ (\bm{x}_{\alpha}(t) - \bm{x}_0) \times m_{\alpha} \dot{\bm{x}}_{\alpha}(t) \right ] \\ 
&= \dot{\bm{x}}_{\alpha}(t) \times m_{\alpha} \dot{\bm{x}}_{\alpha}(t) + (\bm{x}_{\alpha}(t) - \bm{x}_0) \times m_{\alpha} \ddot{\bm{x}}_{\alpha}(t) \\
&= m_{\alpha} (\bm{x}_{\alpha}(t) - \bm{x}_0) \times \ddot{\bm{x}}_{\alpha}(t).
\end{align*}

\smallskip
As a second observation, let $\wp_t$ consist of the union of the two particles $\alpha$ and $\beta$. The external force and linear momentum are
\begin{align}
\bm{F}^{\rm{ext}}(\{\alpha,\beta\}) &= \bm{f}^{\rm{ext}}_{\alpha} + \bm{f}^{\rm{ext}}_{\beta} + \sum_{\substack{\gamma \\ \gamma \ne \alpha \ne \beta}} ( \bm{f}_{\alpha\gamma} + \bm{f}_{\beta\gamma}), \\
\bm{L}(\{\alpha,\beta\}) &= m_{\alpha} \dot{\bm{x}}_{\alpha} + m_{\beta} \dot{\bm{x}}_{\beta}.
\end{align}
The balance of linear momentum in \eref{eqn:force_balance} requires
\begin{equation}
\bm{f}^{\rm{ext}}_{\alpha} + \bm{f}^{\rm{ext}}_{\beta} + \sum_{\substack{\gamma \\ \gamma \ne \alpha \ne \beta}}( \bm{f}_{\alpha\gamma} + \bm{f}_{\beta\gamma}) = m_{\alpha} \ddot{\bm{x}}_{\alpha} + m_{\beta} \ddot{\bm{x}}_{\beta}.
\end{equation}
Subtracting \eref{eqn:newton_law_1} for particles $\alpha$ and $\beta$ gives
\begin{equation}
\sum_{\substack{\gamma \\ \gamma \ne \alpha \ne \beta}}(\bm{f}_{\alpha\gamma} + \bm{f}_{\beta\gamma}) - \sum_{\substack{\gamma \\ \gamma \ne \alpha}} \bm{f}_{\alpha\gamma} - \sum_{\substack{\gamma \\ \gamma \ne \beta}} \bm{f}_{\beta\gamma} = \bm{0},
\end{equation}
from which
\begin{equation}
\bm{f}_{\alpha\beta} + \bm{f}_{\beta\alpha} = \bm{0}.
\label{eqn:f_antisym}
\end{equation}
This relation is called the \emph{weak law of action and reaction} \cite{goldstein}. It shows that $\bm{f}_{\alpha\beta} = -\bm{f}_{\beta\alpha}$, but does not guarantee that $\bm{f}_{\alpha\beta}$ lies along the line connecting particles $\alpha$ and $\beta$.

\smallskip
Next, the external moment and angular momentum of $\wp_t$ is
\begin{align}
\bm{M}^{\rm{ext}}&(\{\alpha,\beta\};\bm{x}_0) \notag \\
&= (\bm{x}_{\alpha} - \bm{x}_0) \times ( \bm{f}^{\rm{ext}}_{\alpha} + \sum_{\substack{\gamma \\ \gamma \ne \alpha \ne \beta}} \bm{f}_{\alpha\gamma} ) + (\bm{x}_{\beta} -\bm{x}_0) \times ( \bm{f}^{\rm{ext}}_{\beta} + \sum_{\substack{\gamma \\ \gamma \ne \beta \ne \alpha}} \bm{f}_{\beta\gamma} ) \notag \\
&= (\bm{x}_{\alpha} - \bm{x}_0) \times (m_\alpha \ddot{\bm{x}}_\alpha - \bm{f}_{\alpha\beta}) + (\bm{x}_{\beta} - \bm{x}_0) \times (m_\beta \ddot{\bm{x}}_\beta- \bm{f}_{\beta\alpha}),
\end{align}
where we have used \eref{eqn:newton_law_1}, and
\begin{equation}
\bm{H}(\{\alpha,\beta\};\bm{x}_0) = (\bm{x}_\alpha-\bm{x}_0) \times m_{\alpha} \dot{\bm{x}}_{\alpha} + (\bm{x}_\beta-\bm{x}_0) \times m_{\beta} \dot{\bm{x}}_{\beta}.
\end{equation}
The balance of angular momentum in \eref{eqn:mom_balance} requires
\begin{align}
(\bm{x}_\alpha &- \bm{x}_0) \times (m_\alpha \ddot{\bm{x}}_\alpha - \bm{f}_{\alpha\beta}) + (\bm{x}_\beta- \bm{x}_0) \times (m_\beta \ddot{\bm{x}}_\beta- \bm{f}_{\beta\alpha}) \notag \\
&= \frac{d}{dt} \left [ (\bm{x}_{\alpha} - \bm{x}_0) \times m_{\alpha} \dot{\bm{x}}_{\alpha} + (\bm{x}_{\beta} - \bm{x}_0) \times m_{\beta} \dot{\bm{x}}_{\beta} \right ] \notag \\ 
&= \dot{\bm{x}}_{\alpha} \times m_{\alpha} \dot{\bm{x}}_{\alpha} + (\bm{x}_{\alpha} - \bm{x}_0) \times m_{\alpha} \ddot{\bm{x}}_{\alpha} + \dot{\bm{x}}_{\beta} \times m_{\beta} \dot{\bm{x}}_{\beta} + (\bm{x}_{\beta} - \bm{x}_0) \times m_{\beta} \ddot{\bm{x}}_{\beta} \notag \\
&= (\bm{x}_{\alpha} - \bm{x}_0) \times m_\alpha \ddot{\bm{x}}_{\alpha} + (\bm{x}_{\beta} - \bm{x}_0) \times m_\beta \ddot{\bm{x}}_{\beta},
\end{align}
which simplifies to
\[
(\bm{x}_\alpha - \bm{x}_0) \times \bm{f}_{\alpha\beta} + (\bm{x}_\beta- \bm{x}_0) \times \bm{f}_{\beta\alpha} = \bm{0},
\]
and, after using \eref{eqn:f_antisym}, we obtain
\begin{equation}
(\bm{x}_\alpha - \bm{x}_\beta) \times \bm{f}_{\alpha\beta} = \bm{0}.
\end{equation}
This shows that $\bm{f}_{\alpha\beta}$ must be {\em parallel} to the line joining particles $\alpha$ and $\beta$. This is the \emph{strong law of action and reaction}. We have shown that this law must hold for any force decomposition, in order for the balance of linear and angular momentum to hold for any subset of a system of particles.  

\subsubsection*{The possibility of non-symmetric stress}
Based on the proof given above for the strong law of action and reaction, we argue that only force decompositions that satisfy the strong law of action and reaction provide a physically-meaningful definition for $\bm{f}_{\alpha\beta}$. For example, the definition in \eref{eqn:f_ij_3body} is not physical because if it were used to compute the external moment acting on a sub-system of particles, as is done above, the balance of angular momentum would be violated. For this reason, this decomposition and the corresponding non-symmetric stress in \eref{eqn:stress_general} are discarded. The conclusion is that \emph{a pointwise stress tensor for a discrete system of point masses without internal structure has to be symmetric}.

In the next section, we discuss the possibility of expanding the class of solutions resulting from Irving--Kirkwood--Noll procedure in a way that makes it possible to obtain non-symmetric stress tensors for systems where the point particles have internal structure. This involves a relaxation of the assumption that the ``bonds'' connecting two particles are necessarily straight.

\subsection{Generalized non-symmetric pointwise stress for particles with internal structure}
\label{sec:gen_stress}
In \sref{sec:s_motion}, we saw that the Irving--Kirkwood--Noll procedure, when applied to multi-body potentials, results in a symmetric non-unique pointwise stress tensor.  We now seek to find additional solutions to \eref{eqn:stress_force_differential_**}, which are not obtained using the standard Irving--Kirkwood--Noll procedure. In arriving at \eref{eqn:stress_force_general} using Lemma \ref{lem1}, we can see that the contribution to the potential part of the stress at position $\bm{x}$ is due to all possible bonds, \emph{assumed to be straight lines}, that pass through $\bm{x}$. The question that naturally arises is to what extent can this assumption be weakened. In other words, can Lemma \ref{lem1} be generalized in a suitable manner so that non-straight bonds can be accommodated? Such a possibility was first discussed by Schofield and Henderson \cite{schofield1982}, who used the Irving and Kirkwood approach with a series expansion of the Dirac-delta distribution. It will be shown in this section, using Noll's more rigorous approach, that solutions giving rise to non-straight bonds are possible.

From a physical standpoint, non-straight bonds are possible for systems with internal degrees of freedom. An example would be the dipole-dipole interactions between water molecules resulting from the electrical dipole of each molecule.  The possibility of internal degrees of freedom was already raised by Kirkwood in his 1946 paper \cite{kirkwood1946}. The idea is to relate the shape of the non-straight bonds to the additional physics associated with the internal degrees of freedom.  This issue will be further explored in future work. For now, we only investigate the possible existence of additional solutions other than that given by \eref{eqn:stress_force_general}. We begin by describing the shape of a bond in a more precise way through the following definition.

\begin{definitions} The ``path of interaction'' between any two interacting particles $\alpha$ and $\beta$ is the unique contour that connects $\alpha$ and $\beta$, such that there is a non-zero contribution to the potential part of the pointwise stress, $\bm{\sigma}_{\rm{v}}$, at any point on this contour.
\end{definitions}

In this section, the terms \emph{bond} and \emph{path of interaction} are used synonymously. Therefore, for the case of the pointwise stress tensor in \eref{eqn:stress_force_general}, this path of interaction is given by the straight line joining $\alpha$ and $\beta$. It is shown in Appendix \ref{ch:noll}, that under certain restrictions on the path of interaction, Lemma \ref{lem1} can be generalized to Lemma \ref{lem2} given in Appendix \ref{ch:noll}. Roughly speaking these restrictions are given by the following conditions:
\begin{enumerate}
\item \label{item:bond_shape_1} The shape of the bond connecting particles $\alpha$ and $\beta$ only depends on the distance between $\alpha$ and $\beta$. \footnote{\label{foot:curveshape} This is in essence a constitutive postulate similar to the assumption in pair potentials that the energy depends on only the distance between particles. A more general dependence of the shape of the path of interaction on the environment of a pair of particles might be possible, but is not pursued here. See also Definition~\ref{def:poi} in Appendix~\ref{ch:noll} and following discussion.}
\item For any two pairs of particles $(\alpha,\beta)$ and $(\gamma,\delta)$ separated by the same distance, the bonds $\alpha-\beta$ and $\gamma-\delta$ are related by a rigid body motion. In addition, if $\bm{x}_\alpha - \bm{x}_\beta = \bm{x}_\gamma - \bm{x}_\delta$ then this rigid body motion involves only translation.\label{item:bond_shape_2}
\end{enumerate}
From condition \ref{item:bond_shape_1}, it is clear that the shape of the bonds can be described by contours $\bm{\Upsilon}_{l}:[0,1] \to \real{3}$, for $l>0$, with $\bm{\Upsilon}_l(0)=(0,0,0)$, $\bm{\Upsilon}_l(1)=(l,0,0)$ along with some mild restrictions. Hence, defining the contours $\bm{\Upsilon}_l$ for $l>0$ is equivalent to defining the paths of interaction between any two points in $\real{3}$.  
For a precise definition of $\bm{\Upsilon}_l$ and the paths of interaction see Appendix \ref{ch:noll}. Since all the necessary conditions for the application of Lemma \ref{lem2} are same as those for Lemma \ref{lem1}, we can use Lemma \ref{lem2} in the Irving--Kirkwood--Noll procedure instead of Lemma \ref{lem1}. In doing so, we arrive at a definition for the generalized pointwise stress tensor $\bm{\sigma}_{\rm{v}}^{\rm{G}}$, for given paths of interaction with the above mentioned properties. This is given by
\begin{align}
\label{eqn:stress_force_schofield}
&\bm{\sigma}_{\rm{v}}^{\rm{G}}(\bm{x},t) = \notag \\
& \frac{1}{2} \sum_{\substack{\alpha,\beta \\  \alpha \neq \beta}} \int_{\real{3}} \int_{s=0}^{1} \left \langle \bm{f}_{\alpha\beta} W \mid \bm{x}_\alpha=\bm{x}_{\perp}+s\bm{z},\bm{x}_\beta = \bm{x}_{\perp} - (1-s)\bm{z} \right \rangle \otimes \Qz \bm{\Upsilon}'_{\vnorm{\bm{z}}}(s) \, ds \, d\bm{z},
\end{align}
where $\bm{f}_{\alpha\beta}$ is defined in \eref{eqn:define_fij}, and $\bm{x}_{\perp}(s,\bm{x},\bm{z}) = \bm{x} - s\bm{z}-\Qz\bm{\Upsilon}_{\vnorm{\bm{z}}}(s), \Qz \in \bm{SO}(3)$. Here, $\bm{x}_\perp$ represents the projection of $\bm{x}$ onto the line joining the endpoints, $\bm{x}_\alpha = \bm{x}_\perp + s\bm{z}$ and $\bm{x}_\beta = \bm{x}_\perp - (1-s)\bm{z}$, of the path of interaction being considered. $\Qz$ represents the rotation part of the rigid body motion described in condition \ref{item:bond_shape_2} that maps the contour $\bm{\Upsilon}_{\vnorm{\bm{x}_\alpha - \bm{x}_\beta}}$ to the path of interaction that connects $\bm{x}_\alpha$ and $\bm{x}_\beta$.

\medskip
Equation \eref{eqn:stress_force_schofield} is a general expression for the potential part of the pointwise stress tensor, of which \eref{eqn:stress_force_general} is a special case. We discuss several key features of this expression below:
\begin{enumerate}
\item Equation \eref{eqn:stress_force_schofield} is unique only up to given paths of interaction for a given potential energy extension. It is a more general result than \eref{eqn:stress_force_general}, since \eref{eqn:stress_force_general} can be obtained from \eref{eqn:stress_force_schofield} by assuming that the path of interaction between any two points is the straight line connecting them. For this special case it is easy to see that $\bm{x}_{\perp} = \bm{x}$ and $\Qz \bm{\Upsilon}'_{\vnorm{\bm{z}}}(s) = -\bm{z}$.
\item $\bm{\sigma}_{\rm{v}}^{\rm{G}}$ is in general non-symmetric, whereas the stress obtained through the standard Irving--Kirkwood--Noll procedure is always symmetric for any multi-body potential with an extension. Since the kinetic part of the stress tensor $\bm{\sigma}_{\rm{k}}$ (see \eref{eqn:stress_kinetic}) is symmetric, it follows that the total pointwise stress tensor obtained from the generalized stress tensor is usually non-symmetric. Therefore, under the present setting, the balance of angular momentum is satisfied only through the presence of couple stresses. This suggests that non-straight bonds might correspond to systems with particles having internal degrees of freedom.

\begin{figure}
\centering
\subfigure[]{\includegraphics[totalheight=0.2\textheight]{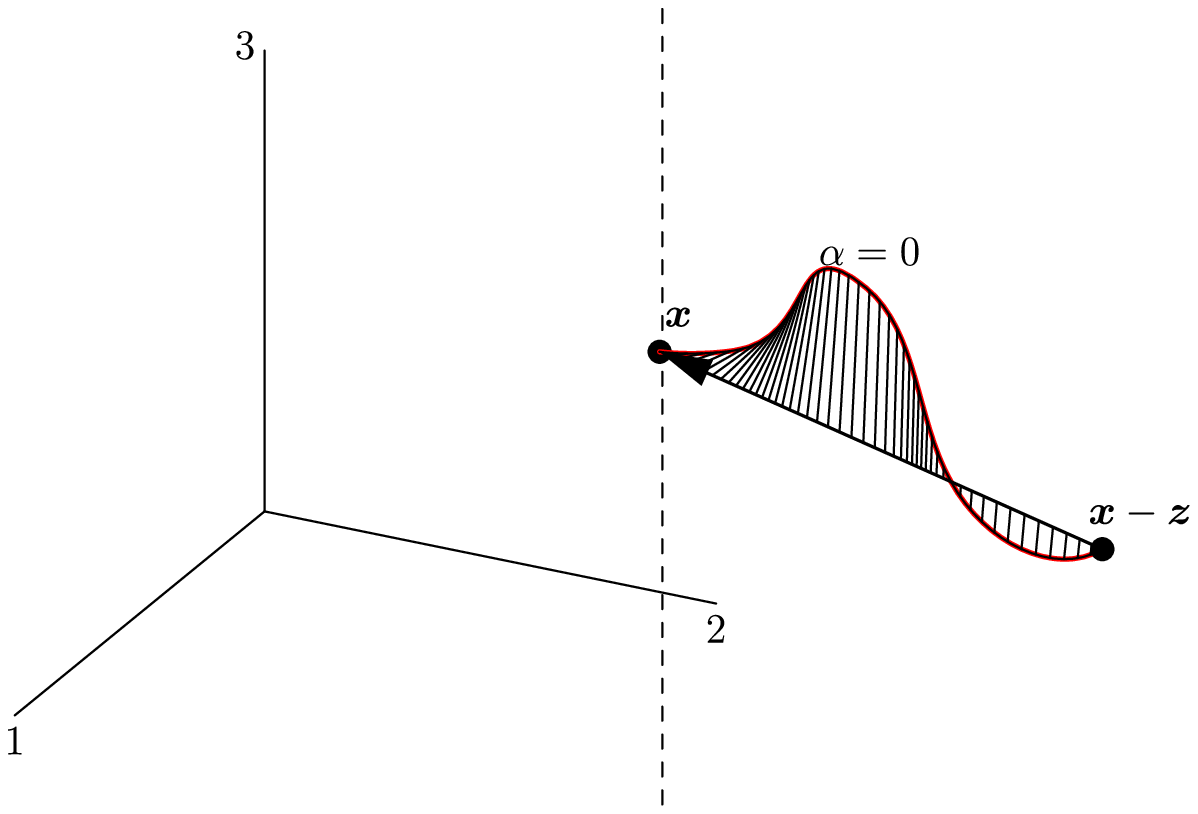}} \\
\subfigure[]{\includegraphics[totalheight=0.2\textheight]{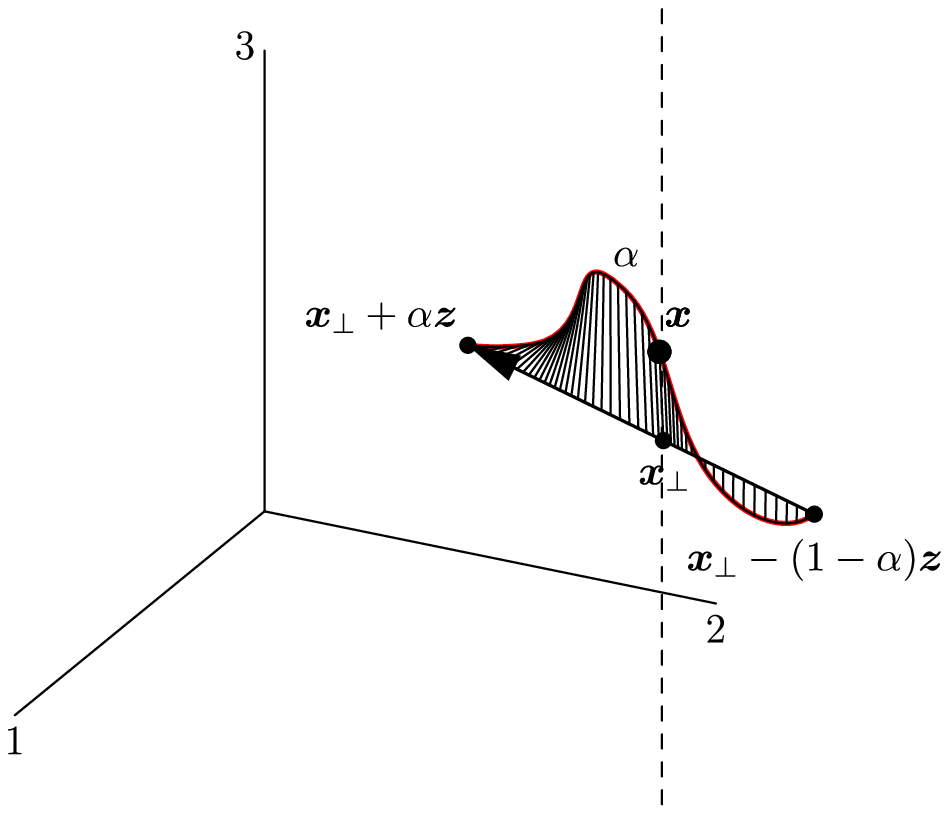}} \\
\subfigure[]{\includegraphics[totalheight=0.2\textheight]{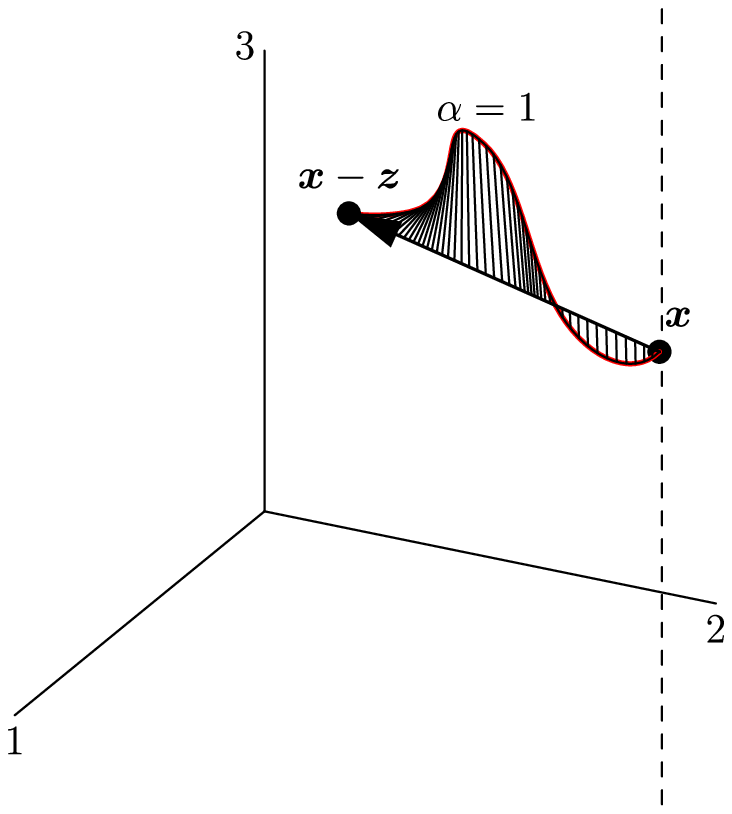}}
\caption{A schematic diagram helping to explain the vectors appearing in the inner integral of \eref{eqn:stress_force_schofield} for a given point $\bm{x}$. The integral in \eref{eqn:stress_force_schofield} in an integral over all possible paths of interaction that pass through the point $\bm{x}$. The inner integral with respect to $s$, with $\bm{z}$ fixed, is an integral over those paths, where $\bm{z}$ is the vector joining its endpoints. Frame (a) shows a path of interaction when $s=0$. As $s$ is increased the path ``slides'' through $\bm{x}$. Frame (b) shows the path for an arbitrary $s$ in the interval $[0,1]$. The end points are represented by $\bm{x}_{\perp}+s\bm{z}$ and $\bm{x}_{\perp}-(1-s)\bm{z}$. Frame (c) shows the position of the path for $s=1$.}
\label{fig:bond}
\end{figure}

\item  Since both \eref{eqn:stress_force_general} and \eref{eqn:stress_force_schofield} are valid definitions for the potential part of the pointwise stress, the question of which one to choose depends on the presence of internal degrees of freedom in each particle. In the absence of internal degrees of freedom only straight bonds are possible due to symmetry. The issue of the pointwise stress being unique only up to a divergence-free tensor-valued function is  partially addressed here, since the expression given by the difference between the two definitions is divergence-free.
\item The expression in \eref{eqn:stress_force_schofield} is very similar to \eref{eqn:stress_force_general}. The pointwise stress at $\bm{x}$ is a superposition of the expectations of the force of all possible \emph{bonds}/\emph{paths of interaction} passing through $\bm{x}$. The vector $\bm{z}$ selects an orientation and the size of the vector connecting the two ends of the bond, and $s$ slides it through $\bm{x}$ from end to end as shown in \fref{fig:bond}. 
\end{enumerate}

\subsection{Definition of the pointwise traction vector}
\label{sec:traction_vector}
In this section, we derive the formula for the pointwise traction vector $\bm{t}(\bm{x},\bm{n};t)$ defined on the surface passing through $\bm{x}$, with normal $\bm{n}$ at time $t$. The following derivation is based on \cite{noll1955} and it can be easily extended to curved paths of interaction. As usual, let $\mathcal{M}$ denote our material system. Let $\Omega \subset \real{3}$ be a domain in three-dimensional space with continuously differentiable surface $\mathcal{S}$, representing a part of the body. By this definition, each of the $N$ point masses described by $\mathcal{M}$ either belong to $\Omega$ or in the space surrounding $\Omega$, denoted by $\Omega^{\rm{c}}$. Let $\bm{f}$ denote the force exerted by the particles in $\Omega^{\rm{c}}$ on particles in $\Omega$. We note that in continuum mechanics $\bm{f}$ is related to $\bm{t}$ by
\begin{equation}
\bm{f}(t) = \int_{\mathcal{S}} \bm{t}(\bm{x},\bm{n},t) \, d\mathcal{S}(\bm{x}),
\label{eqn:continuum_force_1}
\end{equation}
where $\bm{n}(\bm{x})$ is the outer normal at $\bm{x} \in \mathcal{S}$. Using the Cauchy relation, $\bm{t}(\bm{x},\bm{n},t) = \bm{\sigma}(\bm{x},t) \bm{n}$, we obtain
\begin{equation}
\label{eqn:continuum_force_2}
\bm{f}(t) = \int_{\mathcal{S}} \bm{\sigma} \bm{n} \, d\mathcal{S}(\bm{x}).
\end{equation}
Now, note that the net force exerted by $\Omega^{\rm{c}}$ on $\Omega$ due to particle interaction, denoted by $\bm{f}_{\rm{v}}(t)$, is given by
\begin{equation}
\label{eqn:discrete_force_1}
\bm{f}_{\rm{v}}(t) = \sum_{\substack{\alpha,\beta \\  \alpha \neq \beta}} \int_{\bm{u} \in \Omega} \int_{\bm{v} \in \Omega^{\rm{c}}} \left \langle \bm{f}_{\alpha\beta} W \mid \bm{x}_\alpha = \bm{u}, \bm{x}_\beta = \bm{v} \right \rangle \, d\bm{u} \, d\bm{v},
\end{equation}
where $\bm{f}_{\alpha\beta}$ is defined in \eref{eqn:define_fij}.
Since the integrand in \eref{eqn:discrete_force_1} satisfies all the conditions for the application of the lemmas in Appendix \ref{ch:noll}, we can now use a special case of Lemma \ref{lem3} by restricting to straight bonds.\footnote{Specifically, for straight bonds, we set: $\bm{x}_\perp = \bm{x}$ and $\bm{Q}_{\bm{z}}\Upsilon'_{\vnorm{\bm{z}}}(s) = -\bm{z}$.} We therefore have,
\begin{align}
\label{eqn:discrete_force_2}
&\bm{f}_{\rm{v}}(t) = \frac{1}{2} \sum_{\substack{\alpha,\beta \\  \alpha \neq \beta}} \notag \\
& \int_{\mathcal{S}} \int_{\real{3}} \int_{s= 0}^{1} \left \langle -\bm{f}_{\alpha\beta} W \mid \bm{x}_\alpha = \bm{x} + s\bm{z}, \bm{x}_\beta = \bm{x} - (1-s) \bm{z} \right \rangle (\bm{z} \cdot \bm{n}) \, ds\, d\bm{z} \, d\mathcal{S}(\bm{x}).
\end{align}
We now note that $\bm{f}_{\rm{v}}$ in \eref{eqn:discrete_force_2} exactly satisfies
\begin{equation}
\label{eqn:discrete_force_3}
\bm{f}_{\rm{v}}(t) = \int_{\mathcal{S}} \bm{\sigma}_{\rm{v}} \bm{n} \, d\mathcal{S}(\bm{x}),
\end{equation}
where $\bm{\sigma}_{\rm{v}}$ is given by \eref{eqn:stress_force_general}.
It is therefore clear that $\bm{f}_{\rm{v}}$ describes the potential part of the interaction force $\bm{f}$. Hence, it is natural to assign a potential part of the pointwise traction vector, $\bm{t}_{\rm{v}}$ to $\bm{f}_{\rm{v}}$, given by
\begin{align}
\bm{t}_{\rm{v}}&(\bm{x},\bm{n};t) := \bm{\sigma}_{\rm{v}} \bm{n} \notag \\
&= \frac{1}{2} \sum_{\substack{\alpha,\beta \\  \alpha \neq \beta}} \int_{\real{3}} \int_{s= 0}^{1} \left \langle -\bm{f}_{\alpha\beta} W \mid \bm{x}_\alpha = \bm{x} + s\bm{z}, \bm{x}_\beta = \bm{x} - (1-s) \bm{z} \right \rangle (\bm{z} \cdot \bm{n}) \, ds\, d\bm{z}.
\label{eqn:discrete_traction_force}
\end{align}
The above formula is conceptually quite simple. 
\emph{It gives the measure of the force per unit area of all the bonds that cross the surface, where the force is calculated with respect to a surface measure (see {\rm \eref{eqn:discrete_force_2}})}.
Using this viewpoint, we motivate the definitions for the macroscopic traction vector and the stress tensor, when we incorporate spatial averaging in the next section. 

It is now natural to assign the kinetic contribution to the force across the surface to the kinetic part of the pointwise stress tensor. Subtracting \eref{eqn:discrete_force_3} from \eref{eqn:continuum_force_2}, we obtain the kinetic contribution to the force across a surface,
\[
\bm{f}_{\rm{k}}(t) := \int_{\mathcal{S}} (\bm{\sigma} - \bm{\sigma}_{\rm{v}}) \bm{n} \, d\mathcal{S}(\bm{x})
= \int_{\mathcal{S}} \bm{\sigma}_{\rm{k}} \bm{n} \, d\mathcal{S}(\bm{x}).
\]
Therefore the kinetic contribution to the pointwise traction vector $\bm{t}_{\rm{k}}$ is given by
\begin{align}
\bm{t}_{\rm{k}}(\bm{x},\bm{n};t) &:= \bm{\sigma}_{\rm{k}} \bm{n} \notag \\
&= -\sum_\alpha m_\alpha \left \langle \bm{v}_\alpha^{\rm{rel}} (\bm{v}_{\alpha}^{\rm{rel}} \cdot \bm{n}) W \mid \bm{x}_\alpha = \bm{x} \right \rangle. \label{eqn:discrete_traction_kinetic}
\end{align}

Finally, we note that the definitions of $\bm{t}_{\rm{v}}$ and $\bm{t}_{\rm{k}}$ are functions of $\bm{x}$ and $\bm{n}$ alone. Hence, this result is related to the work of Fosdick and Virga \cite{fosdick1989}, who give a variational proof for the stress theorem of Cauchy in the continuum version. In that work the traction vector is allowed to depend on the unit normal and the surface gradient and is shown to be independent of the surface gradient. 

\medskip
The fields defined and derived in this section are pointwise quantities. In the next section, expressions for macroscopic fields are obtained by spatially averaging the pointwise fields over an appropriate macroscopic domain.

\section{Spatial averaging}
\label{ch:spatial}
In the previous section, the Irving--Kirkwood--Noll procedure was used to construct pointwise fields from the underlying discrete microscopic system using the principles of classical statistical mechanics. Although the resulting fields resemble the continuum mechanics fields and satisfy the continuum conservation equations, they are not macroscopic continuum fields. For example, the pointwise stress field in \eref{eqn:stress_force_general}, at sufficiently low temperature, will be highly non-uniform, exhibiting a criss-cross pattern with higher stresses along bond directions, even when macroscopically the material is nominally under uniform or even zero stress.

To measure the fields derived in the previous section in an experiment, one needs a probe which can extract data only from a single point of interest in space. Since this is not possible practically, there is no way we can correlate the experimental data with theoretical predictions. Therefore a true macroscopic quantity is by necessity an average over some spatial region surrounding the continuum point where it is nominally defined.\footnote{We do not include time averaging, because this is indirectly performed due to the presence of $W$. The reasoning for this comes from the \emph{frequentist's} interpretation of probability, wherein the probability of a state is equal to the fraction of the total time spent by the system in that state.} Thus, if $f(\bm{x},t)$ is an Irving--Kirkwood--Noll pointwise field, such as density or stress, the corresponding macroscopic field ${f}_w(\bm{x},t)$ is given by
\begin{equation}
\label{eqn:define_se_0}
f_w(\bm{x},t) = \int_{\real{3}} w(\bm{y} - \bm{x}) f(\bm{y},t) \, d\bm{y},
\end{equation}
where $w(\bm{r})$ is a weighting function representing the properties of the probe and its lengthscale.

The important thing to note is that due to the linearity of the phase averaging in the Irving--Kirkwood--Noll procedure, the averaged macroscopic function ${f}_w(\bm{x},t)$ satisfies the same balance equations as does the pointwise measure $f(\bm{x},t)$.

\subsubsection*{Weighting function} The weighting function $w(\bm{r})$ is an $\mathbb{R}^+$-valued function with compact support so that $w(\bm{r})=0$ for $\vnorm{\bm{r}} > \lambda$, where $\lambda$ is a microstructural lengthscale. The weighting function has units of $\rm{volume}^{-1}$ and must satisfy the normalization condition
\begin{equation}
\label{eqn:normal}
\int_{\real{3}} w(\bm{r}) d\bm{r} = 1.
\end{equation}
This condition ensures that the correct macroscopic stress is obtained when the pointwise stress is uniform. For a spherically-symmetric distribution, $w(\bm{r}) = \hat{w}(r)$, where $r=\vnorm{\bm{r}}$. The normalization condition in this case is
\[
\int_{0}^{\infty} \hat{w}(r)4 \pi r^2 dr = 1.
\]
The simplest choice for $\hat{w}(r)$ is a spherically-symmetric uniform distribution over a specified radius $r_w$, given by
\begin{equation}
\label{eqn:constant_w}
\hat{w}(r) = \left \{ \begin{array}{ll}
1/V_w & \mbox{if $r \leq r_w$},\\
0 & \mbox{otherwise},\end{array} \right. 
\end{equation}
where $V_w = \frac{4}{3}\pi r_{w}^{3}$ is the volume of the sphere. This function is discontinuous at $r=r_w$. If this is a concern, a "mollifying" function that smoothly takes $w(r)$ to zero at $r_w$ over some desired range can be added \cite{murdoch2007}.\footnote{An example of a mollifying function is given later in equation \eref{eqn:weight_molly}.}  Another possible choice for $\hat{w}(r)$ is a Gaussian function \cite{hardy1982}
\begin{equation}
\label{eqn:gaussian_w}
\hat{w}(r) = \pi^{-\frac{3}{2}} r_{w}^{-3} \exp \left [ -r^2/r_{w}^{2} \right ].
\end{equation}
This function does not have compact support. However it decays rapidly with distance so that a numerical cutoff can be imposed where its value drops below a specified tolerance. Another possibility is a quartic spline used in meshless method applications (where it is called a \emph{kernel function} \cite{belytchko1996}),
\begin{equation}
\hat{w}(r) = \left \{ \begin{array}{ll}
\frac{105}{16 \pi r_{w}^{3}}(1+3\frac{r}{r_w})(1- \frac{r}{r_w})^3 & \mbox{if $r \leq r_w$},\\
0 & \mbox{otherwise}.\end{array} \right.
\label{eqn:qspline_w}
\end{equation}
This spline has the advantage that it goes smoothly to zero at $r=r_w$, i.e., $\hat{w}(r_w)=0$, $\hat{w}'(r_w)=0$, and $\hat{w}''(r_w)=0$. \fref{fig:weight} shows the plots of the three weighting functions given above.

\begin{figure}
\centering
\includegraphics[totalheight=0.2\textheight]{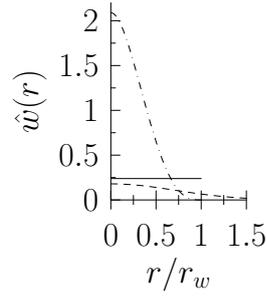}
\caption{Three weighting functions for spatial averaging: uniform weighting (solid line) in \eref{eqn:constant_w}; Gaussian weighting (dashed line) in \eref{eqn:gaussian_w}; Quartic spline weighting (dash-dot line) in \eref{eqn:qspline_w}. Note that the areas under the curves are not equal because the normalization in \eref{eqn:normal} is according to volume.}
\label{fig:weight}
\end{figure}

\subsection{Spatial averaging and macroscopic fields}
\label{sec:spatial_average}
Continuum fields such as density and momentum density fields are defined using \eref{eqn:define_se_0} as the ensemble average via the probability density function $W$, followed by a spatial average via the weight function $w$ as follows:
\begin{align} 
\rho_w(\bm{x},t) &:= \sum_{\alpha} m_\alpha \int_{\real{3}}w(\bm{y}-\bm{x})\langle W\mid \bm{x}_\alpha=\bm{y} \rangle \, d\bm{y},  \label{eqn:define_se_1} \\
\bm{p}_w(\bm{x},t) &:= \sum_{\alpha} m_\alpha \int_{\real{3}}w(\bm{y}-\bm{x}) \langle W \bm{v}_\alpha \mid \bm{x}_\alpha = \bm{y} \rangle \, d\bm{y}.  \label{eqn:define_se_2}
\end{align}
It is straightforward to show that using definitions, \eref{eqn:define_se_1} and \eref{eqn:define_se_2}, the macroscopic version of the generalized pointwise stress tensor given by \eref{eqn:stress_force_schofield} divides into potential and kinetic parts as,
\begin{align}
\stress_{w,\rm{v}}(\bm{x},t) &= \frac{1}{2} \sum_{\substack{\alpha,\beta \\  \alpha \neq \beta}} \int_{\real{3}}{w(\bm{y} - \bm{x})} \int_{\real{3}}  \int_{s=0}^{1} \notag \\
&\langle \bm{f}_{\alpha\beta} W \mid \bm{x}_\alpha=\bm{y}_{\perp} + s \bm{z},\bm{x}_\beta = \bm{y}_{\perp} - (1-s)\bm{z} \rangle \otimes \Qz \bm{\Upsilon}'_{\vnorm{\bm{z}}}(s) \, ds \, d\bm{z} \, d\bm{y} \label{eqn:stress_force_w},
\end{align}
where $\bm{f}_{\alpha\beta}$ is defined in \eref{eqn:define_fij}, $\bm{y}_{\perp} = \bm{y}-s\bm{z} - \Qz \bm{\Upsilon}_{\vnorm{\bm{z}}}(s)$, $\Qz \in \bm{SO}(3)$, and
\begin{equation}
\stress_{w,\rm{k}}(\bm{x},t) = -\sum_{\alpha} \int_{\real{3}} w(\bm{y} - \bm{x}) m_\alpha \langle (\bm{v}_{\alpha}^{\rm{rel}} \otimes \bm{v}_{\alpha}^{\rm{rel}}) W \mid \bm{x}_\alpha = \bm{y} \rangle \, d\bm{y} 
\label{eqn:stress_kinetic_w}.
\end{equation}
We now intend to express the potential part of stress in a more convenient form. This is done by two consecutive changes of variables. Under the assumption that $\Qz$ and $\bm{\Upsilon}_{\vnorm{\bm{z}}}$ are differentiable with respect to $\bm{z}$ and $\vnorm{\bm{z}}$, respectively, the Jacobian of the transformation $(s,\bm{y},\bm{z}) \mapsto (s,\bm{y}_{\perp},\bm{z})$ is unity. Therefore,
\begin{align}
\stress_{w,\rm{v}}(\bm{x},t) &= \frac{1}{2} \sum_{\substack{\alpha,\beta \\  \alpha \neq \beta}} \int_{\real{3}}{w(\bm{y} - \bm{x})} \int_{\real{3}} \int_{s=0}^{1} \notag \\
&\langle \bm{f}_{\alpha\beta} W \mid \bm{x}_\alpha=\bm{y}_{\perp}+s\bm{z},\bm{x}_\beta = \bm{y}_{\perp} - (1-s)\bm{z} \rangle \otimes \Qz \bm{\Upsilon}'_{\vnorm{\bm{z}}} \, ds\, d\bm{z} \, d\bm{y}_{\perp}, \label{eqn:stress_force_w_1}  
\end{align}
where $\bm{y} = \bm{y}(s,\bm{y}_{\perp},\bm{z})$. A second change of variables is introduced as follows
\begin{equation}
\bm{y}_{\perp} + s\bm{z} = \bm{u},\qquad \bm{y}_{\perp} - (1-s)\bm{z} = \bm{v}, \label{eqn:var_change_1}
\end{equation}
which implies,
\begin{equation}
\bm{z}=\bm{u}-\bm{v},\qquad \bm{y}_{\perp}=(1-s)\bm{u} + s\bm{v}. \label{eqn:var_change_2}
\end{equation}
The Jacobian of the transformation is
\begin{equation}
J = \det \left [ \begin{array} {cc}
\nabla_{\bm{u}} \bm{z}  & \nabla_{\bm{v}} \bm{z} \\   
\nabla_{\bm{u}} \bm{y}_{\perp}  & \nabla_{\bm{v}} \bm{y}_{\perp}
\end{array} \right ] = \det \left [ \begin{array}{cc}
  \bm{I}           & -\bm{I}       \\
(1-s) \bm{I}  & s\bm{I} 
\end{array} \right] = 1.
\label{eqn:jacobian} 
\end{equation}
Using \eref{eqn:var_change_1}, \eref{eqn:var_change_2} and \eref{eqn:jacobian} to rewrite \eref{eqn:stress_force_w_1}, we obtain
\begin{equation}
\stress_{w,\rm{v}}(\bm{x},t) = \frac{1}{2} \sum_{\substack{\alpha,\beta \\  \alpha \neq \beta}} \int_{\real{3} \times \real{3}} \langle -\bm{f}_{\alpha\beta} W \mid \bm{x}_\alpha=\bm{u},\bm{x}_\beta=\bm{v} \rangle \otimes \bm{\mathfrak{b}}(\bm{x};\bm{u},\bm{v}) \, d\bm{u} \, d\bm{v},
\label{eqn:stress_force_w_hardy}
\end{equation}
where
\begin{equation}
\label{eqn:bond_vector}
\bm{\mathfrak{b}}(\bm{x};\bm{u},\bm{v}) := -\int_{s=0}^{1} w(\hat{\bm{y}}-\bm{x}) \bm{Q}_{\bm{u}-\bm{v}} \bm{\Upsilon}'_{\vnorm{\bm{u} - \bm{v}}} \, ds
\end{equation}
is called the \emph{bond vector}, with
\begin{equation}
\notag
\hat{\bm{y}}(s,\bm{u},\bm{v}) = \bm{y}(s, \bm{y}_{\perp}(s,\bm{u},\bm{v}),\bm{z}(\bm{u},\bm{v})).
\end{equation}
For the special case of straight bonds, we have 
\[
\hat{\bm{y}} = (1-s) \bm{u} + s \bm{v}
\quad\text{and}\quad
\bm{Q}_{\bm{u}-\bm{v}} \bm{\Upsilon}'_{\vnorm{\bm{u}-\bm{v}}}(s) = -(\bm{u}-\bm{v}). 
\]
Therefore the bond vector simplifies to
\begin{align}
\bm{\mathfrak{b}}(\bm{x};\bm{u},\bm{v}) &=(\bm{u} - \bm{v}) \int_{s=0}^{1} w((1-s) \bm{u} + s\bm{v} - \bm{x}) \, ds \notag \\
&= (\bm{u} - \bm{v}) b(\bm{x};\bm{u},\bm{v}), \label{eqn:bond_function}
\end{align}
where $b(\bm{x};\bm{u},\bm{v})$ is commonly referred to as the \emph{bond function}. The geometrical significance of the bond function is explained in \fref{fig:bond_function}. 

\begin{figure}
\centering
\includegraphics[totalheight=0.2\textheight]{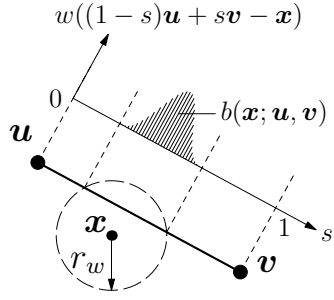}
\caption{The bond function $b(\bm{x};\bm{u},\bm{v})$ is the integral of the weighting function centered at $\bm{x}$ along the line connecting points $\bm{u}$ and $\bm{v}$. The graph shows the result for a quartic spline weighting function. The bond function is the area under the curve.}
\label{fig:bond_function}
\end{figure}

For the special case of straight bonds, equation \eref{eqn:stress_force_w_hardy} simplifies to
\begin{equation}
\label{eqn:stress_force_w_hardy_straight}
\stress_{w,\rm{v}}(\bm{x},t) = \frac{1}{2} \sum_{\substack{\alpha,\beta \\  \alpha \neq \beta}} \int_{\real{3} \times \real{3}} \langle -\bm{f}_{\alpha\beta} W \mid \bm{x}_\alpha=\bm{u},\bm{x}_\beta=\bm{v} \rangle \otimes (\bm{u}-\bm{v}) b(\bm{x};\bm{u},\bm{v}) \, d\bm{u} \, d\bm{v}.
\end{equation}
The expressions for the potential and kinetic parts of the spatially-averaged stress tensor in equations \eref{eqn:stress_kinetic_w} and \eref{eqn:stress_force_w_hardy_straight} are our main result and constitute the general definitions for the macroscopic stress computed for a discrete system.  It will be shown in \sref{ch:compare} that these relations reduce to the Hardy stress tensor \cite{hardy1982} under suitable approximations. The issue of the uniqueness of the stress tensor (in the sense that any divergence-free field can be added to it) is deferred to \sref{sec:unique_macro_stress}.

\subsection{Comparison with the Murdoch--Hardy procedure}
\label{sec:murdoch_proc}
An alternative procedure of defining continuum fields to the one described above, due to Murdoch \cite{murdoch1982,murdoch1993,murdoch1994} and Hardy \cite{hardy1982}, only involves spatial averaging. We refer to this approach as the \emph{Murdoch--Hardy procedure}. Under the Murdoch--Hardy procedure, continuum fields are defined as direct spatial averages of microscopic variables without incorporating statistical mechanics ideas. Therefore, the Murdoch--Hardy procedure is purely deterministic in nature. For example, the density and momentum density fields at a particular instant of time, corresponding to a given weighting function $w$, are defined as
\begin{subequations}
\label{eqn:define_s}
\begin{align} 
\rhomh_w(\bm{x},t) &:= \sum_{\alpha} m_\alpha w( \bm{x}_\alpha(t) - \bm{x}),  \label{eqn:define_s_1} \\
\pmh_w(\bm{x},t) &:= \sum_{\alpha} m_\alpha \bm{v}_\alpha(t) w( \bm{x}_\alpha(t) - \bm{x} ),  \label{eqn:define_s_2}
\end{align}
\end{subequations}
respectively, where $\bm{x}_\alpha$ and $\bm{v}_\alpha$ are deterministic quantities. We denote spatially-averaged variables obtained from the Murdoch--Hardy procedure with a superposed tilde to distinguish them from quantities obtained in \sref{sec:spatial_average}. Equation \eref{eqn:define_s} is used to ``smear'' a discrete system to form a continuum. The reasoning for abandoning statistical mechanics is the lack of knowledge of the ensemble of the system as explained by Murdoch and Bedeaux in \cite{murdoch1993}:
\begin{quotation}
Physical interpretations of any given ensemble average clearly depends on the definition of the ensemble $\dots$ for example, if a container is filled to a given level with water and then poured onto a surface, the lack of precision with which the pouring is effected may result in many different macroscopic flows. Here no single description is available within deterministic continuum mechanics: in this case the ensemble (defined in terms of the water molecules and limited knowledge of how the pouring takes place) relations involve averages associated with all possible flows. Clearly relations involving ensemble averages are associated with a much greater variety of behavior than is describable in terms of deterministic continuum mechanics.
\end{quotation}
We share the same concern regarding the ambiguity in the definition of an ensemble. For example, in an experiment where an austenite-martensite phase transformation occurs, the resulting micro-structure consists of a complex spatial configuration of martensitic variants, and this depends largely on the microscopic details of the system, such as cracks, lattice defects, etc. Therefore, in this case, macroscopic variables cannot completely describe the ensemble of interest. To avoid this difficulty, Murdoch proposes a time average in place of ensemble average. Nevertheless, it should be noted that from classical statistical mechanics, the ensemble of interest and its corresponding distribution exists \emph{in principle}. Therefore the framework described in \sref{sec:spatial_average} is a correct framework in which to phrase the problem. A practical calculation can then be performed, for example, by replacing the ensemble averages with time averages in a molecular dynamics calculation (see \sref{ch:compare}). We stress the importance of writing a continuum field variable as an ensemble average followed by spatial average, rather than a spatial average followed by a time average, as is done in the Murdoch--Hardy procedure, because it helps to give a unified picture of all the previous definitions for continuum fields and stress in particular. This is discussed in the next section.

It is interesting to note that by relaxing the connection with statistical mechanics, the Murdoch--Hardy procedure allows for a much wider class of definitions for the stress tensor \cite{murdoch2007} in addition to the non-uniqueness characterized so far, due to the presence of multiple extensions for the potential energy and allowing non-straight bonds. In this section we intend to systematize this procedure. The source of non-uniqueness resulting from multiple definitions of the stress tensor is studied, thus helping us to identify a much larger class of possible definitions. In this new systematic approach, the steps involved in the Murdoch--Hardy procedure are as follows:
\begin{enumerate}
\item Develop a continuum system by smearing out the discrete system using \eref{eqn:define_s}.
\item Introduce a \emph{non-local} constitutive law for the continuum that is consistent with the discrete version of force balance given later in \eref{eqn:balance}.
\item For each constitutive law, define a stress tensor, which satisfies the equation of motion for the continuum.
\end{enumerate}

To understand the above three steps, we explore the Murdoch--Hardy procedure in more detail. The continuity equation is satisfied in a trivial way \cite{murdoch2003}. We now look at the equation of motion.

\subsubsection*{Equation of motion}
The motion of particle $\alpha$ is governed by Newton's second law,
\begin{equation}
\label{eqn:balance}
\sum_{\substack{\beta \\ \beta \ne \alpha}} \bm{f}_{\alpha\beta}^{\rm d}(t) + \bm{b}_\alpha^{\rm d}(t) = m_\alpha \dot{\bm{v}}_\alpha(t),
\end{equation}
where $\bm{f}_{\alpha\beta}^{\rm d}(t) := \bm{f}_{\alpha\beta}(\bm{u}(t))$, $\bm{f}_{\alpha\beta}(\bm{u})$ are the terms in the central-force decomposition obtained from a multi-body potential with an extension (see \sref{sec:s_motion}) and $\bm{b}_{\alpha}^{\rm d}(t)$ is defined as
\begin{equation}
\notag
\bm{b}_\alpha^{\rm d}(t) := -\nabla_{\bm{x}_\alpha} \pot_{\rm{ext}}(\bm{x}_1(t),\dots,\bm{x}_N(t)).
\end{equation}
The superscript ``$\rm d$'' in \eref{eqn:balance} and the above equation are used to the stress the fact that the quantities are deterministic in nature. Equation \eref{eqn:balance} is a force balance equation for the discrete system. We now design an analogous force balance equation for the smeared continuum defined by \eref{eqn:define_s}, such that \eref{eqn:balance} always holds. 

For the sake of simplicity in notation, from here onwards we use $\bm{f}_{\alpha\beta}$ to denote both $\bm{f}_{\alpha\beta}(\bm{u})$ and $\bm{f}_{\alpha\beta}^{\rm d}(t)$, whenever it is clear from the context. The same goes with the usage of $\bm{b}_\alpha(t)$ for $\bm{b}_{\alpha}^{\rm d}(t)$.

\subsubsection*{Force balance for the smeared continuum}
Multiplying \eref{eqn:balance} by $w(\bm{x}_i-\bm{x})$ and summing over all particles, we have
\begin{equation}
\label{eqn:weighted_balance_1}
\fmh_w + \bmh_w = \sum_{\alpha} m_\alpha \dot{\bm{v}}_\alpha w(\bm{x}_\alpha - \bm{x}),
\end{equation}
where 
\begin{align}
\label{eqn:f_w_1}
\fmh_w(\bm{x},t) &:= \sum_{\substack{\alpha,\beta \\  \alpha \neq \beta}} \bm{f}_{\alpha\beta}(t)w(\bm{x}_\alpha(t) - \bm{x}), \\
\bmh_w(\bm{x},t) &:= \sum_{\beta} \bm{b}_\beta(t) w(\bm{x}_\beta(t) - \bm{x}).
\label{eqn:b_w}
\end{align}
To arrive at a form similar to the equation of motion of continuum mechanics given in \eref{eqn:motion}, equation \eref{eqn:weighted_balance_1} is rewritten as
\begin{align}
\fmh_w + \bmh_w &= \frac{\partial}{\partial t} \sum_{\alpha} m_\alpha w(\bm{x}_\alpha-\bm{x}) \bm{v}_\alpha  - \sum_{\alpha} m_\alpha \bm{v}_\alpha (\nabla w(\bm{x}_\alpha - \bm{x}) \cdot \bm{v}_\alpha) \notag \\
&= \frac{\partial \pmh_w}{\partial t} + \divr_{\bm{x}} \sum_{\alpha} m_\alpha w(\bm{x}_\alpha - \bm{x}) \bm{v}_\alpha \otimes \bm{v}_\alpha.  \label{eqn:weighted_balance_*}
\end{align}
Similar to \eref{eqn:define_velocity}, we define the continuum velocity as 
\begin{equation}
\label{eqn:cont_velocity}
\vmh_w(\bm{x},t) := \frac{\pmh_w(\bm{x},t)}{\rhomh_w(\bm{x},t)},
\end{equation}
and the relative velocity of a particle with respect to the continuum velocity as
\begin{equation}
\label{eqn:rel_velocity}
\vmh_{\alpha}^{\rm{rel}}(\bm{x},t) := \bm{v}_{\alpha}(t) - \vmh_w(\bm{x},t).
\end{equation}
Using \eref{eqn:define_s} and \eref{eqn:rel_velocity}, we obtain
\begin{align}
\sum_{\alpha} m_\alpha \vmh_{\alpha}^{\rm{rel}} w(\bm{x}_{\alpha}(t) - \bm{x}) &= \pmh_w(\bm{x},t) - \rhomh_w(\bm{x},t) \vmh_w(\bm{x},t) \notag \\
&= \bm{0}, \notag
\end{align}
the last equality being true by the definition \eref{eqn:cont_velocity}. From \eref{eqn:rel_velocity} and the above equation, it follows that
\begin{equation}
\notag
\sum_{\alpha} m_\alpha w(\bm{x}_\alpha - \bm{x}) \bm{v}_{\alpha} \otimes \bm{v}_\alpha  = \sum_{\alpha} m_\alpha w(\bm{x}_\alpha - \bm{x}) \vmh_{\alpha}^{\rm{rel}} \otimes \vmh_\alpha^{\rm{rel}}  + \rhomh_w \vmh_w \otimes \vmh_w.
\end{equation}
Substituting this into \eref{eqn:weighted_balance_*} and rearranging, we have
\[
\fmh_w - \divr_{\bm{x}} \sum_{\alpha} m_\alpha (\vmh_\alpha^{\rm{rel}} \otimes \vmh_\alpha^{\rm{rel}}) w(\bm{x}_\alpha - \bm{x}) + \bmh_w = \frac{\partial \pmh_w}{\partial t} + \divr_{\bm{x}} (\rhomh_w \vmh_w \otimes \vmh_w).
\]
Comparing the above equation with the equation of motion, \eref{eqn:motion}, we have
\begin{equation}
\label{eqn:weighted_balance_**}
\divr_{\bm{x}}\stressmh_{w}(\bm{x},t) = \fmh_w(\bm{x},t) - \divr_{\bm{x}} \sum_{\alpha} m_\alpha (\vmh_\alpha^{\rm{rel}} \otimes \vmh_\alpha^{\rm{rel}}) w(\bm{x}_\alpha - \bm{x}),
\end{equation}
where $\stressmh_w$ is the stress tensor corresponding to the weighting function $w$. From \eref{eqn:weighted_balance_**} it is clear that the kinetic part and the potential part of the stress tensor, $\stressmh_{w,\rm{k}}$ and $\stressmh_{w,\rm{v}}$, respectively, are given by
\begin{subequations}
\label{eqn:pde_w_stress}
\begin{align}
\stressmh_{w,\rm{k}}(\bm{x},t) &= - \sum_{\alpha} m_\alpha (\vmh_\alpha^{\rm{rel}} \otimes \vmh_\alpha^{\rm{rel}}) w(\bm{x}_\alpha - \bm{x}), \label{eqn:w_stress_kinetic} \\
\divr_{\bm{x}}\stressmh_{w,\rm{v}}(\bm{x},t) &= \fmh_w(\bm{x},t) \label{eqn:pde_w_stress_force}.
\end{align}
\end{subequations}
Any solution to \eref{eqn:pde_w_stress_force} is a valid candidate for the definition of $\stressmh_{w,\rm{v}}$. Murdoch \cite{murdoch2007} proposes several possible candidates, and highlights the possibility of having multiple definitions. To understand the connection between the different possible definitions, we look back at \eref{eqn:weighted_balance_1} and \eref{eqn:f_w_1}. Equation \eref{eqn:weighted_balance_1} is a force balance equation for any ``continuum particle'' at $\bm{x}$, and $\fmh_w$, defined in \eref{eqn:f_w_1}, is the force per unit volume acting on it. It is not immediately clear from \eref{eqn:f_w_1} how two continuum particles at positions $\bm{x}$ and $\bm{y}$ interact with each other. This interaction can be given by a non-local constitutive law. The main idea is to recast \eref{eqn:f_w_1} as
\begin{equation}
\label{eqn:interaction}
\fmh_w(\bm{x},t) = \int_{\real{3}} \bm{g}(\bm{x},\bm{y},t) \, d\bm{y},
\end{equation}
for some $\bm{g}(\bm{x},\bm{y},t)$, which we call the \emph{generator of the non-local constitutive law}. This function describes the interaction between the continuum particles at $\bm{x}$ and $\bm{y}$. To satisfy Newton's third law, we also need $\bm{g}$ to be anti-symmetric with respect to its arguments $\bm{x}$ and $\bm{y}$. Unfortunately the representation given in \eref{eqn:interaction} is not unique and, since every choice of $\bm{g}$ leads to a different stress definition, this is one of the sources of non-uniqueness in the definition for the stress tensor in the Murdoch--Hardy procedure. We describe two different constitutive laws, which lead to the Hardy stress and the doubly-averaged stress (\murdoch\ stress)\footnote{Murdoch \cite{murdoch2007} refers to this stress as ``Noll's choice''. To avoid confusion with the stress derived through the Irving--Kirkwood--Noll procedure in \sref{ch:phase}, we name it the ``\murdoch\ stress''.} \cite{murdoch1994}.

\medskip
For the case of Hardy stress, the generator $\bm{g}^{\rm{H}}$ is given by the equation
\begin{equation}
\label{eqn:g_1}
\bm{g}^{\rm{H}}(\bm{x},\bm{y},t) = \sum_{\substack{\alpha,\beta \\  \alpha \neq \beta}} \bm{f}_{\alpha\beta} w(\bm{x}_\alpha - \bm{x}) \delta(\bm{x}_\beta - \bm{x}_\alpha + \bm{x} - \bm{y}),
\end{equation}
where $\delta$ denotes the Dirac delta distribution.

\medskip
The generator $\bm{g}^{\rm{D}}$ for the \murdoch\ stress is given by
\begin{equation}
\label{eqn:g_2}
\bm{g}^{\rm{D}}(\bm{x},\bm{y},t) = \sum_{\substack{\alpha,\beta \\  \alpha \neq \beta}} \bm{f}_{\alpha\beta} w(\bm{x}_\alpha - \bm{x}) w(\bm{x}_\beta - \bm{y}).
\end{equation}

\begin{figure}
\begin{center}
\subfigure[]{\label{fig:hardy}\includegraphics[totalheight=0.2\textheight]{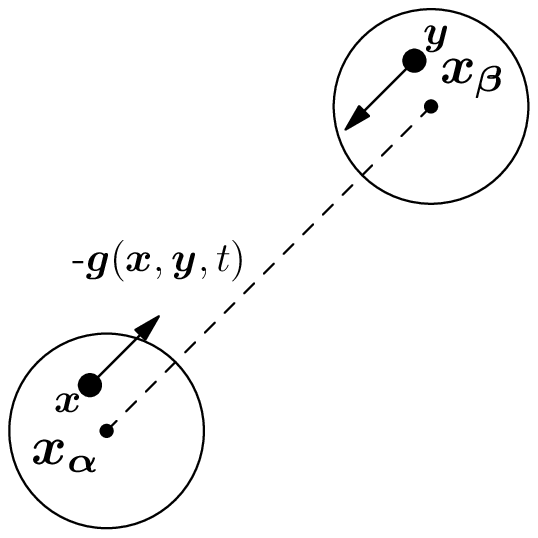}}
\subfigure[]{\label{fig:murdoch}\includegraphics[totalheight=0.2\textheight]{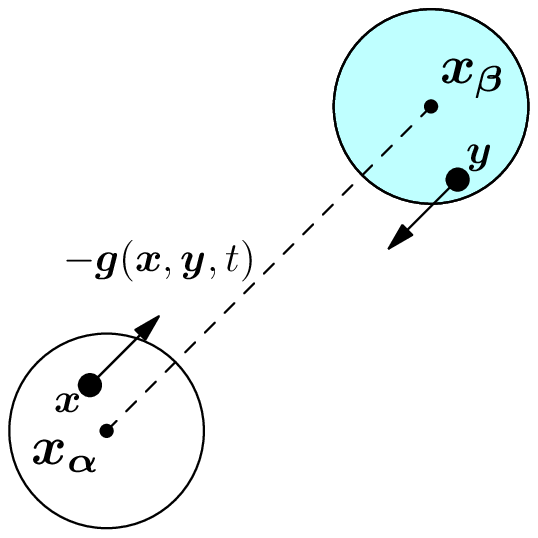}}
\subfigure[]{\label{fig:admal}\includegraphics[totalheight=0.2\textheight]{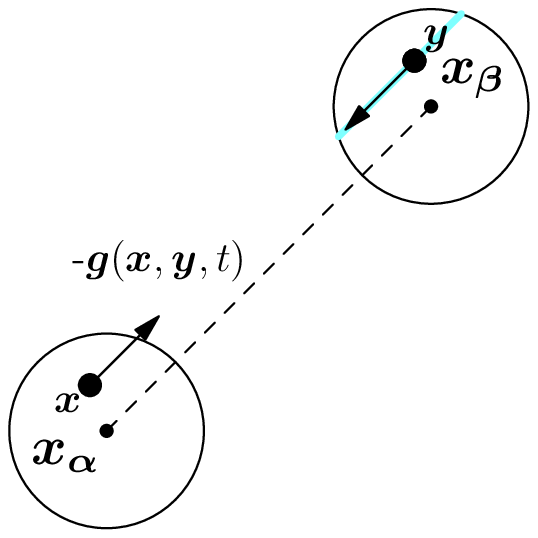}}
\end{center}
\caption{A continuum particle $\bm{x}$ interacts with: (a) only that continuum particle at $\bm{y}$, which is identically oriented to $\bm{x}_\beta$ as $\bm{x}$ is oriented to $\bm{x}_\alpha$, when the interaction is given by $\bm{g}^{\rm{H}}$; (b) any continuum particle in the shaded region, when the interaction is given by $\bm{g}^{\rm{D}}$;  (c) any continuum particle on the shaded line, when the interaction is given by $\bm{g}^{\rm{HD}}$.}
\label{fig:generators}
\end{figure}

\smallskip
\fref{fig:generators} shows the interaction between two continuum particles, with positions $\bm{x}$ and $\bm{y}$, that are in a neighborhood of two interacting particles $\alpha$ and $\beta$ respectively and not in a neighborhood of any other particle in the system. In this setup, it is clear from the generator for the Hardy stress, given in \eref{eqn:g_1}, that two continuum particles at $\bm{x}$ and $\bm{y}$ interact only when $\bm{y}-\bm{x}_\beta = \bm{x} - \bm{x}_\alpha$, as shown in \fref{fig:hardy}. On the other hand, there is no such restriction on the generator for the \murdoch\ stress, described by \eref{eqn:g_2} (see \fref{fig:murdoch}).\footnote{Note that for the \murdoch\ stress, the force between two continuum particles at $\bm{x}$ and $\bm{y}$ is \emph{not} parallel to $\bm{x} - \bm{y}$ in general. This is not a violation of  the strong law of action and reaction, because the strong law only applies to discrete systems. It has been used in this derivation by requiring that $\bm{f}_{\alpha\beta} = -\bm{f}_{\beta\alpha}$ and $\bm{f}_{\alpha\beta} \times (\bm{x}_\alpha - \bm{x}_\beta) = \bm{0}$.}  Although at this point there is no systematic way of suggesting additional possible generators, we can suggest a third generator, $\bm{g}^{\rm{HD}}$, which has properties that lie in between $\bm{g}^{\rm{H}}$ and $\bm{g}^{\rm{D}}$. As shown in \fref{fig:admal}, when interaction is governed by $\bm{g}^{\rm{HD}}$, a continuum particle $\bm{x}$ interacts with $\bm{y}$ only when $\bm{y}$ lies on the line passing through $\bm{x}$ and parallel to $\bm{x}_\alpha - \bm{x}_\beta$. In all the three cases, the interaction force is always directed along the vector $\bm{x}_\alpha - \bm{x}_\beta$. Therefore by \eref{eqn:interaction} we have three different integral representations for $\fmh_w$ with generators $\bm{g}^{\rm{D}}$, $\bm{g}^{\rm{H}}$, $\bm{g}^{\rm{HD}}$.

Now, in each of the integral representations of $\fmh_w$ given by \eref{eqn:g_1} and \eref{eqn:g_2}, the integrand satisfies all the necessary conditions for the application of Lemma \ref{lem1} or Lemma \ref{lem2} in Appendix \ref{ch:noll}.\footnote{The integrand should be continuously differentiable for Noll's lemma to be applicable. Although $\bm{g}^{\rm{H}}$ is not continuously differentiable due to the presence of the Dirac delta distribution, this does not hinder us from applying the lemma since we can replace the Dirac delta distribution by an appropriate infinitely differentiable delta sequence and take a limit. See Appendix \ref{sec:hardy_limit} for a rigorous derivation of this.\label{fn:mollifier}} For instance, using Lemma \ref{lem1} we obtain an expression for the potential part of the stress tensor, given by
\begin{equation}
\label{eqn:w_stress_force}
\stressmh_{w,\rm{v}}(\bm{x},t) = -\frac{1}{2} \int_{\real{3}} \left [ \int_{s= 0}^{1} \bm{g}(\bm{x}+s\bm{z}, \bm{x}-(1-s)\bm{z}, t) \, ds\right] \otimes \bm{z} \, d\bm{z}.
\end{equation}
Substituting \eref{eqn:g_1} into \eref{eqn:w_stress_force}, we have the potential part of the Hardy stress:
\begin{align}
\stressmh_{w,\rm{v}}^{\rm{H}} &= -\frac{1}{2} \sum_{\substack{\alpha,\beta \\  \alpha \neq \beta}} \int_{\real{3}} \left [ \int_{s= 0}^{1} \bm{f}_{\alpha\beta} w(\bm{x}_\alpha-\bm{x} - s \bm{z}) \delta(\bm{x}_\beta - \bm{x}_\alpha + \bm{z}) \, ds \right ] \otimes \bm{z} \, d\bm{z} \notag \\
&= \frac{1}{2} \sum_{\substack{\alpha,\beta \\  \alpha \neq \beta}} \int_{s= 0}^{1} [-\bm{f}_{\alpha\beta} w((1-s)\bm{x}_\alpha + s\bm{x}_\beta-\bm{x}) \otimes (\bm{x}_\alpha - \bm{x}_\beta)] \, ds. \label{eqn:hardy_stress}
\end{align}
Substituting \eref{eqn:g_2} into \eref{eqn:w_stress_force} we have the potential part of the \murdoch\ stress:
\begin{equation}
\label{eqn:murdoch_stress}
\stressmh_{w,\rm{v}}^{\rm{\murdoch}} =  \frac{1}{2} \sum_{\substack{\alpha,\beta \\ \alpha \ne \beta}} \int_{z \in \real{3}} \int_{s=0}^{1} [-\bm{f}_{\alpha\beta} w(\bm{x}_\alpha - \bm{x} - s\bm{z}) w(\bm{x}_\beta - \bm{x} + (1-s)\bm{z}) \otimes \bm{z}]  \, ds \, d\bm{z},
\end{equation}
which was derived by Murdoch \cite{murdoch1994}. The conclusion is that the non-uniqueness of the generator in the systematized Murdoch--Hardy procedure leads to a non-unique definition for the stress tensor. Further sources of non-uniqueness can by introduced by having a different force decomposition corresponding to a different potential extension, or using curved paths of interaction instead of straight bonds and applying Lemma \ref{lem2}, which is a generalization of Lemma \ref{lem1} in Appendix \ref{ch:noll}. We do not pursue this generalization further here.

\medskip
It is important to point out that the systematized Murdoch--Hardy procedure presented here does \emph{not} describe all possible solutions to \eref{eqn:pde_w_stress_force}. An example of a solution that cannot be obtained via our systematized Murdoch--Hardy procedure is the following definition suggested in \cite{murdoch2007}:
\begin{equation}
\stressmh_{w,\rm{v}}^*(\bm{x},t):=\sum_{\substack{\alpha,\beta \\  \alpha \neq \beta}} \bm{f}_{\alpha\beta} \otimes (\bm{x} - \bm{x}_\alpha) \hat{a}(\vnorm{\bm{x}-\bm{x}_\alpha}),
\label{eqn:mur2}
\end{equation}
where $\hat{a}(u) := \frac{1}{u^3} \int_{0}^{u} s^2 \hat{w}(s) \, ds$. As is pointed out in \cite{murdoch2007}, the expansion in \eref{eqn:mur2} is not a physically-relevant definition for stress due to the following test case. Consider a stationary deformed body at zero temperature (i.e., where the particles occupy fixed positions without vibrating). In this case, the net force acting on any particle is zero. Since $\bm{f}_{\alpha\beta}$ is the only term in the summand of \eref{eqn:mur2} which depends on $\beta$, \eref{eqn:mur2} is equivalent to 
\begin{equation}
\stressmh_{w,\rm{v}}^*(\bm{x},t):=\sum_\alpha (\sum_{\substack{\beta \\ \beta \ne \alpha}} \bm{f}_{\alpha\beta}) \otimes (\bm{x} - \bm{x}_\alpha) \hat{a}(\vnorm{\bm{x}-\bm{x}_\alpha}).
\end{equation}
In our case, $\sum_{\beta}\bm{f}_{\alpha\beta} = \bm{f}_\alpha = \bm{0}$, for each particle $\alpha$ in the \emph{interior} of the body which is considerably away from the surface compared to the interatomic distance. Hence, the only non-zero contribution to the stress is due to those particles close to the surface on which the net force due to other particles is non-zero. Moreover, although $\hat{a}(u)$ decays to zero as $u$ increases, $\hat{a}(u) \ne 0$ for all $u \ne 0$. Thus, there is a non-zero stress at every point $\bm{x} \in \real{3}$ (even outside the body!) due to particles close to the surface of the body, which is obviously not physically reasonable. Nevertheless, $\stressmh_{w,\rm{v}}^*$ is mathematically still a valid definition since it satisfies the force balance equation \cite{murdoch2007}. However, it cannot be derived using the systematized Murdoch--Hardy procedure proposed here. Thus, the systematized Murdoch--Hardy procedure does not lead to all possible definitions that satisfy the force balance equation.

\medskip
Finally, it is also worth noting that the fact that all balance laws are satisfied under the Murdoch--Hardy procedure should not come as a surprise, since $w$ in \eref{eqn:define_s} serves the same purpose as $W$ in \eref{eqn:define_density}. In this view, $w$ is seen as a function defined on a phase space, although one that does not evolve according to a flow described by Hamilton's equations of motion, but still satisfies \eref{eqn:liouville}.\footnote{This is only a mathematical argument. No physical significance should be drawn from this analogy.} The corresponding flow in phase space is described as follows. Continuing with our notation introduced  \eref{eqn:define_X}, let $\bm{\Xi}(0)=(\bm{x}^{\rm s}(0),\bm{v}^{\rm s}(0))=(\bm{x}_1^{\rm s}(0),\dots,\bm{x}_N^{\rm s}(0),\bm{v}_1^{\rm s}(0),\dots,\bm{v}_N^{\rm s}(0))$ denote any arbitrary point in phase space. We add a superscript ``$\rm s$'' to stress the fact that an element in phase space is stochastic in nature.  Consider the flow in phase space given by the mapping 
\begin{align}
\notag
\bm{\Xi}(0) = (\bm{x}^{\rm s}(0),\bm{v}^{\rm s}(0)) \mapsto &(\bm{x}^{\rm s}(0)+\bm{x}(t)-\bm{x}(0), \bm{v}^{\rm s}(0)+\bm{v}(t)-\bm{v}(0)) \\
&=(\bm{x}^{\rm s}(t),\bm{v}^{\rm s}(t)) = \bm{\Xi}(t),
\label{eqn:flowmap}
\end{align}
where the quantities $\bm{x}(t) = (\bm{x}_1(t),\cdots,\bm{x}_N(t)), \bm{v}(t) = (\bm{v}_1(t),\cdots,\bm{v}_N(t))$ denote the position and velocity of the particle and these are assumed to be known. (Typically these quantities are obtained from a molecular dynamics simulation.) Therefore the Murdoch--Hardy procedure can be interpreted as a probabilistic model constructed from the data, $\bm{x}(t)$ and $\bm{v}(t)$, obtained from a deterministic model -- a molecular dynamics simulation. Note that $\bm{x}^{\rm s}$ and $\bm{v}^{\rm s}$ in \eref{eqn:flowmap} denote the positions and velocities of the particles in the probabilistic model.  Then it is easy to see that if $W(\bm{\Xi};t)$ is given by
\begin{equation}
W(\bm{\Xi};t) = w(\bm{x}_1(t) - \bm{x}_1^{\rm s}) \cdots w(\bm{x}_N(t) - \bm{x}_N^{\rm s}),
\label{eqn:W_w}
\end{equation}
then the definitions given by \eref{eqn:define_density}, \eref{eqn:define_mom_density} and \eref{eqn:define_s} are consistent and $W$ given by the above formula satisfies Liouville's equation (see \eref{eqn:liouville}), which was used in deriving the balance equations in \sref{ch:phase}. Note that unlike \sref{ch:phase}, $W(\bm{\Xi};t)$ defined in \eref{eqn:W_w} is \emph{not} a probability density function. (Its integral over phase space diverges, since it is independent of $\bm{v}$.) The key difference between the two approaches is that all quantities in the Irving--Kirkwood--Noll procedure are probabilistic, while this is not true for the Murdoch--Hardy procedure, if the above probabilistic interpretation is adopted. For example, $\bm{f}_{\alpha\beta}$ in the Murdoch--Hardy procedure is deterministic. Therefore the structure inherent in \eref{eqn:stress_force_differential_**} through the marginal densities is absent in the Murdoch--Hardy procedure, thus giving additional non-uniqueness.  It is shown in \sref{ch:compare} that the Hardy stress can be derived using both approaches, while the \murdoch\ stress is a result of the Murdoch--Hardy procedure alone.

\subsection{Definition of the spatially-averaged traction vector}
\label{sec:define_traction_weight}
We close this section by defining the spatially-averaged traction vector, $\bm{t}_w(\bm{x},\bm{n};t)$, for a weighting function $w$, at a point $\bm{x}$ relative to a plane with normal $\bm{n}(\bm{x})$.  One possibility is to adopt the Cauchy relation using the spatially-averaged stress tensor,
\begin{equation}
\label{eqn:traction_pw}
\bm{t}_{w}(\bm{x},\bm{n};t):=\stress_{w}(\bm{x},t) \bm{n}.
\end{equation}
However, since $\stress_{w}$ is defined as a volume average, we immediately see that with this definition $\bm{t}_{w}$ depends not only on the bonds that cross the surface, but also on nearby bonds that do not cross it \cite{murdoch2003}. Hence, equation \eref{eqn:traction_pw} does not appear to be consistent with Cauchy's definition of traction. 

We therefore seek an alternative definition for the spatially-averaged traction vector. In \sref{sec:traction_vector}, we showed that the pointwise traction vector at a point on a surface is the expectation of the force per unit area of all the bonds that cross the surface, making it a property of the surface. We would like the spatially-averaged traction to have the same property.  We therefore define it as an average over a \emph{surface} rather than over a volume as for the stress.  For simplicity, we consider the weighting function $w_h$, defined to be constant on the averaging domain, which is taken to be a generalized cylinder of height $h$, with its axis parallel to $\bm{n}$ and enclosing $\bm{x}$. The traction $\bm{t}_w(\bm{x},\bm{n};t)$ is defined as 
\begin{equation}
\label{eqn:discrete_traction_w}
\bm{t}_{w}(\bm{x},\bm{n};t) := \lim_{h \to 0} \stress_{w_h}(\bm{x},t)\bm{n},
\end{equation}
where $\stress_{w_h}$ is the stress associated with the weighting function, $w_h$. In a more general case, an arbitrary averaging domain can be collapsed onto a surface passing through $\bm{x}$, in many ways. Although this can be made mathematically more precise, we do not pursue that in this work. Definition \eref{eqn:discrete_traction_w} has a two-fold advantage over the definition in \eref{eqn:traction_pw}:
\begin{enumerate}
\item The traction vector is defined to be non-local on a surface, thus making it a property of the surface. This is physically more meaningful, and closer to the continuum definition.
\item The above definition differs from the traction definition in \eref{eqn:traction_pw}, because only the bonds which cross the surface contribute to the traction field.
\end{enumerate}
In \sref{sec:tsai}, we use definition \eref{eqn:discrete_traction_w} to define the Tsai traction starting with the spatial averaging discussed in \sref{sec:spatial_average} and in this way establish a link between the Tsai traction and the Irving--Kirkwood--Noll procedure. 

\section{Derivation of different stress definitions and the issue of uniqueness}
\label{ch:compare}
In this section, we systematically derive various stress tensors commonly found in the literature from the methods developed in \sref{ch:phase} and \sref{ch:spatial}. The stress tensors discussed in this section are the Hardy, virial and \murdoch\ stress tensors and the Tsai traction. 

\subsection{Hardy stress tensor}
\label{sec:hardy}
The Murdoch--Hardy procedure described in \sref{ch:spatial} was independently developed by Murdoch \cite{murdoch1982} and Hardy \cite{hardy1982}. The motivation for Hardy's study was to test the validity of the continuum description of phenomena in shock waves. The formulas suggested by Irving and Kirkwood were not useful due to the lack of knowledge regarding the probability density function and the infinite series expansion in the definition of the stress. As an alternative, Hardy used what we now term as the ``Murdoch--Hardy procedure'' to propose an instantaneous definition for stress, for the special case of pair potential, given by
\begin{subequations}
\label{eqn:hardy}
\begin{align}
\stress^{\rm H}_{\rm{v}}(\bm{x},t) &= \frac{1}{2} \sum_{\substack{\alpha,\beta \\  \alpha \neq \beta}} \frac{(\bm{x}_\alpha(t)-\bm{x}_\beta(t)) \otimes (\bm{x}_\alpha(t)-\bm{x}_\beta(t))}{\vnorm{\bm{x}_\alpha-\bm{x}_\beta}} \pot'_{\alpha\beta} b(\bm{x};\bm{x}_\alpha,\bm{x}_\beta), \label{eqn:hardy_force}\\
\stress^{\rm H}_{\rm{k}}(\bm{x},t) &= -\sum_{\alpha} w(\bm{x}_\alpha(t) - \bm{x}) m_\alpha \bm{v}_{\alpha}^{\rm{rel}}(t) \otimes \bm{v}_{\alpha}^{\rm{rel}}(t),
\label{eqn:hardy_kinetic}
\end{align}
\end{subequations}
where $b$ is the bond function defined in \eref{eqn:bond_function} and $\bm{v}_\alpha^{\rm{rel}}$ is the velocity of particle $\alpha$ with respect to the continuum velocity, as defined in \eref{eqn:cont_velocity}. To simplify the notation, the explicit dependence of $\bm{x}_\alpha$ and $\bm{v}^{\rm{rel}}_\alpha$ on time is dropped from here onwards. Equations \eref{eqn:hardy_force} and \eref{eqn:hardy_kinetic} may look familiar. They are similar to the spatially-averaged generalized stress in \eref{eqn:stress_kinetic_w} and \eref{eqn:stress_force_w_hardy_straight} (for the special case of a pair potential). If in these relations, the ensemble average is replaced by a time average, we obtain a time-averaged Hardy stress. However, in performing such an operation, we must note the following:
\begin{enumerate}
\item Under conditions of thermodynamic equilibrium (see footnote~\ref{foot:tdequil} on page~\pageref{foot:tdequil}), ensemble averages can be replaced by time averages provided that the system is assumed to be ergodic. Strictly speaking this time average should be done for infinite time, but for practical reasons we are restricted to finite time.
\item The Hardy stress tensor is valid under non-equilibrium conditions assuming that the system is in {\em local thermodynamic equilibrium}\footnote{Local thermodynamic equilibrium is a weaker condition than uniform thermodynamic equilibrium (see footnote \ref{foot:tdequil}). The assumption is that 
the microscopic domain associated with each continuum particle is locally in a state of uniform
thermodynamic (or at least metastable) equilibrium. This is the reason why concepts like temperature can be defined as field variables in continuum mechanics. See for example \cite{evansmorriss}.} at all points at every instant of time. This is plausible only when there is a clear separation of time scales between the microscopic equilibration time scale $\tau$ and macroscopic times. Here, $\tau$ is not being defined rigorously. Roughly speaking, $\tau$ must be sufficiently small so that macroscopic observables do not vary appreciably over it.
\end{enumerate}
Under these assumptions, we may replace ensemble averages with time averages in \eref{eqn:stress_kinetic_w} and \eref{eqn:stress_force_w_hardy_straight}  to obtain
\begin{subequations}
\label{eqn:hardy_stress_tavg}
\begin{align}
\stress_{w,\rm{k}}(\bm{x},t) &= -\frac{1}{\tau} \sum_{\alpha} \int_{t}^{t+\tau} w(\bm{x}_\alpha - \bm{x}) m_\alpha \bm{v}_{\alpha}^{\rm{rel}} \otimes \bm{v}_{\alpha}^{\rm{rel}} dt, 
\label{eqn:hardy_stress_kin_tavg}
\\
\stress_{w,\rm{v}}(\bm{x},t) &= \frac{1}{2\tau}\sum_{\substack{\alpha,\beta \\ \alpha \neq \beta}} \int_{t}^{t+\tau} [-\bm{f}_{\alpha\beta} \otimes (\bm{x}_\alpha-\bm{x}_\beta) b(\bm{x};\bm{x}_\alpha,\bm{x}_\beta)] dt 
\label{eqn:hardy_stress_force_tavg}, 
\end{align}
\end{subequations}
where $\bm{f}_{\alpha\beta}$, corresponding to a given potential extension, is defined in \eref{eqn:define_fij}, and $\tau$ represents a microscopic time scale. We see that the Hardy stress is obtained through a rigorous process beginning with the statistical mechanics concepts introduced in \sref{ch:phase}. From here on, we will denote the stress in \eref{eqn:hardy_stress_tavg} as the ``Hardy stress'', although we note that this definition constitutes a generalization of the original Hardy stress to arbitrary potentials and includes time averaging.  Incidentally, the Hardy stress can also be derived from the systematized Murdoch--Hardy procedure described in \sref{sec:murdoch_proc}. The kinetic part of the Hardy stress is the same as that obtained in the Murdoch--Hardy procedure. The potential part of the Hardy stress is derived using the generator $\bm{g}^{\rm{H}}$ given in \eref{eqn:g_1}. This was done in \sref{ch:spatial} (see \eref{eqn:hardy_stress}).  

Note that the Hardy stress tensor is symmetric. One could modify this to a general form by choosing an arbitrary path of interaction, thus leading to a non-symmetric form (see \sref{sec:murdoch_proc}). Also note that the stress tensor resulting from the generator $\bm{g}^{\rm{HD}}$ (see \fref{fig:generators}) would be symmetric, because the interaction force between two continuum particles is always aligned with the line connecting them. It is very important to observe that under non-equilibrium conditions, where we assume a local thermodynamic equilibrium at every instant of macroscopic time, we may assume that the averaging domain centered at a position $\bm{x}$ moves with the continuum velocity $\bm{v}(\bm{x},t)$. This fact will be used in \sref{sec:tsai}.  

\subsection{Tsai traction}
\label{sec:tsai}
Cauchy's original definition of stress emerges from the concept of traction acting across the internal surfaces of a solid via the bonds that cross the surface. It is therefore natural to attempt to define traction at the atomic level in a similar vein in terms of the force in bonds intersecting a given plane. This approach actually goes back to Cauchy himself as part of his effort in the 1820s  to define the stress in crystalline systems \cite{Cauchy1828a,Cauchy1828b}, which is described in detail in Note B in Love's classical book on the theory of elasticity \cite{love}. Cauchy's derivation is limited to zero temperature equilibrium where the atoms are stationary. This approach was extended by Tsai \cite{tsai1979} to the dynamical setting by also accounting for the momentum flux of atoms moving across the plane. The expression for the traction given in \cite{tsai1979} appears to be based on intuition. 

In this section, we show how the Tsai traction can be systematically derived from the Hardy stress tensor, which itself was derived from the generalized stress tensor defined in \sref{sec:spatial_average}.  We will see that the potential part of Tsai's original definition agrees with the results of our unified framework. However, Tsai's expression for the kinetic part of the traction depends on the absolute velocity of the particles and therefore is not invariant with respect to Galilean transformations.  We show below that the correct expression for the Tsai traction vector $\bm{t}(\bm{x},\bm{n};t)$ across a plane $P$ with normal $\bm{n}$ is
\begin{align}
\bm{t}_{w}(\bm{x},\bm{n};t) &= \frac{1}{A\tau} \int_{t}^{t+\tau} \sum_{\alpha\beta \cap P} \bm{f}_{\alpha\beta} \frac{(\bm{x}_\alpha - \bm{x}_\beta) \cdot \bm{n}} {\abs{(\bm{x}_\alpha-\bm{x}_\beta) \cdot \bm{n}}} \,dt \notag \\
&-\frac{1}{A\tau} \sum_{\alpha \leftrightarrow P} \frac{m_\alpha \bm{v}_\alpha^{\rm{rel}}(t_\leftrightarrow)( \bm{v}_{\alpha}^{\rm{rel}}(t_\leftrightarrow) \cdot \bm{n} ) }{\abs{\bm{v}_{\alpha}^{\rm{rel}}(t_\leftrightarrow) \cdot \bm{n} }}, 
\label{eqn:tsai_traction}
\end{align}
where $\tau$ indicates the microscopic time scale, $\sum_{\alpha\beta \cap P}$ indicates the summation over all bonds $\alpha-\beta$ crossing the plane $P$, $\sum_{\alpha \leftrightarrow P}$ indicates summation over all particles that cross $P$ in the time interval $[t,t+\tau]$,\footnote{A particle is counted multiple times if it crosses the plane multiple times.} $\bm{v}_{\alpha}^{\rm{rel}}$ denotes the local relative velocity of particle $\alpha$, and $t_\leftrightarrow$ indicates the time at which the particle crosses the plane. The correct form for $\bm{v}_{\alpha}^{\rm{rel}}$ is not immediately obvious. Below, we derive equation \eref{eqn:tsai_traction} and obtain an explicit expression for $\bm{v}_{\alpha}^{\rm{rel}}$.

\medskip
We start with the Hardy stress in \eref{eqn:hardy_stress_tavg}. Recall from \sref{sec:define_traction_weight} that if the averaging domain is taken to be a generalized cylinder $\mathcal{C}_h$ of height $h$, the spatially-averaged traction field, $\bm{t}_w(\bm{x},\bm{n};t)$, on a surface passing through $\bm{x}$ with normal $\bm{n}$ is
\begin{align}
\bm{t}_w(\bm{x},\bm{n};t) = \lim_{h \to 0} \stress_{w_h} \bm{n} &= \lim_{h \to 0} (\stress_{w_h,\rm{v}} \bm{n} + \stress_{w_h,\rm{k}} \bm{n}) \label{eqn:traction} \\
&=: \bm{t}_{w,\rm{v}} + \bm{t}_{w,\rm{k}}.  \notag
\end{align}
Using \eref{eqn:hardy_stress_tavg}, we rewrite the potential part and kinetic part of \eref{eqn:traction} as 
\begin{subequations}
\begin{align}
\bm{t}_{w,\rm{v}}(\bm{x},\bm{n};t) &= \frac{1}{2\tau} \lim_{h \to 0} \int_{t}^{t+\tau} \sum_{\substack{\alpha,\beta \\ \alpha \neq \beta}} [-\bm{f}_{\alpha\beta} \otimes (\bm{x}_\alpha - \bm{x}_\beta) b_h(\bm{x};\bm{x}_\alpha,\bm{x}_\beta)] dt \label{eqn:traction_contact},\\
\bm{t}_{w,\rm{k}}(\bm{x},\bm{n};t) &= -\frac{1}{\tau} \lim_{h \to 0} \int_{t}^{t+\tau} \sum_{\alpha} m_\alpha w(\bm{x_\alpha} - \bm{x};h)  \bm{v}_{\alpha}^{\rm{rel}}(t;h) ( \bm{v}_{\alpha}^{\rm{rel}}(t;h) \cdot \bm{n} ) dt \label{eqn:traction_kin},
\end{align}
\end{subequations}
where $b_h$ denotes the bond function for a generalized cylinder of height $h$. Also note the dependence of $\bm{v}_{\alpha}^{\rm{rel}}$ on $h$ in \eref{eqn:traction_kin}. Let us first consider the potential part of the traction in \eref{eqn:traction_contact}. As $h$ approaches zero, the generalized cylinder will no longer contain complete bonds. Assuming a constant weighting function, the bond function $b_h$ equals the fraction of the length of the bond lying within the generalized cylinder per unit volume:
\begin{equation}
b_h(\bm{x};\bm{x}_\alpha,\bm{x}_\beta) = \frac{1}{hA}\frac{h}{\abs{\left (\bm{x}_\alpha - \bm{x}_\beta \right) \cdot \bm{n}}} =\frac{1}{A \abs{\left (\bm{x}_\alpha - \bm{x}_\beta \right) \cdot \bm{n}}},
\end{equation}
for any bond $\alpha-\beta$ crossing the cylinder. Therefore \eref{eqn:traction_contact} takes the form
\begin{equation}
\bm{t}_{w,\rm{v}}(\bm{x},\bm{n};t) = \frac{1}{A\tau} \int_{t}^{t+\tau} \sum_{\alpha\beta \cap P} \left [-\bm{f}_{\alpha\beta} \frac{(\bm{x}_\alpha - \bm{x}_\beta) \cdot \bm{n}}{\abs{(\bm{x}_\alpha - \bm{x}_\beta) \cdot \bm{n}}}\right ] \, dt.
\end{equation}
Note that the $1/2$ factor is dropped because of the definition of the summation in the above equation. This is the first term in \eref{eqn:tsai_traction}. Turning to the kinetic part of the traction in \eref{eqn:traction_kin}, we interchange the summation and integral to obtain
\begin{equation}
\bm{t}_{w,\rm{k}}(\bm{x},\bm{n};t) = -\frac{1}{\tau} \lim_{h \to 0} \sum_{\alpha \in \mathcal{C}_h} \int_{t_1(\alpha;h)}^{t_2(\alpha;h)}  m_\alpha w(\bm{x_\alpha} - \bm{x};h)  \bm{v}_{\alpha}^{\rm{rel}}(t;h) ( \bm{v}_{\alpha}^{\rm{rel}}(t;h) \cdot \bm{n} ) dt,
\end{equation}
where $t_1(\alpha;h)$ and $t_2(\alpha;h)$ are the times of entry and exit of particle $\alpha$, respectively, from a cylinder of height $h$. The summation in the above equation is over all particles that are in the generalized cylinder during the time interval $[t,t+\tau]$, with a particle counted $k$ times if it enters and exits the cylinder $k$ times. Multiplying and dividing the above equation by $t_2(\alpha;h)-t_1(\alpha;h)$ and substituting in $w$, we have
\begin{align}
\bm{t}_{w,\rm{k}}(\bm{n}) &= -\frac{1}{A\tau} \lim_{h \to 0} \sum_{\alpha \in \mathcal{C}_h} \frac{t_2(\alpha;h)-t_1(\alpha;h)}{h} \frac{\int_{t_1(\alpha;h)}^{t_2(\alpha;h)}  m_\alpha \bm{v}_{\alpha}^{\rm{rel}}(t;h) ( \bm{v}_{\alpha}^{\rm{rel}}(t;h) \cdot \bm{n} ) dt}{t_2(\alpha;h)-t_1(\alpha;h)} \notag  \\
& = -\frac{1}{A\tau} \sum_{\alpha \leftrightarrow P}\lim_{h \to 0} \frac{t_2(\alpha;h)-t_1(\alpha;h)}{h} m_\alpha \bm{v}_{\alpha}^{\rm{rel}}(t_\leftrightarrow) ( \bm{v}_{\alpha}^{\rm{rel}}(t_\leftrightarrow) \cdot \bm{n}),
\label{eqn:twk}
\end{align}
where we have used the Lebesgue differentiation theorem \cite{folland} in the last equality. Note that the interchange of limit and summation in the above step is valid since we can assume that the summation for any $\mathcal{C}_h$ is a finite summation which is physically meaningful. Since the averaging domain moves with a continuum velocity we note that
\begin{equation}
\lim_{h \to 0} \frac{t_2(\alpha;h)-t_1(\alpha;h)}{h} = \frac{1}{\abs{\bm{v}_{\alpha}^{\rm{rel}}(t_\leftrightarrow) \cdot \bm{n}}}.
\label{eqn:toh}
\end{equation}
In words, this equality states that the net time spent by particle $\alpha$ in the cylinder, divided by its height, is equal to the inverse of the velocity of particle $\alpha$ along the axis of the cylinder. This is correct in the limit, $h\to 0$, where particles only enter and exit the cylinder at its ends.  Substituting \eref{eqn:toh} into \eref{eqn:twk}, we have
\begin{align}
\bm{t}_{w,\rm{k}}(\bm{n}) &= -\frac{1}{A\tau} \sum_{\alpha \leftrightarrow P} \frac{m_\alpha \bm{v}_\alpha^{\rm{rel}}(t_\leftrightarrow)( \bm{v}_{\alpha}^{\rm{rel}}(t_\leftrightarrow) \cdot \bm{n} ) }{\vert \bm{v}_{\alpha}^{\rm{rel}}(t_{\leftrightarrow}) \cdot \bm{n} \vert} \notag \\
&= -\frac{1}{A\tau} \sum_{\alpha \leftrightarrow P} m_\alpha \bm{v}_\alpha^{\rm{rel}}(t_{\leftrightarrow}) \textrm{sign}( \bm{v}_{\alpha}^{\rm{rel}}(t_\leftrightarrow) \cdot \bm{n} ).
\end{align}
This is the second term in \eref{eqn:tsai_traction}. Note that 
\begin{equation}
\bm{v}_{\alpha}^{\rm{rel}}(t) = \lim_{h \to 0}\bm{v}_{\alpha}^{\rm{rel}}(t;h) = \bm{v}_\alpha(t) - \lim_{h \to 0}\bm{v}(\bm{x};h).
\end{equation}
Hence, we have implicitly assumed that $\lim_{h \to 0}\bm{v}(\bm{x};h)$ is well-defined for our averaging domain (plane $P$) which is a limit of the generalized cylinder $\mathcal{C}_h$. In the following calculation, we show that $\bm{v}(\bm{x};h)$ is well-defined and its exact form is derived.

We know that for a generalized cylinder 
\begin{align}
\bm{v}(\bm{x};h) &:= \frac{\frac{1}{\tau} \int_{t}^{t+\tau} \sum_{\alpha} m_\alpha w(\bm{x}_\alpha(t) - \bm{x};h) \bm{v}_\alpha(t) dt}{\frac{1}{\tau}\int_{t}^{t+\tau} \sum_{\beta} m_\beta w(\bm{x}_\beta(t)-\bm{x}) dt} \notag\\ 
&= \frac{ \sum^{'}_{\alpha} \int_{t_1(\alpha;h)}^{t_2(\alpha;h)}  m_\alpha \bm{v}_\alpha(t) dt}{\sum^{'}_{\beta} \int_{t_1(\beta;h)}^{t_2(\beta;h)}  m_\beta dt} \notag \\
&= \frac{\sum^{'}_{\alpha} m_\alpha \left [ \bm{x}_\alpha(t_2(\alpha;h)) - \bm{x}_\alpha(t_1(\alpha;h)) \right ]}{\sum^{'}_{\beta} m_\beta \left [ t_2(\beta;h) - t_1(\beta;h) \right ]},\label{eqn:cont_velocity_h} 
\end{align}
where $\sum'$ indicates summation over those particles that cross $P$ in the time interval $[t,t+\tau]$, including multiple entries and exits.

Considering the limit $h \to 0$ of the first partial fraction of the last equation we have
\begin{equation}
\lim_{h \to 0} \frac{ m_\alpha \frac{\bm{x}_\alpha(t_2(\alpha;h)) - \bm{x}_\alpha(t_1(\alpha;h))}  {t_2(\alpha;h) - t_1(\alpha;h)} }  { \sum_{\beta}^{'} m_\beta \frac{t_2(\beta;h) - t_1(\beta;h)}  {t_2(\alpha;h) - t_1(\alpha;h)} } = \frac{m_\alpha \bm{v}_\alpha( t_\leftrightarrow)}  {\sum^{'}_{\beta} m_\beta \abs{ \bm{v}_\alpha(t_\leftrightarrow) \cdot \bm{n} / \bm{v}_\beta(t_\leftrightarrow) \cdot \bm{n} }},
\label{partialfraction}
\end{equation}
using the fact that in the limit $h \to 0$, ($t_2(\beta;h) - t_1(\beta;h))/(t_2(\alpha;h) - t_1(\alpha;h))$, which is the ratio of the times spent by particles $\beta$ and $\alpha$ in one of their sojourns into the cylinder, is equal to the inverse ratio of their normal velocities. Using \eref{partialfraction} and taking the limit $h \to 0$ of \eref{eqn:cont_velocity_h}, we obtain
\begin{equation}
\bm{v}(\bm{x}) = \lim_{h \to 0} v(\bm{x};h) = \sum_{\alpha \leftrightarrow P} \frac{m_\alpha \bm{v}_\alpha(t_\leftrightarrow)}  {\sum^{'}_{\beta} m_\beta \abs{ \bm{v}_\alpha(t_\leftrightarrow) \cdot \bm{n} / \bm{v}_\beta(t_\leftrightarrow) \cdot \bm{n}}}.
\end{equation}
Note that the above expression for the continuum velocity is far from intuitive. One might expect the continuum velocity to be the average velocity of particles crossing the surface, but this is not true. It is clear from the above equation that the averaging is not trivial.

\medskip
From the relationship between the Tsai traction in \eref{eqn:tsai_traction} and the Hardy stress tensor in \eref{eqn:hardy_stress_tavg}, it is apparent that the Tsai traction is a more local quantity than the Hardy stress tensor. The Tsai traction performs better than the Hardy stress in systems with free surfaces. This was studied by Cheung and Yip \cite{cheung1991} for a one-dimensional case, in which virial stress and Tsai stress are compared (the virial stress is a special case of Hardy stress as shown in the next section). 

The Tsai traction definition can be used to evaluate the stress tensor at a point by evaluating the traction on three perpendicular planes.\footnote{For example, if the normals to the planes are aligned with the axes of a Cartesian coordinate system with basis vectors $\bm{e}_i$, then $\bm{t}(\bm{e}_1)$ would give the components $\sigma_{11}$, $\sigma_{21}$, $\sigma_{31}$, $\bm{t}(\bm{e}_2)$ would give the components $\sigma_{12}$, $\sigma_{22}$, $\sigma_{32}$, and $\bm{t}(\bm{e}_3)$ would give the components $\sigma_{13}$, $\sigma_{23}$, $\sigma_{33}$.} However, it is not clear from the perspective put forward by Tsai \cite{tsai1979} whether the resulting stress tensor would be symmetric or even well-defined, i.e., it is not clear if another choice of planes will give suitably transformed components of the same stress tensor. Our derivation suggests that a stress tensor constructed from the Tsai traction should be well-defined and symmetric, at least in a weak sense, since it is a limit of the Hardy stress, which has these properties. The numerical experiments presented in \sref{ch:experiment}, suggest that the Tsai traction is invariant with respect to the position of the Tsai plane $P$ and the resulting stress tensor is symmetric. 

\subsection{Virial stress tensor}
\label{sec:virial}
In this section, we show that the virial stress tensor derived in \sref{ch:canonical} and in Appendix \ref{ch:virial} can be re-derived from the time-averaged version of the Hardy stress given in \eref{eqn:hardy_stress_tavg}. The expression for the virial stress tensor is obtained from \eref{eqn:hardy_stress_tavg} as a special case for a weighting function which is constant on its support. The bond function, $b$, in \eref{eqn:hardy_stress_tavg} is evaluated approximately using its definition \eref{eqn:bond_function} by only counting those bonds $\alpha-\beta$ that lie entirely within the averaging domain and neglecting the bonds that cross the averaging domain. Hence, $b(\bm{x};\bm{x}_\alpha,\bm{x}_\beta)$ is given by
\begin{equation}
\label{eqn:bond_function_virial}
b(\bm{x};\bm{x}_\alpha,\bm{x}_\beta) = \left \{ \begin{array}{ll}
1/\vol(\Omega_{\bm{x}}) & \mbox{if bond $\alpha-\beta \in \Omega_{\bm{x}}$}, \\
0  & \mbox{otherwise}, \end{array} \right.
\end{equation}
where $\Omega_{\bm{x}}$ denotes the averaging domain centered at $\bm{x}$. Substituting \eref{eqn:bond_function_virial} into \eref{eqn:hardy_stress_tavg}, we have
\begin{equation}
\stress(\bm{x},t) = \frac{1}{\tau \vol(\Omega_{\bm{x}})} \int_{t}^{t+\tau} \Bigg [ -\sum_{\alpha \in \Omega_{\bm{x}}} m_\alpha \bm{v}_{\alpha}^{\rm{rel}} \otimes \bm{v}_{\alpha}^{\rm{rel}} + \frac{1}{2} \sum_{\substack{\alpha,\beta \in \Omega_{\bm{x}} \\ \alpha \ne \beta}} [-\bm{f}_{\alpha\beta} \otimes (\bm{x}_\alpha - \bm{x}_\beta)] \Bigg ] \,dt ,
\label{eqn:virial_stress_tavg}
\end{equation}
which is identical to \eref{eqn:virial_2} in Appendix \ref{ch:virial}. It is clear from this that the virial stress tensor is only an approximation and tends to the Hardy stress as the volume of the averaging domain is increased. This is because the ratio of the measure of bonds that cross the surface to those which are inside the averaging domain decreases as the size of the domain increases. The difference between the virial stress tensor and the Tsai traction was analytically calculated for a one-dimensional chain by Tsai (see \cite{tsai1979}). Since, taking the averaging domain size to infinity is equivalent to taking the thermodynamic limit in this context, the Hardy and virial stress expressions become identical in this limit. Since the virial theorem was also derived in \sref{ch:canonical} for the case of equilibrium statistical mechanics, it follows that the Irving--Kirkwood--Noll procedure is consistent with the results of equilibrium statistical mechanics in the thermodynamic limit. 

\subsection{\murdoch\ stress tensor}
It was seen in \sref{sec:murdoch_proc}, that the \murdoch\ stress tensor, defined in \eref{eqn:murdoch_stress}, is derived using an appropriate generator in the systematized Murdoch--Hardy procedure. However, unlike the Hardy stress, the \murdoch\ stress cannot be derived from the Irving--Kirkwood--Noll procedure. It is also worth noting that the stress tensor given by \eref{eqn:murdoch_stress} is in general non-symmetric, and only under very special conditions yields a symmetric tensor \cite{murdoch2007}. 

\subsection{Uniqueness of the macroscopic stress tensor}
\label{sec:unique_macro_stress}
Three possible sources of non-uniqueness for the stress tensor have been identified in our discussion:
\begin{enumerate}
\item Given that there are multiple potential extensions (see Page~\pageref{page:altext}), different force decompositions are possible and hence different pointwise stress tensors can be obtained.
\item For a given pointwise stress tensor, a new pointwise stress, which also satisfies the balance of linear momentum, can be obtained by adding on an arbitrary tensor field with zero divergence.
\item The generalization of the Irving--Kirkwood--Noll procedure in \sref{sec:gen_stress} to arbitrary ``paths of interaction'' leads to the possibility of non-symmetric expressions for the pointwise stress tensor.
\end{enumerate}

We address the first two issues in this section. The third source of non-uniqueness is only possible in systems where the discrete particles making up the system possess internal structure, such as internal polarization or spin. For systems of discrete particles without internal structure only straight bonds are possible due to symmetry arguments. We leave the discussion of particles with internal structure to future work.

\subsubsection*{Uniqueness and potential energy extensions}
The first source of non-uniqueness of the stress tensor is related to the potential energy extension discussed in \sref{sec:s_motion}. We show below that the macroscopic stress tensor, calculated as a spatial average of the pointwise stress tensor with constant weighting function, is always unique in the thermodynamic limit (see footnote~\ref{foot:tdlimit} on page~\pageref{foot:tdlimit}), i.e., the difference between the spatially-averaged pointwise stress tensors resulting from two different extensions tends to zero, as the volume of the averaging domain is increased.

The discussion below is limited to 5-body potentials since it can be easily extended to any interatomic potential. We first show that the contribution due to any cluster of $5$ particles within the averaging domain is zero. Without loss of generality, we may assume that our system consists of $5$ particles interacting with an interatomic potential energy given by
\begin{equation}
\pot_{\rm{int}} = \potfisher(\bm{x}_1,\dots,\bm{x}_5).
\end{equation}
Let $\pot_{\rm{int}}(\zeta_{12},\dots,\zeta_{45})$ and $\pot^*_{\rm{int}}(\zeta_{12},\dots,\zeta_{45})$ be two different extensions of $\pot_{\rm int}$ from the shape space $\mcal{S}$ to $\mathbb{R}^{10}$ (see \sref{sec:s_motion}), and for any $\bm{s}=(r_{12},\dots,r_{45}) \in \mcal{S}$, let
\begin{align}
\bm{f}_{\alpha\beta}(\bm{x}_1,\dots,\bm{x}_5) &:= \frac{\partial \pot_{\rm{int}}}{\partial \zeta_{\alpha\beta}}(\bm{s}) \frac{\bm{x}_\beta - \bm{x}_\alpha}{r_{\alpha\beta}}, \\
\bm{f}^*_{\alpha\beta} (\bm{x}_1,\dots,\bm{x}_5)&:= \frac{\partial \pot^*_{\rm{int}}}{\partial \zeta_{\alpha\beta}}(\bm{s})\frac{\bm{x}_\beta - \bm{x}_\alpha}{r_{\alpha\beta}},
\end{align}
be their corresponding force decompositions. Let $\stress$ and $\stress^*$ denote the resulting pointwise stress tensors in the Irving--Kirkwood--Noll procedure from $\pot_{\rm{int}}$ and $\pot^*_{\rm{int}}$, respectively. Let $\Omega_{\bm{x}}$ denote the averaging domain\footnote{For simplicity assume that the averaging domain is convex.} centered at $\bm{x}$ that is used to calculate the Hardy stress tensor. Using \eref{eqn:hardy_stress_force_tavg} and noting that all the bonds lie within $\Omega$, the difference between the Hardy stress tensors resulting from these two representations, for the special case of a constant weighting function, is given by
\begin{equation}
\label{eqn:delta_stress}
\Delta \stress(\bm{x},t) := \stress_w - \stress^*_w = \frac{1}{2\tau \vol(\Omega_{\bm{x}})}\sum_{\substack{\alpha,\beta \\ \alpha \neq \beta}} \int_{t}^{t+\tau} [-\Delta \bm{f}_{\alpha\beta} \otimes (\bm{x}_\alpha-\bm{x}_\beta)] dt,
\end{equation}
where $\Delta \bm{f}_{\alpha\beta} := \bm{f}_{\alpha\beta} - \bm{f}^*_{\alpha\beta}$. 

\begin{figure}
\centering
\includegraphics[scale=0.7]{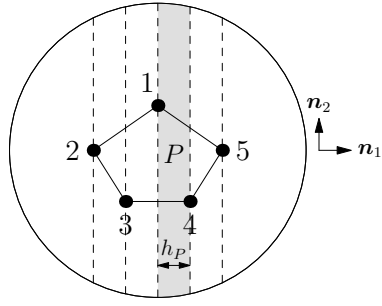}
\caption{A cluster of $5$ particles that lie completely inside the averaging domain, does not contribute to the ambiguity in the stress tensor.}
\label{fig:delta_sigma}
\end{figure}

We would like to show that $\Delta \stress \bm{n}_1=\bm{0}$, where $\bm{n}_1$ is the normal vector as shown in \fref{fig:delta_sigma}. The essential idea to is to interchange the integration and summation in \eref{eqn:delta_stress} and split the terms appearing in the summation into fractions, such that each fraction yields a zero contribution to $\Delta\stress\bm{n}_1$.  In order to show this, we partition the averaging domain into regions such that no region contains a particle in its interior and the partition surfaces are perpendicular to the normal (see \fref{fig:delta_sigma}). Let $h_P$ denote the width of the partition $P$. Using \eref{eqn:delta_stress}, we can now write $\Delta \stress \bm{n}_1$ as
\begin{align}
\Delta \stress \bm{n}_1 &= \frac{1}{2 \tau \vol(\Omega_{\bm{x}})} \int_t^{t+\tau} \sum_{P} \sum_{\alpha\beta \cap P} \left [-\Delta \bm{f}_{\alpha\beta} (\bm{x}_\alpha - \bm{x}_\beta) \cdot \bm{n}_1\frac{h_P}{\abs{(\bm{x}_\alpha - \bm{x}_\beta) \cdot \bm{n}_1}} \right], \notag  \\
&= \frac{1}{2 \tau \vol(\Omega_{\bm{x}})} \int_t^{t+\tau} \sum_{P} h_P 
\Delta\bm{F}_P,
\end{align}
where $\sum_{\alpha\beta \cap P}$ denotes the summation over the bonds crossing 
partition $P$, and
\begin{equation}
\Delta\bm{F}_P =
\sum_{\alpha\beta \cap P} \left [-\Delta \bm{f}_{\alpha\beta} \frac{(\bm{x}_\alpha - \bm{x}_\beta) \cdot \bm{n}_1}{\abs{(\bm{x}_\alpha - \bm{x}_\beta) \cdot \bm{n}_1}} \right ]
\end{equation}
is the \emph{net} force on particles on one side of the partition due to particles on the other side. Since both representations give the same total force on each particle, the force difference, or net force, on each particle is zero and therefore, $\Delta\bm{F}_P=\bm{0}$.  For example, for the partition shown in the figure, 
\begin{equation}
\Delta\bm{F}_P = -2(\Delta \bm{f}_{51} + \Delta \bm{f}_{43}).
\end{equation}
Since $\Delta \bm{f}_{45} = - \Delta \bm{f}_{54}$, we have
\begin{align}
\Delta\bm{F}_P &= -2(\Delta \bm{f}_{51} + \Delta \bm{f}_{43} + \Delta \bm{f}_{45} + \Delta \bm{f}_{54})  \notag \\
&= -2(\Delta \bm{f}_{43} + \Delta \bm{f}_{45}) - 2(\Delta \bm{f}_{51} + \Delta \bm{f}_{54}) \notag \\
&= -2\Delta \bm{f}^{\rm int}_4 - 2\Delta \bm{f}^{\rm int}_5 = \bm{0} + \bm{0} = \bm{0}.
\end{align}
Hence, $\Delta\stress \bm{n}_1 = \bm{0}$. Undertaking a similar argument in the other directions, we see that $\Delta\stress \bm{n}_i=\bm{0}$. These results together imply that $\Delta\stress=\bm{0}$.  Given this, we can conclude that any cluster of particles that lies entirely within the averaging domain does not contribute to the spatial average of the difference between two stress definitions.  Consequently, the only non-zero contribution comes from those clusters for which the bonds connecting its particles cross the averaging domain. Since this contribution scales as surface area, it tends to zero as volume tends to infinity.

\subsubsection*{Uniqueness and the addition of a divergence-free field to the stress}
The second source of non-uniqueness of the stress tensor involves the addition to it of a divergence-free field. This issue is partly addressed by the result (shown in \sref{sec:virial}) that the spatially-averaged pointwise stress converges to the virial stress in the thermodynamic limit (see footnote~\ref{foot:tdlimit} on page~\pageref{foot:tdlimit}). Consider the pointwise stress, $\stress$, obtained through the Irving--Kirkwood--Noll procedure, which satisfies the balance of linear momentum, and a new pointwise stress, $\hat{\stress}=\stress+\tilde{\stress}$, where $\divr_{\bm{x}}\tilde{\stress}=\bm{0}$. Clearly, $\hat{\stress}$ also satisfies the balance of linear momentum and is therefore also a valid solution. The spatially-averaged stress obtained from the new definition is
\begin{equation}
\hat{\stress}_w(\bm{x},t) 
= \int_{\real{3}} w(\bm{y} - \bm{x}) \hat{\stress}(\bm{y},t) \, d\bm{y}
= \int_{\real{3}} w(\bm{y} - \bm{x}) 
(\stress(\bm{y},t)+\tilde{\stress}(\bm{y},t))\, d\bm{y}.
\label{eqn:tildesigw}
\end{equation}
We showed in \sref{sec:virial} that in the thermodynamic limit, the spatially-averaged pointwise stress, $\stress$, converges to the virial stress. We also expect $\hat{\stress}_w$ to equal the virial stress in this limit (since any macroscopic stress must converge to this value under equilibrium conditions). Therefore, \eref{eqn:tildesigw} reduces to
\begin{equation}
\lim_{\rm TD} \int_{\real{3}} w(\bm{y} - \bm{x}) \tilde{\stress}(\bm{y},t)\, d\bm{y}=\bm{0},
\label{eqn:tildesigconstraint}
\end{equation}
where $\lim_{\rm TD}$ refers to the thermodynamic limit. Equation~\eref{eqn:tildesigconstraint} places a strong constraint on allowable forms for $\tilde{\stress}$, the implications of which are left for future work.

\section{Numerical Experiments}
\label{ch:experiment}
In this section, we describe several numerical experiments, involving molecular dynamics and lattice statics simulations, conducted to capture differences in the spatially-averaged stress measures derived in \sref{ch:compare}.  We consider the Hardy stress defined in \eref{eqn:hardy_stress_tavg}, the Tsai traction defined in \eref{eqn:tsai_traction}, the virial stress defined in \eref{eqn:virial_stress_tavg} and the \murdoch\ stress defined in \eref{eqn:murdoch_stress}.  We will sometimes refer to these as the ``microscopic definitions'' or the ``microscopically-based stress tensors''. 

\subsection{Experiment 1}
\label{sec:exp_1}
We begin with the study of the kinetic part of the stress tensor. From the discussion in \sref{ch:compare}, it is clear that unlike the definition for the potential part of the stress tensor, there is no ambiguity in the definition for the kinetic part of stress. However, the kinetic part of the stress may appear to be at odds with the continuum definition of stress that is stated solely in terms of the forces acting between different parts of the body. The need for the kinetic part of stress becomes apparent when considering an ideal gas, where the potential interaction term is zero by definition and therefore it is the kinetic term that is wholly responsible for the transmission of pressure.

\begin{figure}
\centering
\includegraphics[totalheight=0.3\textheight]{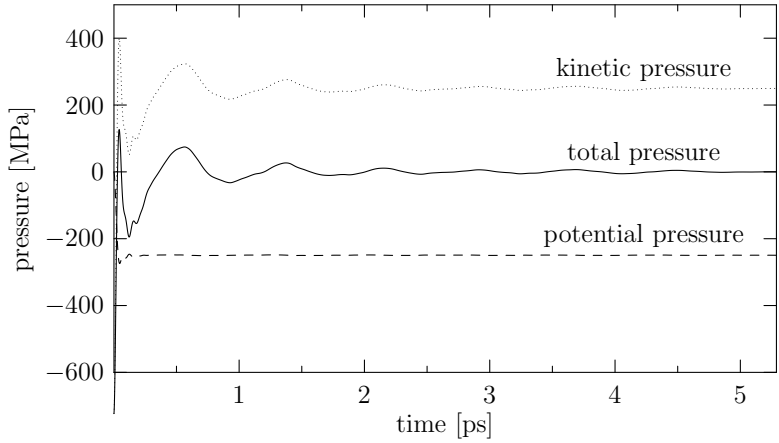}
\caption{The virial pressure as a function of time is plotted for an isolated cube of aluminum at $300\rm{K}$. The total pressure/virial pressure is the sum of the kinetic and potential pressures.}
\label{fig:isolated}
\end{figure}

To demonstrate that the kinetic term in the stress tensor does indeed exist, we perform the following constant energy molecular dynamics simulation of an isolated cube. The cube, consisting of $4000$ aluminum atoms in a face-centered cubic (fcc) arrangement ($10\times10\times10$ unit cells), is floating freely in a vacuum. The atoms interact according to an EAM potential for aluminum due to Ercolessi and Adams \cite{ercolessi1994}:
\begin{align}
\poteam_{\rm int} = \frac{1}{2}\sum_{\substack{\alpha,\beta \\ \alpha \ne \beta}} \pot_{\alpha\beta}(r_{\alpha\beta}) + \sum_\alpha \mcal{U}_\alpha(\rho_\alpha), \quad \rho_\alpha=\sum_{\substack{\beta \\ \beta \ne \alpha}} f_\beta(r_{\alpha\beta}).
\end{align}
Here $\mcal{U}_\alpha$, called the \emph{embedding function}, is the energy required to embed particle $\alpha$ in the electron density, $\rho_\alpha$, due to the surrounding particles, and $f_\beta(r_{\alpha\beta})$ is the electron density of particle $\beta$ at $\bm{x}_\alpha$. The initial positions of the atoms are randomly perturbed by a small amount relative to their zero temperature equilibrium positions and the system is evolved by integrating the equations of motion. The initial perturbation is adjusted so that the temperature of the cube is about $300\rm{K}$ (small fluctuations in temperature are expected since temperature is not controlled in the simulation). Since the block is unconstrained, we expect the stress, $\stress$, in the box and consequently the pressure, defined by $p = - \frac{1}{3}\tr \stress$, to be zero. The virial expression for calculating the pressure follows from \eref{eqn:virial_1} as
\begin{equation}
\label{eqn:virial_pressure}
p= \frac{1}{3V}\Bigg [ \sum_\alpha m_\alpha \ol{\vnorm{\bm{v}_\alpha}^2} - \frac{1}{2} \sum_{\substack{\alpha,\beta \\ \alpha \ne \beta}} \ol{\vnorm{\bm{f}_{\alpha\beta}} r_{\alpha\beta}} \Bigg ],
\end{equation}
where
\begin{align}
\bm{f}_{\alpha\beta} = \frac{\partial \poteam_{\rm int}}{\partial r_{\alpha\beta}} \frac{\bm{x}_\beta-\bm{x}_\alpha}{r_{\alpha\beta}}.
\end{align}

The three curves shown in \fref{fig:isolated} are the potential and kinetic parts of the pressure and the total pressure as a function of time, calculated using \eref{eqn:virial_pressure}. As expected the total pressure tends to zero as the system equilibrates. However, the potential and kinetic parts are \emph{non-zero}, converging to values that are equal and opposite such that their sum is zero. More interestingly, the kinetic part is not insignificant for our system. This clearly shows that kinetic part cannot be neglected even when considering solid systems. This can be quantified by noting that the kinetic part in \eref{eqn:virial_pressure} is simply the temperature per unit volume given by the equipartition theorem \cite{huang}, $k_B T = 2 \mcal{T}/3N$, where $\mcal{T}$ is the kinetic energy. Therefore \eref{eqn:virial_pressure} reduces to
\begin{equation}
p= \frac{1}{V}\Bigg [ Nk_BT - \frac{1}{6} \sum_{\substack{\alpha,\beta \\ \alpha \ne \beta}} \ol{\pot'_{\alpha\beta} r_{\alpha\beta}} \Bigg ].
\end{equation}
For example at 300~K, $\kb T = 0.02585~\rm{eV}$. The lattice spacing for the system considered is equal to $4.032 \ang$. Hence, the volume per atom is $V/N = 4.023^3/4 = 16.387\ang$ and the kinetic pressure is $1.577~\rm{meV}/\ang$. This translates to $252.394$ MPa, which is a considerable stress.

\subsection{Experiment 2}
\label{sec:exp2}
It is clear from \sref{sec:exp_1}, that the kinetic stress is a sizable quantity and cannot be neglected. In this experiment, we further explore the interplay between the potential and kinetic parts of the stress. 

\begin{figure}
\centering
\subfigure[]{\label{fig:midway}\includegraphics[totalheight=0.25\textheight]{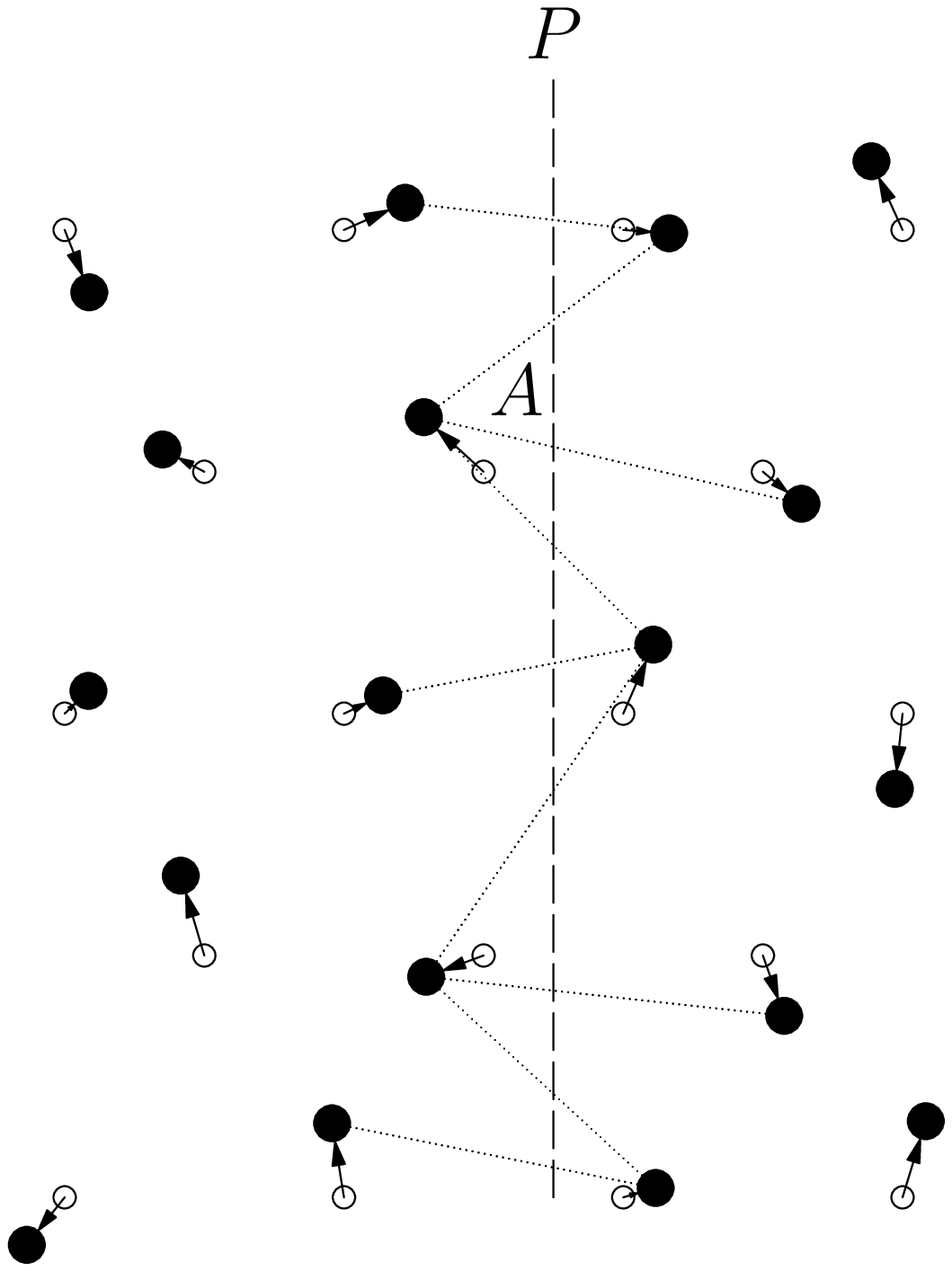}}
\subfigure[]{\label{fig:ontop}\includegraphics[totalheight=0.25\textheight]{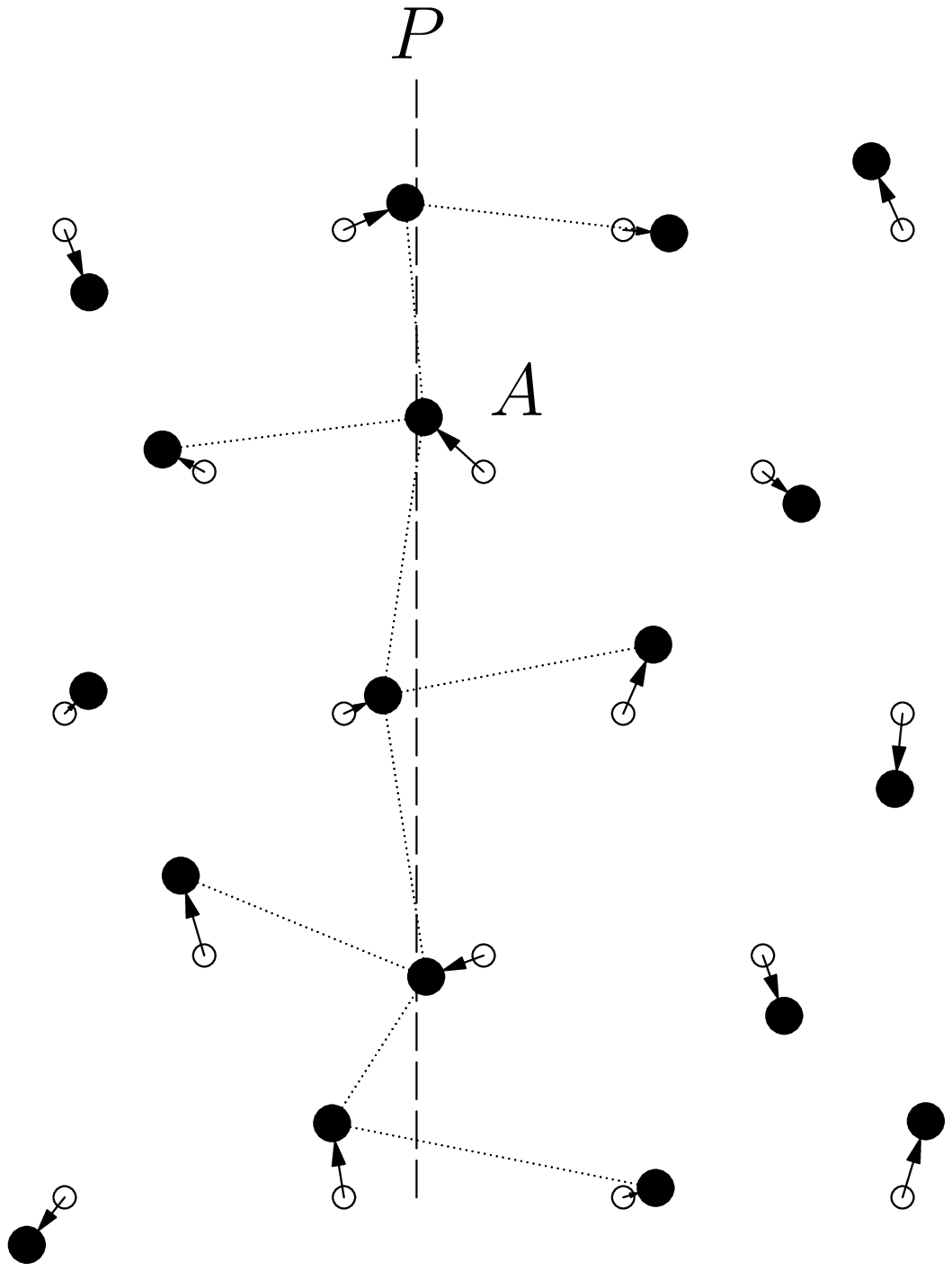}}
\subfigure[]{\label{fig:al_4000_periodic}\includegraphics[totalheight=0.25\textheight]{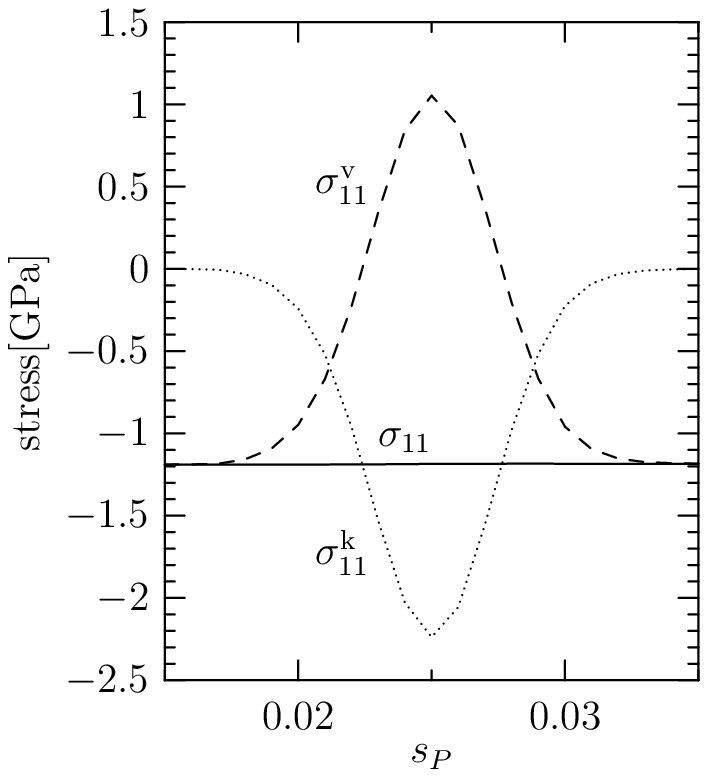}}
\caption{The effect of the position of the Tsai plane on the potential and kinetic parts of stress. Frames (a) and (b) show schematic diagrams of a two-dimensional triangular lattice with (a) the Tsai plane positioned midway between the lattice planes and (b) the Tsai plane positioned almost on top of a lattice plane. The open circles correspond to the ideal lattice positions. The black circles are atoms that are shown in mid-vibration relative to the lattice site as indicated by the arrow. The Tsai plane is indicated by a vertical dashed line. The bonds crossing it appear as dotted lines. Frame (c) shows the plot of the kinetic part of stress $\sigma^{\rm{k}}_{11}$, potential part of stress $\sigma^{\rm{v}}_{11}$ and the total stress $\sigma_{11}$, as a function of the normalized position $s_P = (x_P - x_L)/\Delta x$ of the Tsai plane P, where $x_P$ is the position of $P$, $x_L$ is the position of the lattice plane, and $\Delta x$ is the spacing between the lattice planes.}
\end{figure}

Consider a crystalline solid at a relatively low temperature under uniform stress. The atoms will vibrate about their mean positions with an amplitude that is small relative to the nearest-neighbor spacing. Now imagine placing a Tsai plane $P$ between two crystal lattice planes and measuring the traction across it. If $P$ is midway between the lattice planes (see \fref{fig:midway}), we expect that relatively few atoms will cross $P$ and that consequently the kinetic stress will be small or even zero. In contrast, if $P$ is close to the lattice plane there will be many such crossings and the kinetic stress will be large in magnitude. This seems to suggest that the traction will change as a function of the position of $P$, which would be incorrect since the system is under uniform stress. The reason that this does not occur is that every time an atom crosses $P$, the bonds connected with it reverse directions with respect to $P$, changing a positive contribution to the contact stress to a negative one and vice versa (see the bonds connected with atom $A$ in \fref{fig:midway} and \fref{fig:ontop}). This effect on the potential part of the stress exactly compensates for the change in magnitude of the kinetic stress leaving the total stress constant.  This is demonstrated numerically in \fref{fig:al_4000_periodic}. This graph shows the results obtained from a molecular dynamics simulation of the system described in \sref{sec:exp_1}, with periodic boundary conditions. The periodic length of the box is set based on the zero temperature lattice spacing. Consequently upon heating by a temperature change of $\Delta T$, a compressive stress is built up in the box according to
\begin{equation}
\label{eqn:strain}
\bm{\epsilon} = \bm{s}:\stress + \bm{I}\alpha_T \Delta T = \bm{0},
\end{equation}
where $\bm{s}$ is the elastic compliance tensor and $\alpha_T$ is the coefficient of thermal expansion. Inverting this relation for an fcc crystal with cubic symmetry oriented along the crystallographic axes, we have
\begin{equation}
\label{eqn:constitutive}
\sigma_{11} = \sigma_{22} = \sigma_{33} = -(c_{11} + 2c_{12})\Delta T = \sigma,
\end{equation}
with the rest of the stress components zero. In \eref{eqn:constitutive}, $c_{ij}$ are the elastic constants of the material. Substituting in the appropriate values for Ercolessi-Adams EAM aluminum \cite{ercolessi1994} ($c_{11}=118.1$ GPa, $c_{12} = 62.3$ GPa, $\alpha_T = 1.6 \times 10^{-5} \rm{K}^{-1}$) and $\Delta T = 310 \rm{K}$, gives $\sigma=-1.2$ GPa. We see that the total stress in \fref{fig:al_4000_periodic} is constant regardless of the position of the Tsai plane and equal to the expected value of $-1.2$ GPa computed above. However, the kinetic and potential parts change dramatically. When the Tsai plane is away from the lattice planes ($s_P = \pm 0.1$), the kinetic stress is zero and the entire stress is given by the potential part of the stress. As the Tsai plane moves closer to a lattice plane ($\abs{s_P} \to 0$), the kinetic stress becomes more negative (increasing in magnitude) and the potential part of stress increases in exactly the right amount to compensate for this effect. When the Tsai plane is right on top of a lattice plane ($s_P=0$), both the kinetic stress and potential stress are maximum in magnitude, but their sum remains equal to the constant total stress. This is a striking demonstration of the interplay between the kinetic and potential parts of the stress tensor.

\subsection{Experiment 3 and 4}
In this section, the predictions of the microscopically-based stress tensors are compared with analytical solutions from elasticity theory for two simple boundary-value problems. This is a revealing test, since stress is a continuum concept and therefore the microscopic definitions should reproduce the results of a continuum theory under the same conditions.  We perform two numerical experiments. In each experiment, an atomistic boundary-value problem is set up, and the values computed from the discrete system are compared with the ``exact'' result computed from elasticity theory for the same problem using material properties predicted by the interatomic potential used in the atomistic calculations.  The numerical experiments are conducted at zero-temperature since there is no controversy regarding the form of of the kinetic stress which is the same for all stress definitions. Therefore, a comparison at zero temperature is sufficient to probe the differences between the stress measures, at least under equilibrium conditions. The properties we are interested in studying are:
\begin{enumerate}
\item Symmetry of the stress tensor.
\item Convergence of the stress tensor to the continuum value with the size of the averaging domain (a three-dimensional volume in the case of virial, Hardy and \murdoch\ stresses and a plane in the case of the Tsai traction).
\end{enumerate}

\subsubsection*{Inter-atomic Model}
The numerical experiments in this section are carried out using a Lennard-Jones potential. The exact choice of material parameters is unimportant, since the objective of the experiment is to compare the values obtained from the microscopically-based stress for the discrete system with the ``exact'' values obtained from the continuum elasticity theory for the same material. The Lennard-Jones parameters, $\epsilon$ and $\sigma$, are therefore arbitrarily set to 1. The potential has the following form:
\begin{equation}
\label{eqn:lennard_jones}
\phi(r) = 4 \left [ \frac{1}{r^{12}} - \frac{1}{r^6} \right ] - 0.0078r^2 +0.0651.
\end{equation}
Note that the above equation has been rendered dimensionless by expressing lengths in units of $\sigma$ and energy in units of $\epsilon$. As seen in the above equation, the Lennard-Jones potential is modified by the addition of a quadratic term. This is done to ensure that $\phi(r_{\rm{cut}}) = 0$ and $\phi'(r_{\rm{cut}})=0$, where $r_{\rm{cut}} = 2.5$, denotes the cutoff radius for the potential. We refer to this as the ``modified Lennard-Jones potential''. The ground state of this potential is an fcc crystal with a lattice constant of $a=1.556$ and elastic constants, $c_{11}=87.652, c_{12}=c_{44}=50.379$. The conventional elastic moduli associated with the cubic elastic constants are \cite{lekhnitskii}:
\begin{align}
E&=(c_{11}^2 + c_{11}c_{12}-2c_{12}^2) / (c_{11} + c_{12}) = 50.877,\\
\mu &= c_{44} = 50.379, \\
\nu &= c_{12}/(c_{11}+c_{12}) = 0.365,
\end{align}
where $E$ is Young's modulus, $\mu$ is the shear modulus, and $\nu$ is Poisson's ratio. (In the above, elastic constants are given in units of $\epsilon/\sigma^3$. Poisson's ratio is dimensionless.)

\subsubsection*{Experiment 3: Dependence of the microscopically-based stress on the averaging domain size} 
The main aim of this experiment is to study the dependence of the stress given by various definitions (Hardy, Tsai, virial and \murdoch) on the size of the averaging domain. We consider the special case of uniform uniaxial loading with $\sigma_{11}=1$ (all other stress components zero). Our system is a cube of $10\times10\times10$ unit cells ($4000$ atoms) with periodic boundary conditions applied in all directions. To impose the uniaxial loading, the periodic lengths $l_i$ $(i=1,2,3)$ in the three directions are modified according to the linear elastic solution for uniform straining:
\begin{align}
l_1 &= 10a(1+\sigma_{11}/E) = 10.197,\\
l_2 = l_3 &= 10a(1-\nu \sigma_{11}/E) = 9.928.
\end{align}
We then compute the stress at the center of the periodic cell, while increasing the size of the averaging domain. In comparing the different stress definitions, the domain size is set by the Tsai plane which is taken to be a square normal to the 1-direction with the same dimension $w$ in the 2 and 3-directions. The averaging domain for the Hardy, virial and \murdoch\ stresses is a sphere of diameter $d=w$. The weighting function, $w(r)$, for the Hardy stress is taken to be constant with a suitable mollifying function,
\begin{equation}
\label{eqn:weight_molly}
w(r) = \left \{ \begin{array}{ll}
c & \mbox{if $r<R-\epsilon$}\\
\frac{1}{2}c\left[ 1- \cos \left( \frac{R-r}{\epsilon}\pi \right ) \right ] & \mbox{if $R-\epsilon<r<R$} \\
0 & \mbox{otherwise} \end{array} \right.,
\end{equation}
where $c$ is chosen appropriately to normalize $w$. The results are presented in \fref{fig:window}, where the stress $\sigma_{11}$ is plotted as a function of the normalized domain size, $s=w/10a$. (Recall that the applied value is $\sigma_{11}=1$.) We make the following qualitative observations based on these results:
 
\begin{figure}
\centering
\includegraphics[totalheight=0.27\textheight]{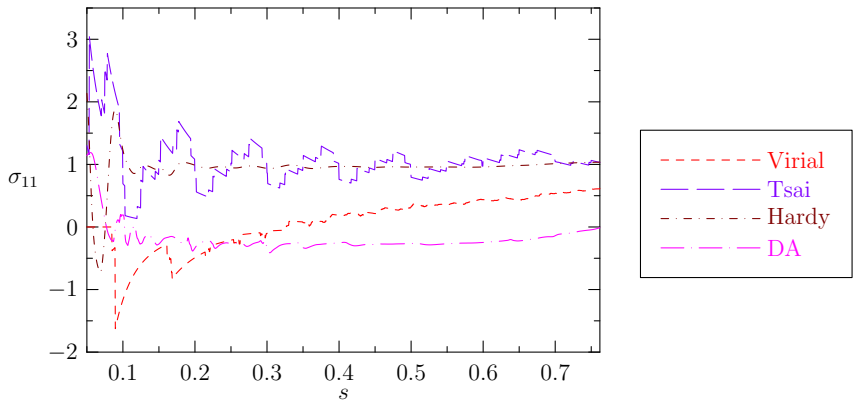}
\caption{Plot showing the dependence of $\sigma_{11}$, calculated using different definitions on the averaging domain size. The variable $s$ represents the ratio of the domain size to the length of the system ($10$ unit cells).}
\label{fig:window}
\end{figure}

\begin{enumerate}
\item The Hardy stress converges to the exact value most quickly of all stress definitions and has the least noise.
\item The normal stress computed from the Tsai traction oscillates about the exact value with a fluctuation amplitude that decays rather slowly with domain size. The oscillations reflect the symmetry of the crystal as new bonds enter the calculation with increasing plane size.
\item The virial stress is always smaller than the Hardy and Tsai stresses since it does not take into account the bonds that cross out of the averaging domain. It appears to be converging towards the exact value, but convergence is slow and even at the maximum domain size studied, the virial stress still has a significant error.
\item The \murdoch\ stress is much smaller than all other stresses due to greater averaging.
\end{enumerate}

\subsubsection*{Experiment 4: A plate with a hole under tension} We now consider one of the classical elasticity boundary-value problems: an infinite plate with a hole subjected to uniaxial tension $\sigma_{\infty}$ at infinity. This is traditionally named the \emph{Kirsch} problem for an isotropic material model. Our objective is to compare the microscopically-based stresses computed for a discrete system set up for the Kirsch problem with the exact solution. A complication in making this comparison is that the fcc Lennard-Jones material we are considering is crystalline with cubic symmetry and is not isotropic. We must therefore compare the discrete solution with the more general solution for the Kirsch problem from the theory of elasticity for anisotropic media \cite{lekhnitskii}. For anisotropic materials, the stress concentration\footnote{The stress concentration is defined as the ratio of the maximum stress to the applied stress $\sigma_\infty$. The maximum stress for the Kirsch problem occurs at the circumference of the hole.} at the hole is no longer $3$ (as it is for an isotropic material), but depends on the elastic constants of the material.  For the elastic constants of the Lennard-Jones model in \eref{eqn:lennard_jones}, we obtain a stress concentration of $2.408$. In addition to the overall stress concentration, the analytical solution provides the complete stress field about the hole. We can therefore compare the microscopically-based stress fields with the continuum result.

In order to model an infinite elastic space, we consider a large square plate oriented along the crystallographic axes consisting of $367{,}590$ atoms, with a hole of radius $25a$, where $a$ is the lattice constant. The plate is constructed by stacking $100\times100\times10$ unit cells and excluding the atoms that lie withing the radius of the hole. The relatively large system size helps to ensure that the variation of the continuum stress is small on the lengthscale of the lattice spacing and minimizes boundary effects near the hole. The atoms interact according to the modified Lennard-Jones potential given in \eref{eqn:lennard_jones}.

\begin{figure}
\centering
\subfigure{\includegraphics[totalheight=0.130\textheight]{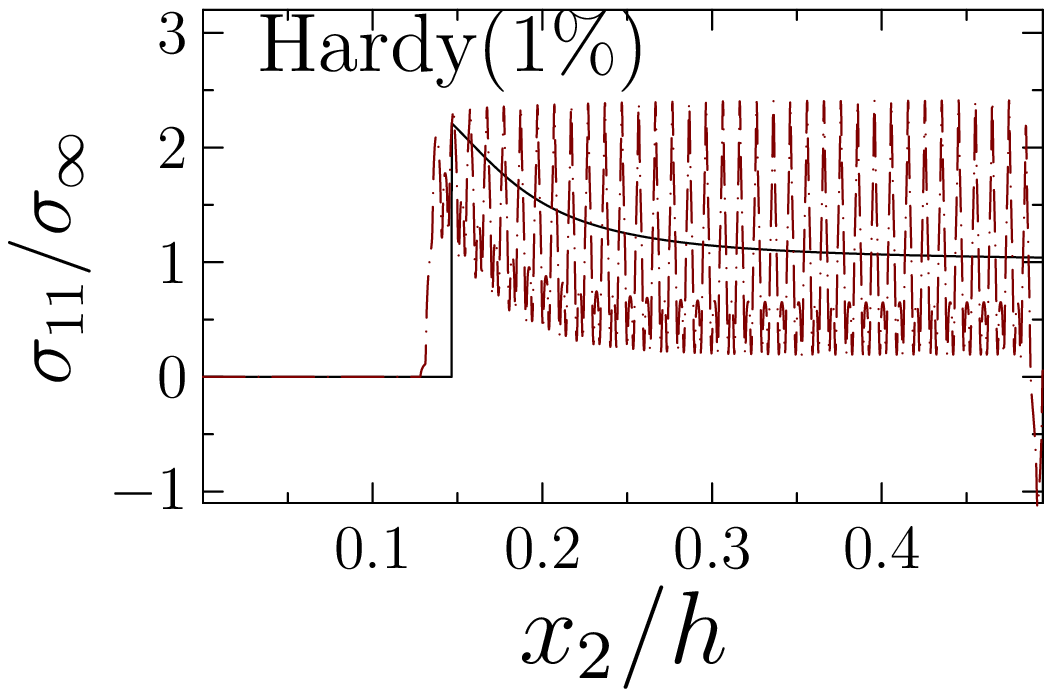}} 
\subfigure{\includegraphics[totalheight=0.130\textheight]{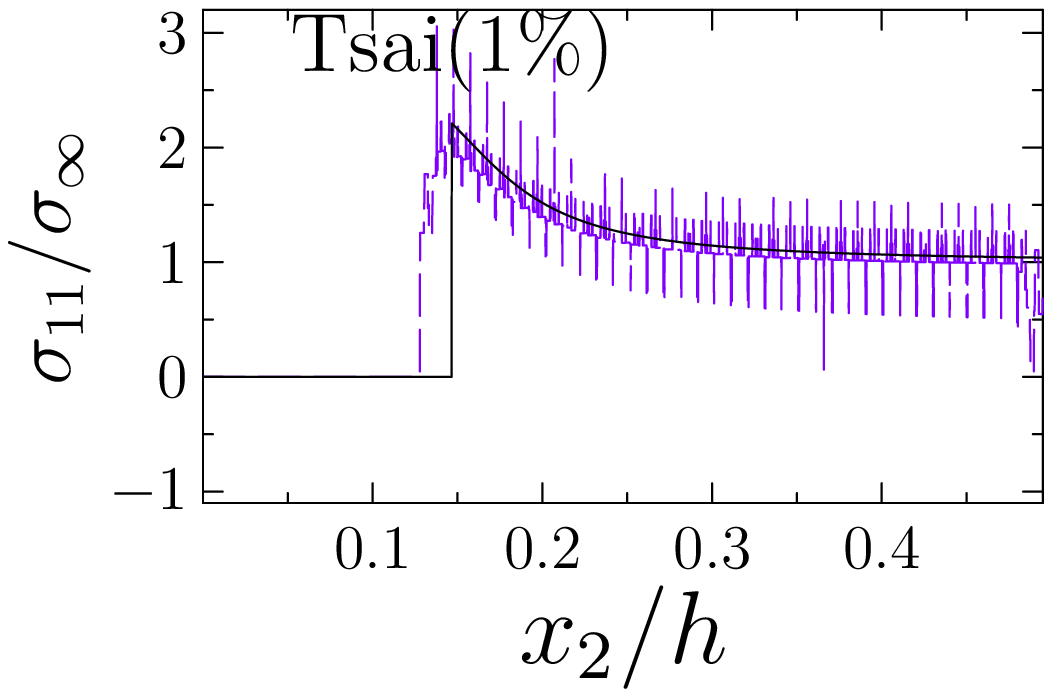}}
\subfigure{\includegraphics[totalheight=0.130\textheight]{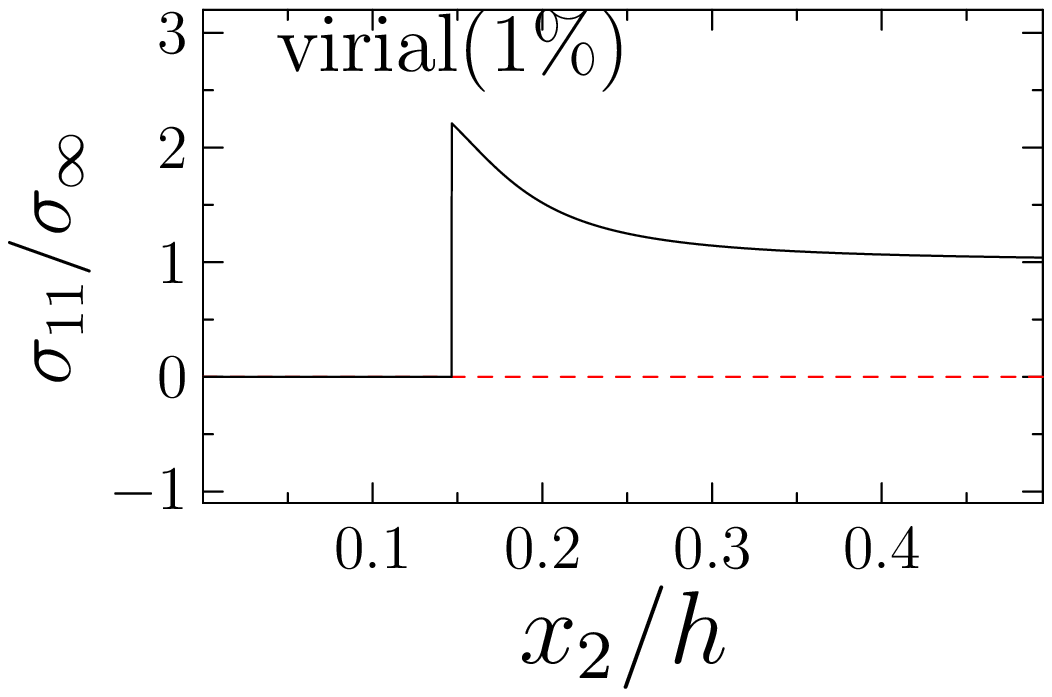}} \\

\subfigure{\includegraphics[totalheight=0.130\textheight]{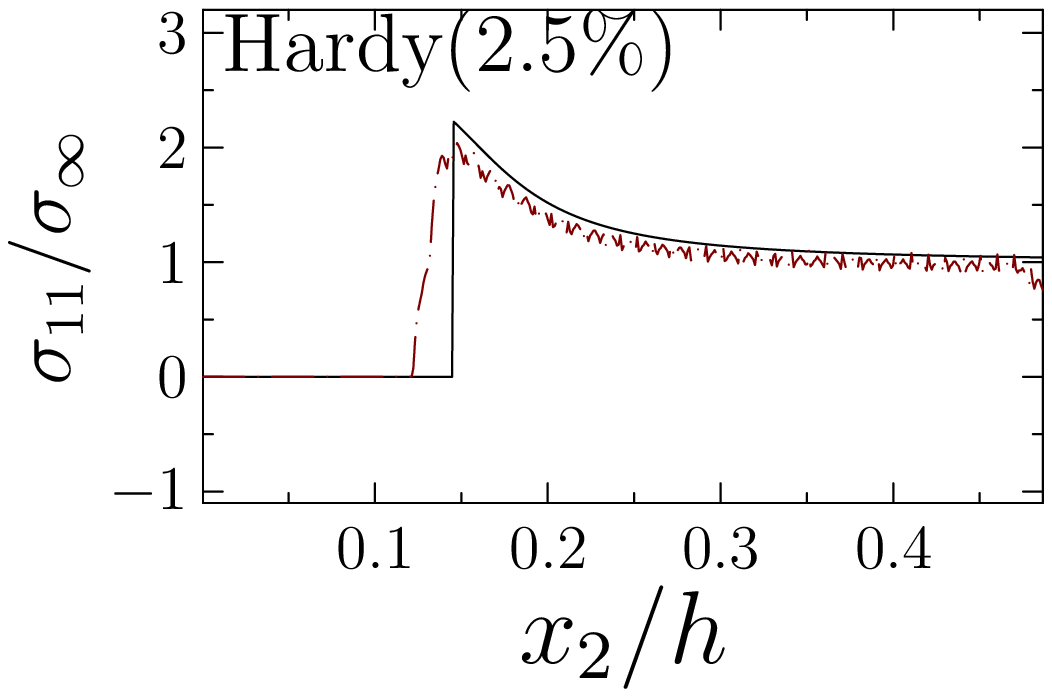}}
\subfigure{\includegraphics[totalheight=0.130\textheight]{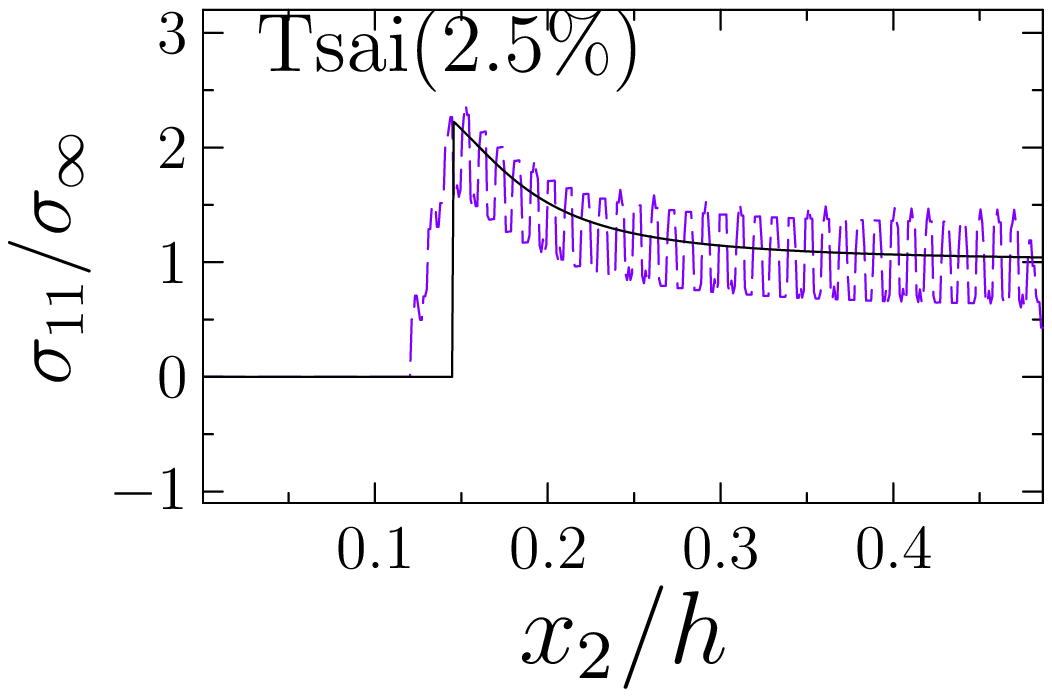}}
\subfigure{\includegraphics[totalheight=0.130\textheight]{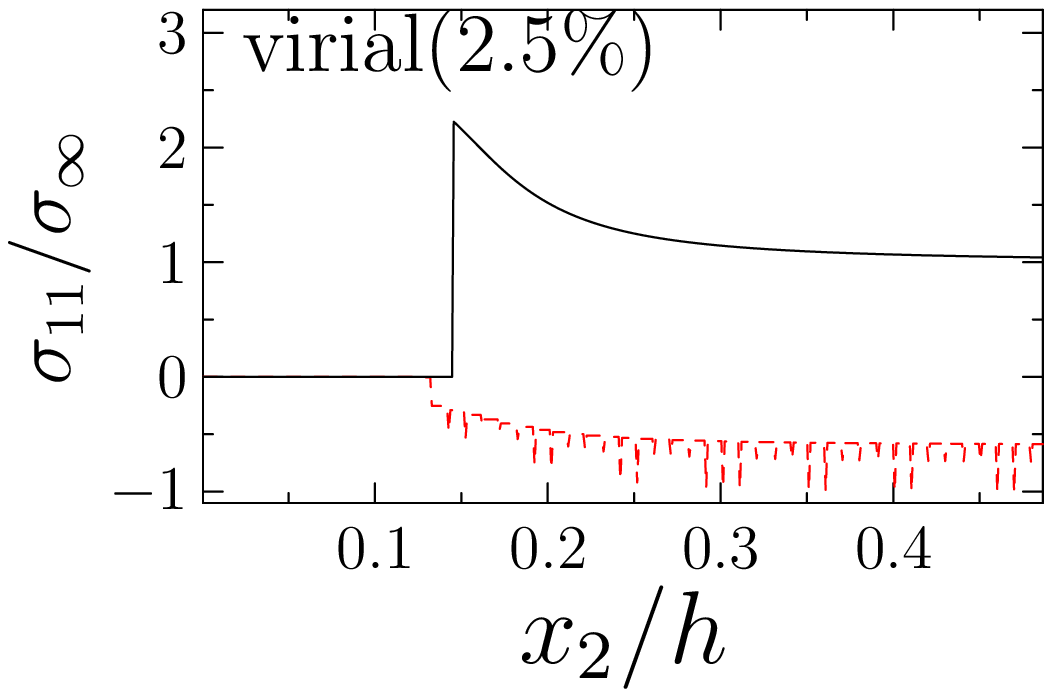}} \\

\subfigure{\includegraphics[totalheight=0.130\textheight]{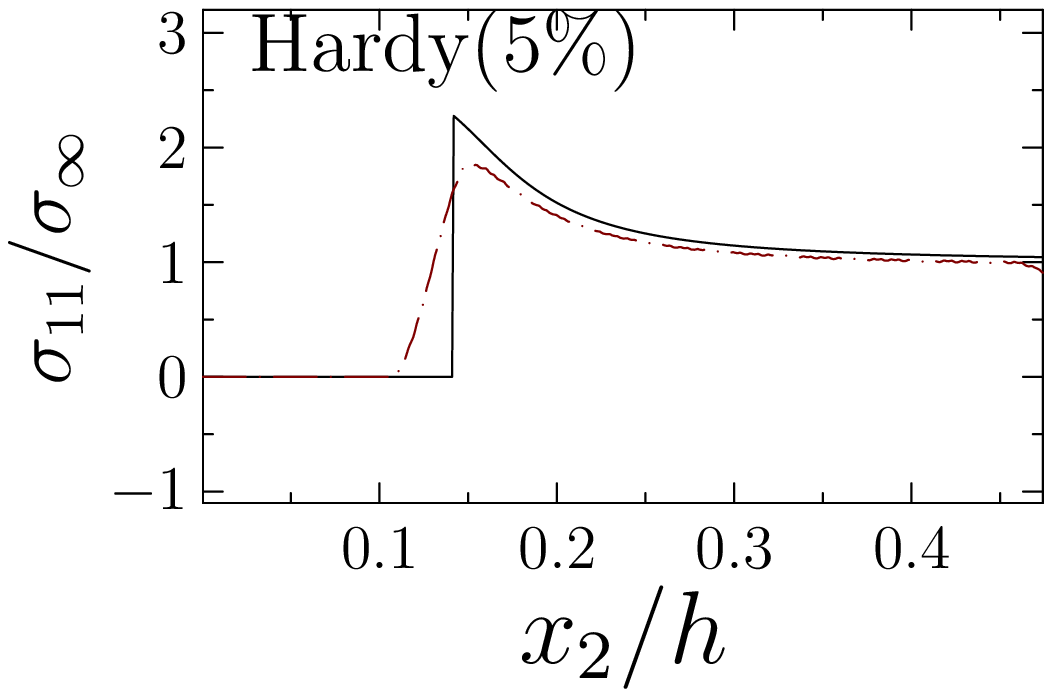}}
\subfigure{\includegraphics[totalheight=0.130\textheight]{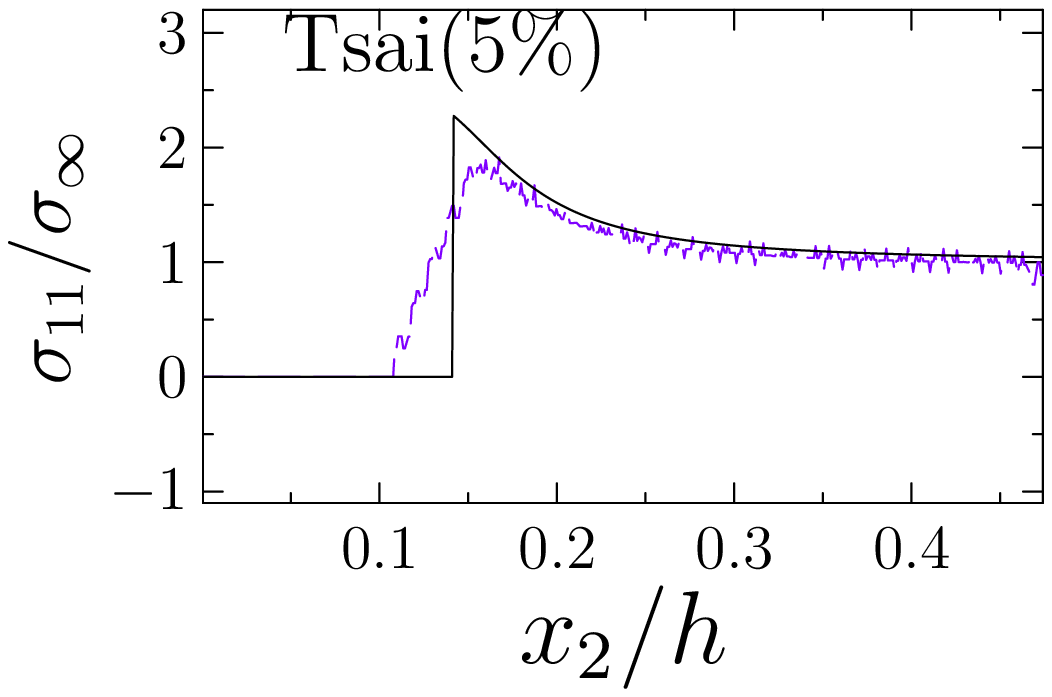}}
\subfigure{\includegraphics[totalheight=0.130\textheight]{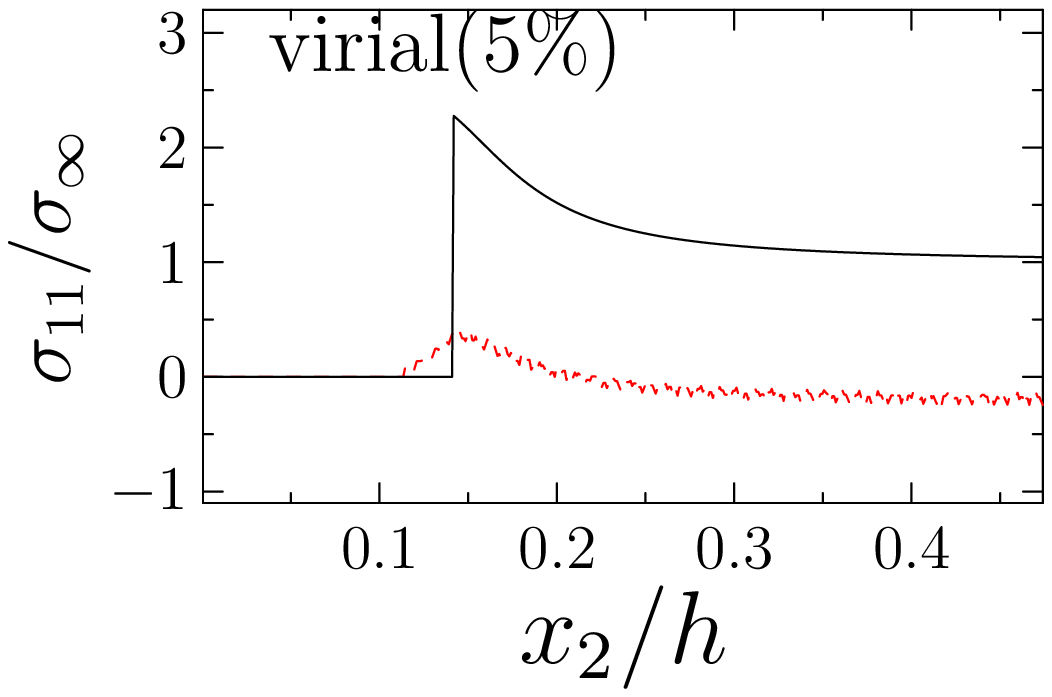}}  \\

\subfigure{\includegraphics[totalheight=0.130\textheight]{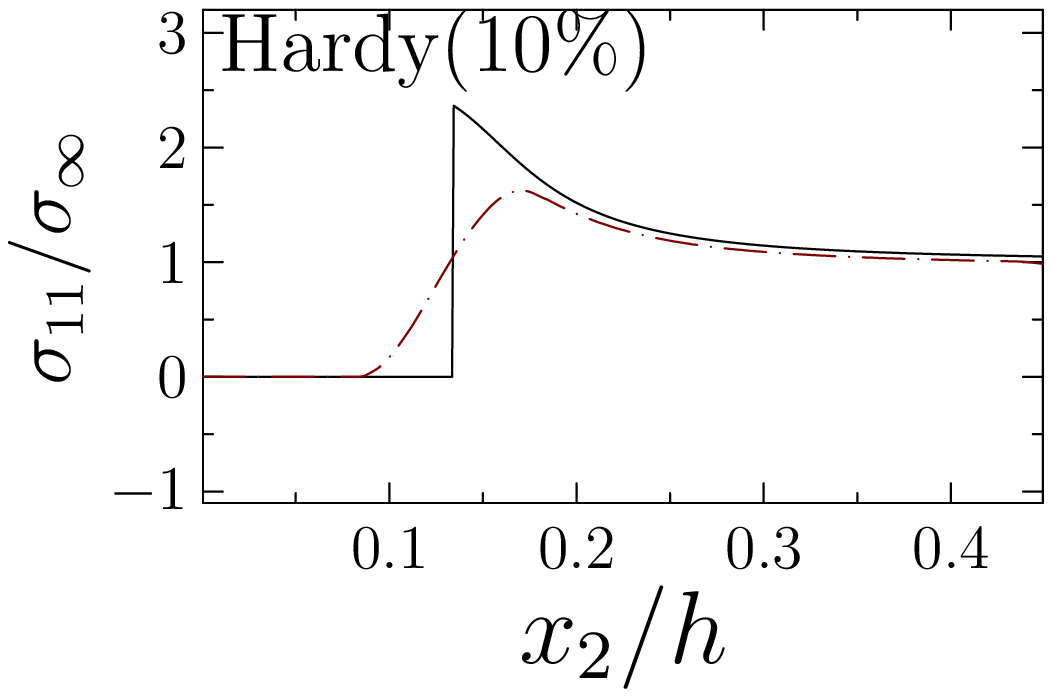}} 
\subfigure{\includegraphics[totalheight=0.130\textheight]{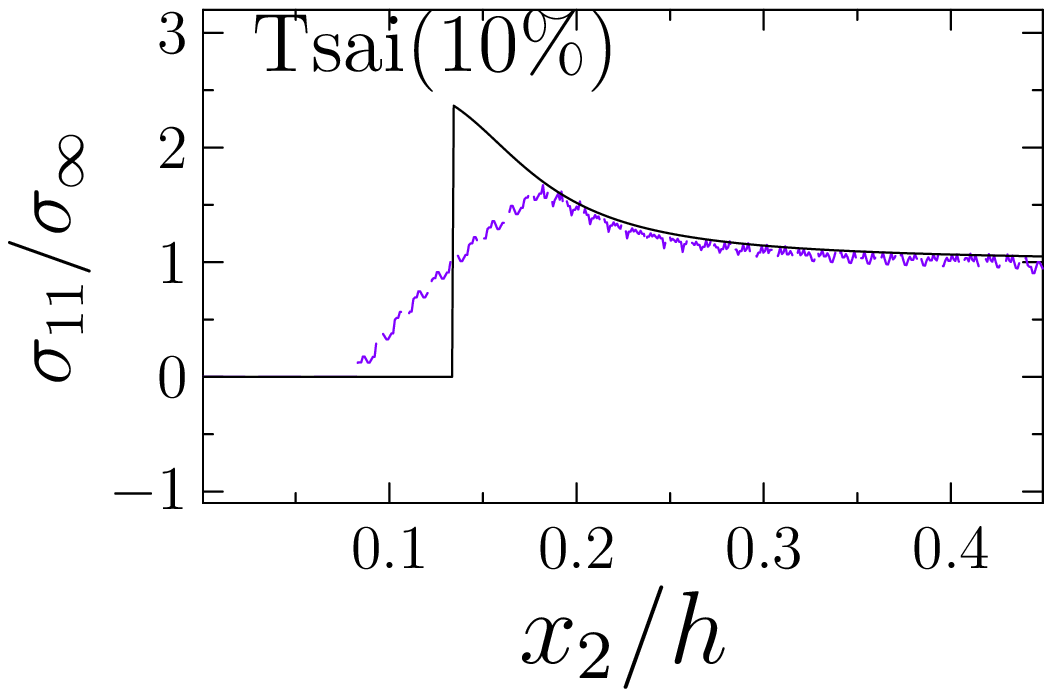}} 
\subfigure{\includegraphics[totalheight=0.130\textheight]{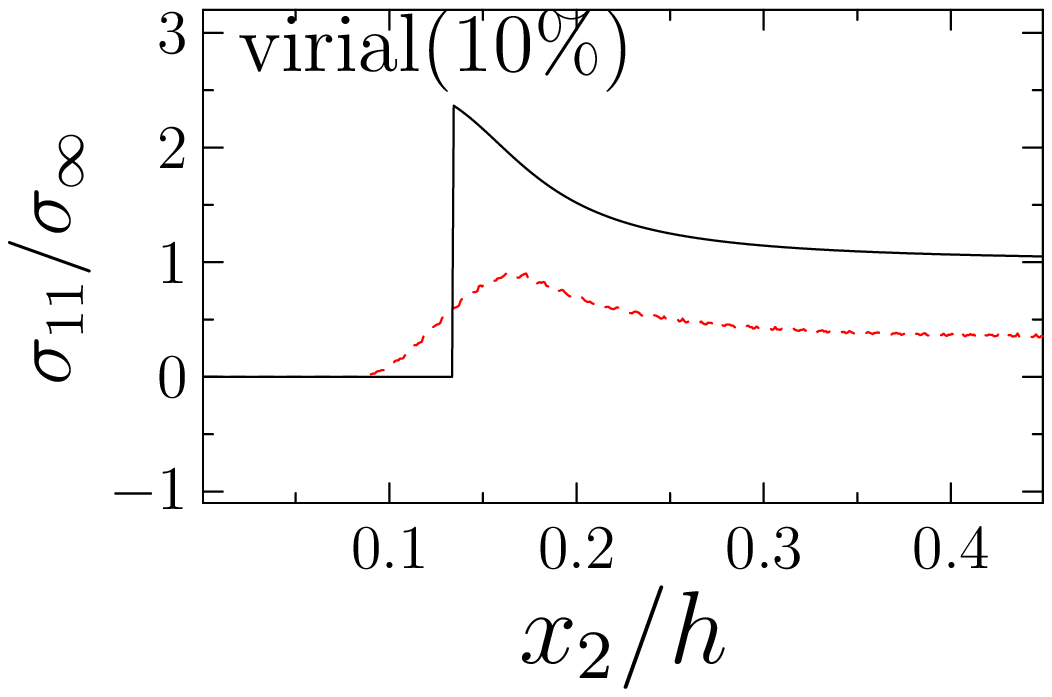}} 
\caption{Normalized $\sigma_{11}$ component of the stress along the $x_1=0$ line for an anisotropic plate with a hole subjected to uniaxial tension in the 1-direction. The $x_2$ coordinate is normalized by the height $h$ of the atomistic model. The exact solution for an infinite plate obtained from anisotropic linear elasticity (black solid line) is compared with the results obtained from the three microscopic definitions in the three columns: Hardy, Tsai and virial. The four rows correspond to four different averaging domains constituting $1\%$, $2.5\%$, $5\%$ and $10\%$ of $h$.}
\label{fig:11_1}
\end{figure}

As before, the averaging domain size is set by the length and width of the Tsai plane, with the Hardy, virial and \murdoch\ stress using a sphere of diameter equal to the length of the Tsai plane. The system is loaded by $\sigma_{11}=\sigma_\infty$, by displacing the atoms according to the exact solution from continuum mechanics for linear elastic anisotropic media \cite{lekhnitskii}. The applied stress is sufficiently small, so that the assumption of material linearity and the small strain approximation inherent in the elasticity theory provide a good approximation for the behavior of the system. After the atoms are displaced, the stress tensor is evaluated on a uniformly-distributed grid of points on the mid-plane of the plate located at $x_3=0$. A grid of $100\times100$ points is chosen to evaluate the virial, Tsai and Hardy expressions and a grid of $30\times30$ is chosen to evaluate the \murdoch\ stress tensor. A coarser grid is used for the \murdoch\ stress due to the higher computational cost of this calculation.

First, we consider the $\sigma_{11}$ component of the stress along the $x_1=0$ line, where we expect the maximum value at the hole surface. The results are plotted in \fref{fig:11_1}, which shows a comparison between the exact value and the three microscopic stress definitions (Hardy, Tsai and virial) for four different averaging domains ranging from $1\%$ to $10\%$ of the height $h$ of the atomistic model. We see that the Hardy and Tsai stresses faithfully follow the exact solution, but then drop-off as their averaging domain overlaps with the hole. This drop-off reflects the fact that the microscopically-based stress measures are bulk expressions. The smaller the averaging domain, the closer the microscopic measures can approach the exact stress concentration at the hole surface, however, this increased fidelity comes at the cost of significantly large fluctuation about the exact value. The virial stress is identically zero for the smallest averaging domain because it is too small to contain complete bonds. For the same reason, the Hardy stress experiences very large fluctuations and a nearly constant average value. For larger averaging domains, the Hardy stress has smaller fluctuations than the other stress definitions.

The reason that the drop-off effect described above is so pronounced in this simulation, is that the system is very small by continuum standards. If instead of a hole with a radius of $25a$, we studied a plate with a hole $100$ or $1000$ times larger, using the same-sized averaging domain, the spatially-averaged expressions would get much closer to the correct value before dropping off over the same lengthscale as seen in \fref{fig:11_1}. However, microscopic stress measures are often  computed for small system sizes and therefore the difficulties presented in the figure are typical of realistic atomistic simulation.

\begin{figure}
\centering
\subfigure[]{\label{fig:11_exact}\includegraphics[totalheight=0.14\textheight]{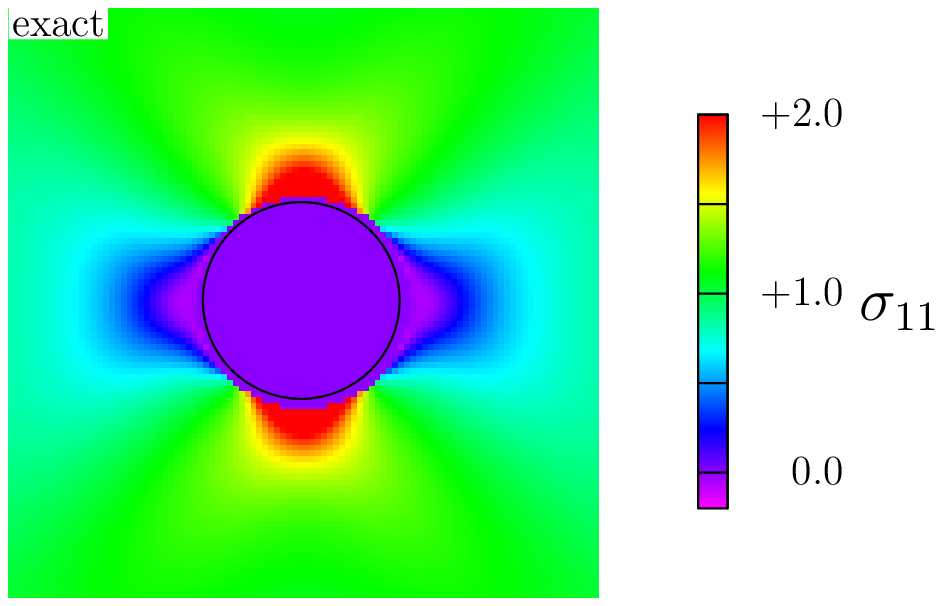}} \\
\subfigure[]{\label{fig:11_virial}\includegraphics[totalheight=0.14\textheight]{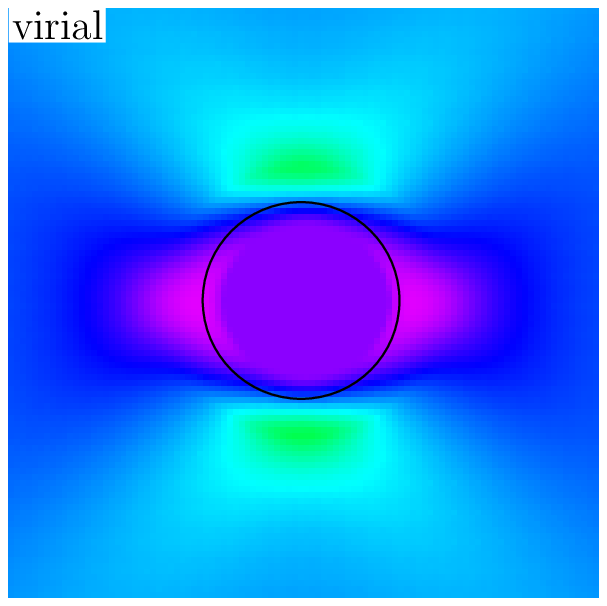}}
\subfigure[]{\label{fig:11_tsai}\includegraphics[totalheight=0.14\textheight]{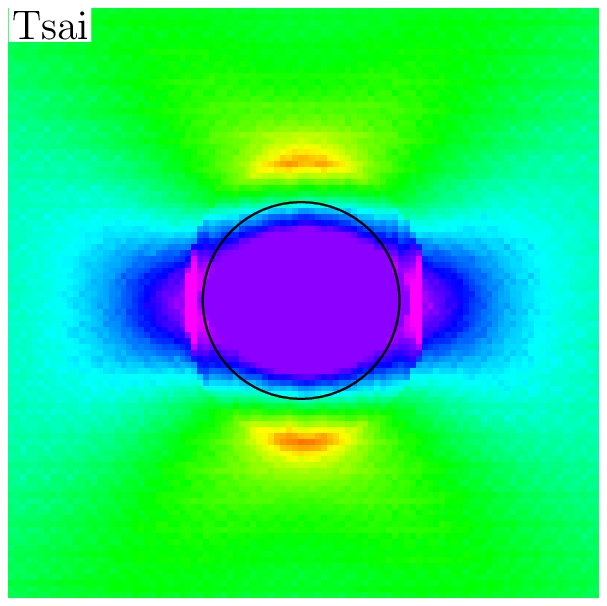}}
\subfigure[]{\label{fig:11_hardy}\includegraphics[totalheight=0.14\textheight]{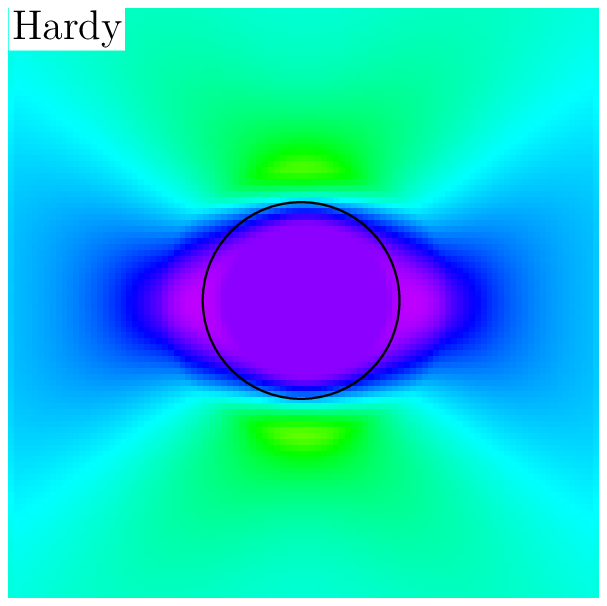}}
\subfigure[]{\label{fig:11_murdoch}\includegraphics[totalheight=0.14\textheight]{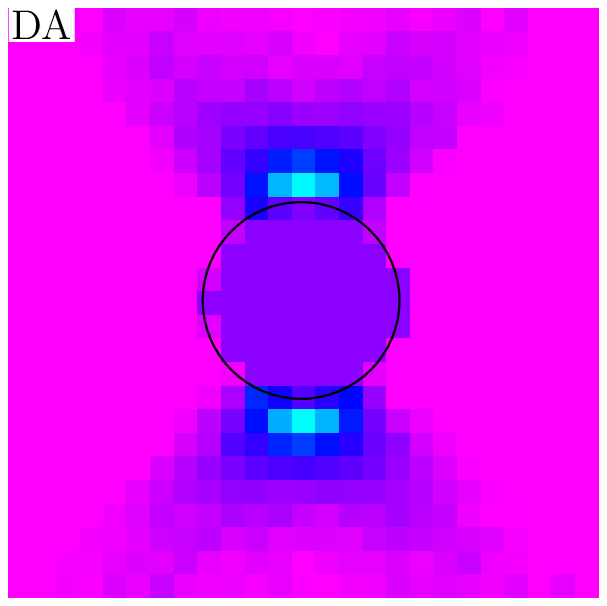}}
\caption{Color density plots of $\sigma_{11}$ are plotted on a common scale: (a) exact (b) virial (c) Tsai (d) Hardy (e) \murdoch. Results plotted for an averaging domain size of $10\%$ of the height of the model.}
\label{fig:11}
\end{figure}

Next, we explore the stress field over the entire plane. The color density plots given in \fref{fig:11}, show variation of $\sigma_{11}$ in the mid-plane of the plate. It can be seen that the stress within the hole is zero. Comparing the plots for $\sigma_{11}$ of the virial stress (\fref{fig:11_virial}), Tsai stress (\fref{fig:11_tsai}), Hardy stress (\fref{fig:11_hardy}) and the \murdoch\ stress (\fref{fig:11_murdoch}) with the exact solution (\fref{fig:11_exact}), we see that the first three definitions capture the overall variation in $\sigma_{11}$, whereas the \murdoch\ stress does not. However, it is clear that the microscopically-based stress in all of the cases is smeared relative to the stress given by the exact solution and none reach the exact stress concentration of $2.408$. This is a result of the averaging procedure involved in all the definitions as explained above. Although the \murdoch\ stress tensor plotted in \fref{fig:11_murdoch}\footnote{\fref{fig:11_murdoch} and \fref{fig:shear_murdoch} are generated from a much coarser grid compared to the other plots due to the computational expense of the \murdoch\ stress definition.} captures the variations in the field, it is much smaller in magnitude compared to the exact solution. This is because of the greater degree of averaging involved in the \murdoch\ stress tensor. Overall, the Hardy stress is less noisy than the virial or Tsai definitions due to the smoothing afforded by the weighting function. (This is hard to see in the figure.) The stress computed from the Tsai traction, in particular, is more noisy since the averaging is limited to a plane compared with the volume averaging of the Hardy and virial definitions. However, this more localized definition enables the Tsai stress to approach the exact stress concentration most closely of all of the microscopic definitions. Similar results were observed by Cheung and Yip \cite{cheung1991} for the stress near a free surface.

\begin{figure}
\centering
\subfigure[]{\label{fig:11_virial_e}\includegraphics[totalheight=0.17\textheight]{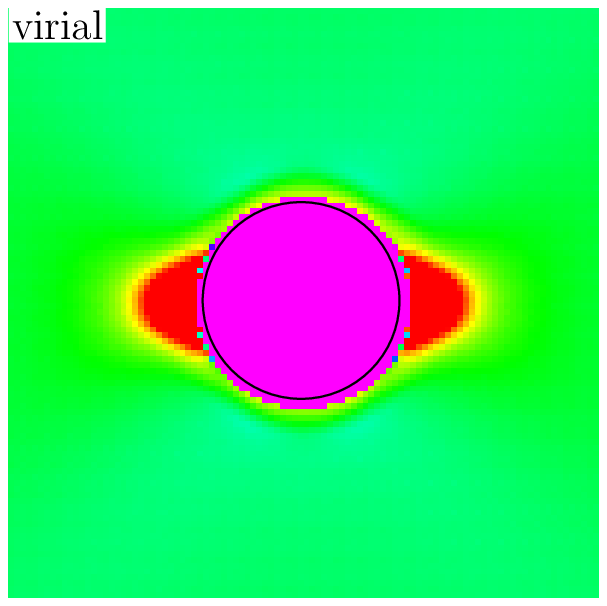}}
\subfigure[]{\label{fig:11_tsai_e}\includegraphics[totalheight=0.17\textheight]{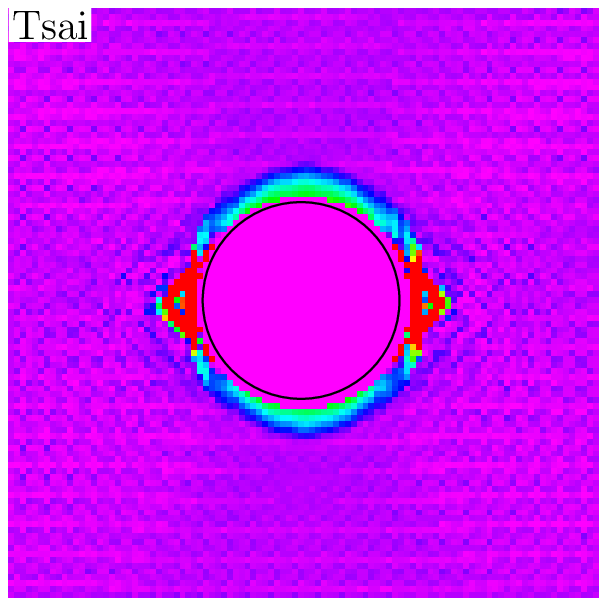}}
\subfigure[]{\label{fig:11_hardy_e}\includegraphics[totalheight=0.17\textheight]{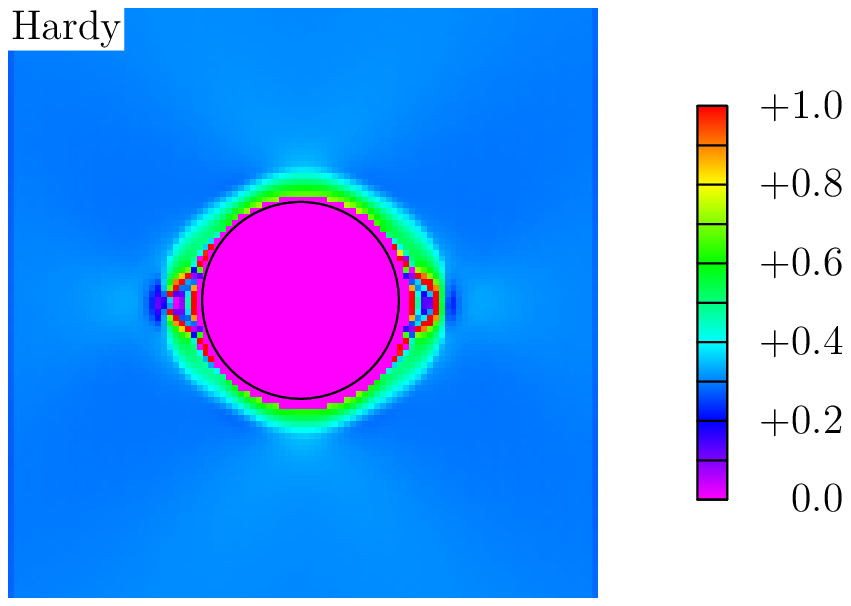}}
\caption{Color density plots of error in $\sigma_{11}$, defined as the absolute value of $(\sigma_{11} - \sigma_{11}^{\rm{exact}}) / \sigma_{11}^{\rm{exact}}$, where $\sigma_{11}$ is the stress calculated using (a) virial (b) Tsai and (c) Hardy stress definitions.}
\label{fig:11_e}
\end{figure}

The relative error in $\sigma_{11}$ for the three microscopic definitions is shown in \fref{fig:11_e}. Of the three definitions, the stress computed from the Tsai traction is generally more accurate, followed by the Hardy stress and then the virial stress. As noted above, the Tsai stress does particularly well in capturing the variations in the stress field close to the hole where the fact that it is localized in one direction is particularly helpful.

\begin{figure}
\centering
\subfigure[]{\label{fig:12_tsai}\includegraphics[totalheight=0.17\textheight]{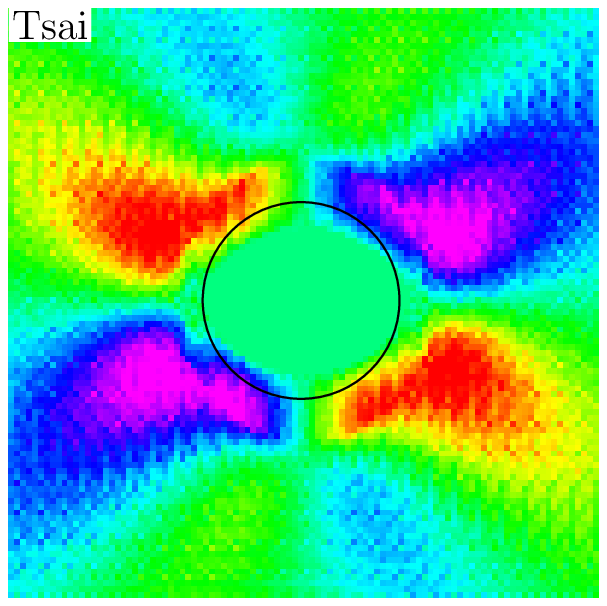}}
\subfigure[]{\label{fig:21_tsai}\includegraphics[totalheight=0.17\textheight]{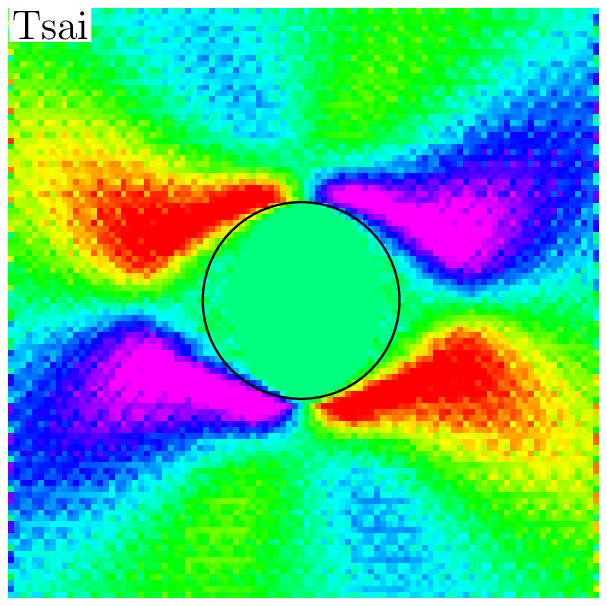}}
\subfigure[]{\label{fig:12_exact}\includegraphics[totalheight=0.17\textheight]{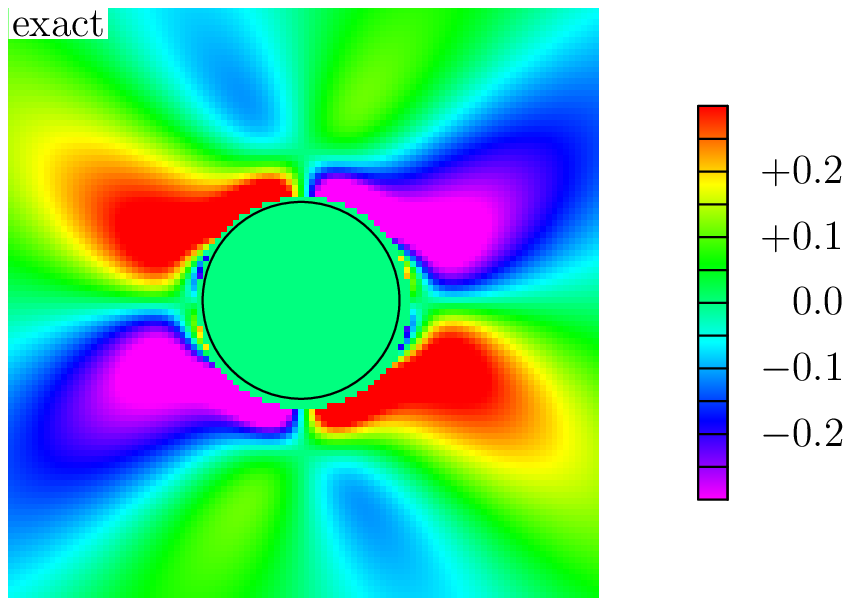}}
\caption{Color density plots of the shear stress components computed from the Tsai traction, (a) $\sigma_{12}$ and (b) $\sigma_{21}$, and (c) the exact shear stress, plotted on a common scale.}
\label{fig:12}
\end{figure}

\begin{figure}
\centering
\subfigure[]{\label{fig:12_murdoch}\includegraphics[totalheight=0.19\textheight]{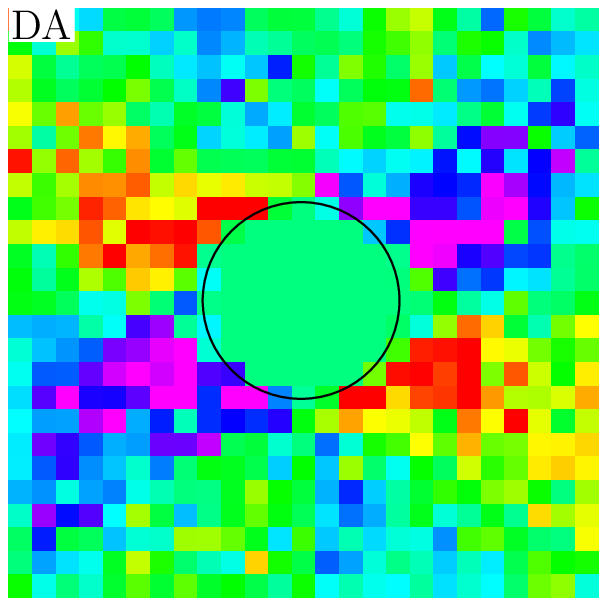}}
\subfigure[]{\label{fig:21_murdoch}\includegraphics[totalheight=0.19\textheight]{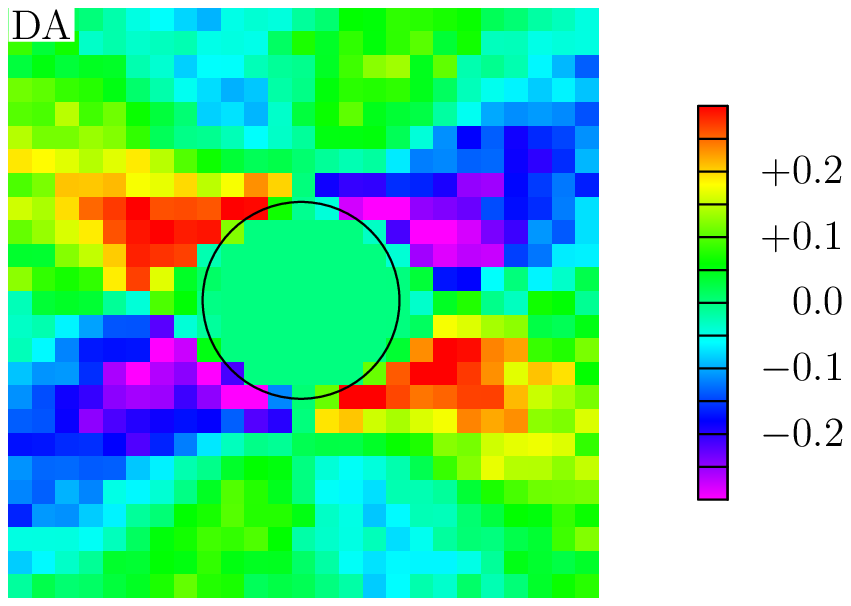}}
\caption{Color density plots of the \murdoch\ shear stress components, (a) $\sigma_{12}$ and (b) $\sigma_{21}$, plotted on a common scale.}
\label{fig:shear_murdoch}
\end{figure}

It is also interesting to examine the shear stress components. \fref{fig:12} shows the exact result from the continuum solution and the $\sigma_{12}$ and $\sigma_{21}$ components of the stress tensor computed from the Tsai traction from two different planes, one normal to the $1$ direction and the other normal to the $2$ direction. We see that the Tsai stress reproduces the exact distribution and appears generally symmetric. This suggests that the symmetry of the Hardy stress is preserved while taking the limit to arrive at the Tsai traction. The \murdoch\ stress tensor is in general non-symmetric \cite{murdoch2007}, but from \fref{fig:12_murdoch} and \fref{fig:21_murdoch} we observe that in this case, $\sigma_{12}$ and $\sigma_{21}$ appear similar. Interestingly, in contrast to the normal stress, the magnitude of shear stress is captured by the \murdoch\ stress definition, at least for the case studied here.  The reason for this is not obvious.

\medskip
Overall, we can summarize our results as follows. Of the three definitions studied, the Hardy stress is generally preferred. It tends to be the smoothest and provides good accuracy away from surfaces as long as the lengthscale over which the continuum fields vary is large relative to the atomic spacing. In situations where either of those conditions break down, the Tsai traction provides a better localized measure of stress. The virial stress is less accurate than both.  From a computational standpoint, the virial stress has the advantage of being easiest to compute. The evaluation of the bond function in the Hardy stress makes it slightly more expensive to compute, but comparable to the virial stress. The Tsai traction is most difficult and time consuming to compute, since it requires the detection of bonds and atoms that cross a given plane during the averaging process. Furthermore, this evaluation must be performed for three separate planes in order to obtain the full stress tensor in three dimensions.

\section{Summary and future work}
\label{ch:conclusions}
In this paper, we provide a unified interpretation and possible generalization of all commonly used definitions for the stress tensor for a discrete system of particles. The macroscopic stress in a system under conditions of thermodynamic equilibrium are derived using the ideas of canonical transformations within the framework of classical statistical mechanics. The stress in non-equilibrium systems is obtained in a two-step procedure: 
\begin{enumerate}

\item The Irving--Kirkwood--Noll procedure \cite{ik1950,noll1955} is applied to obtain a \emph{pointwise} (microscopic) stress tensor. The stress consists of a kinetic part $\stress_{\rm k}$ and a potential part $\stress_{\rm v}$. The potential part of the stress is obtained for multi-body potentials which have a continuously differentiable extension from the shape space of the system to a higher-dimensional Euclidean space.\footnote{Most practical interatomic potentials satisfy this conditions.} This generalizes the original Irving--Kirkwood--Noll approach that was limited to pair potentials.  This generalization is obtained based on the important result that for any multi-body potential with a continuously differentiable extension, the force on a particle in a discrete system can always be expressed as a sum of \emph{central} forces. In other words, the \emph{strong law of action and reaction} is always satisfied.

\item The pointwise stress obtained in the first step is spatially averaged to obtain the macroscopic stress.  
\end{enumerate}
This two-step procedure provides a general unified framework from which various stress definitions can be derived including \emph{all} of the main definitions commonly used in practice. In particular, it is shown that the two-step procedure leads directly to the stress tensor derived by Hardy in \cite{hardy1982}. The traction of Cauchy and Tsai \cite{Cauchy1828a,Cauchy1828b,tsai1979} is obtained from the Hardy stress in the limit that the averaging domain is collapsed to a plane. The virial stress of Clausius and Maxwell \cite{clausius1870,maxwell1870,maxwell1874} is an approximation of the Hardy stress tensor for a uniform weighting function where bonds that cross the averaging domain are neglected. The Hardy stress and virial stress become identical in the thermodynamic limit.  In this manner, clear connections are established between all of the major stress definitions in use today.

The unified framework described above yields a \emph{symmetric} stress tensor for \emph{all} interatomic potentials which have an extension, when used with the standard Irving--Kirkwood--Noll procedure. However, there are materials in nature, such as liquid crystals, which can have non-symmetric stress tensors. In order to explore the possibility of non-symmetric stress, the Irving--Kirkwood--Noll procedure is generalized to \emph{curved paths of interaction} as suggested in \cite{schofield1982}. This involves the generalization of Noll's lemmas in \cite{noll1955}, originally derived for straight bonds, to arbitrary curved paths as defined in this paper.  These generalized lemmas, lead to a non-symmetric stress tensor when applied within the Irving--Kirkwood--Noll procedure. It is postulated that curved paths of interaction may be important in systems with internal degrees of freedom, such as liquid crystals and objective structures \cite{james2006}. This is left for future work.

One of the key points addressed in this paper is the uniqueness of the stress tensor.  Three possible sources of non-uniqueness of the stress tensor are identified and addressed: 
\begin{enumerate}

\item Different pointwise stress definitions can be obtained for different potential energy extensions. This is demonstrated through a simple one-dimensional example. We also show that regardless of the uniqueness of the pointwise stress tensor, the \emph{macroscopic} stress tensor obtained through a procedure of spatial averaging is unique since the difference resulting from alternative pointwise stress tensors tends to zero as the volume of the averaging domain is increased.

\item The pointwise stress tensor is obtained by solving the balance of linear momentum, $\divr_{\bm{x}} \stress_{\rm{v}} = \bm{h}(\bm{x})$, where $\stress_{\rm{v}}$ is the potential part of the stress tensor, and $\bm{h}(\bm{x})$ is a known function. The Irving--Kirkwood--Noll procedure leads to a closed-form solution to this problem. However, an arbitrary tensor field $\tilde{\stress}$ with zero divergence can be added to $\stress_{\rm{v}}$ without violating the balance of linear momentum. We argue that in the thermodynamic limit, the non-equilibrium stress obtained through our unified two-step process must converge to the virial stress of equilibrium statistical mechanics. This is similar to the argument made by Wajnryb et al. \cite{dahler1995}. This condition is satisfied by the general stress expression that we obtain. Any divergence-free stress $\tilde{\stress}$ added to this stress must therefore also disappear under equilibrium conditions. This greatly restricts the allowable forms of $\tilde{\stress}$.

\item The generalization of the Irving--Kirkwood--Noll procedure from straight bonds to arbitrary curved paths of interaction implies the existence of multiple stress tensors for a given system. However, the existence of curved bonds implies the existence of internal structure for the discrete particles, a possibility already discussed by Kirkwood in \cite{kirkwood1946}. For a system of point masses without internal structure, only straight bonds are possible due to symmetry arguments, and therefore this source of non-uniqueness is removed. The general case of non-symmetric stress must be addressed within the context of an appropriate multipolar theory as discussed by Pitteri \cite{Pitteri1990}. We leave this to future work.

\end{enumerate}

In addition to the unified framework described above which is based on the Irving--Kirkwood--Noll procedure, we also investigated the Murdoch--Hardy procedure \cite{hardy1982,murdoch1982} of defining continuum fields as direct spatial averages of microscopic quantities. We demonstrate that this approach can be systematized by adopting a non-local continuum perspective and introducing suitable generator functions.  The various stress definitions resulting from the Murdoch--Hardy procedure, such as the Hardy, virial and the ``double-averaged'' stress (suggested by Murdoch in \cite{murdoch1994}) can be derived from this unified framework. Although we share the concern regarding the ambiguity of the probability density functions used in the Irving--Kirkwood--Noll procedure that led Murdoch to develop the direct spatial averaging approach \cite{murdoch1993}, we feel that since these probability density functions exist \emph{in principle}, the Irving--Kirkwood--Noll formalism is the correct framework to phrase the problem in with approximations introduced later to derive practical expressions.

Finally, numerical experiments involving molecular dynamics and lattice statics simulations are conducted to study the various stress definitions derived in this paper. It is generally observed that the Hardy stress definition appears to be most accurate and converges most quickly with the averaging domain size. In situations where a more localized measure of stress is needed, such as near surfaces or defects, the Tsai traction can be used instead. The virial stress is less accurate than the other two definitions and converges most slowly with averaging domain size. Its main advantage is its simple form and low computational cost.  One of the most interesting results, which requires further study, comes from Experiment~2 of a crystalline system under uniform hydrostatic stress. \fref{fig:al_4000_periodic} shows that although the potential and kinetic parts of the Tsai traction largely depend on the position of the Tsai plane between two adjacent lattice planes, the total stress remains constant. This calculation provides a striking demonstration of the interplay between the kinetic and potential parts of the stress tensor.

\vfill

\appendix

\section{Derivation of the virial stress tensor}
\label{ch:virial}
The original virial theorem was a scalar equation credited to Clausius \cite{clausius1870}, which gives a definition for the pressure in a gas. Maxwell \cite{maxwell1870} extended this result to a tensor version which gives a definition for stress. We present here a more modern version of the virial theorem partly based on the derivation by Marc and McMillan \cite{marc1985}.

Consider a system of $N$ interacting point masses. The position of each point mass is given by
\begin{equation}
\bm{x}_{\alpha} = \bm{x} + \bm{r}_{\alpha},
\label{eqn:separation}
\end{equation}
where $\bm{x}$ is the position of the center of mass of the system of particles and $\bm{r}_\alpha$ is the position of each point mass relative to the center of mass. From Newton's second law, we have
\begin{equation}
\bm{f}_\alpha = \bmd{p}_\alpha,
\label{eqn:momentum_alpha}
\end{equation}
where $\bm{f}_\alpha$ is the force acting on particle $\alpha$, and $\bm{p}_\alpha$ is its linear momentum given by
\begin{equation}
\bm{p}_\alpha = m_\alpha(\bmd{x} + \bmd{r}_\alpha) = m_\alpha(\bmd{x} + \bm{v}_\alpha^{\rm{rel}}).
\label{eq:momalf}
\end{equation}
In \eref{eq:momalf}, $m_\alpha $ is the mass of particle $\alpha$ and $\bm{v}_\alpha ^{\rm{rel}} = \bmd{r}_\alpha $ is its relative velocity with respect to the center of mass. Since the position of the center of mass $\bm{x}$ is given by
\begin{equation}
\bm{x} = \frac{\sum_\alpha  m_\alpha  \bm{x}_\alpha  }{\sum_\alpha  m_\alpha },
\end{equation}
we have the following identities which will be used later:
\begin{equation}
\label{eqn:identities}
\sum_\alpha  m_\alpha  \bm{r}_\alpha  = \bm{0}, \quad \sum_\alpha  m_\alpha  \bm{v}_\alpha ^{\rm{rel}} = \bm{0}.
\end{equation}
Next, we apply a tensor product with $\bm{r}_\alpha $ to both sides of \eref{eqn:momentum_alpha} to give
\begin{equation}
\bm{r}_\alpha  \otimes \bm{f}_\alpha  = \bm{r}_\alpha  \otimes \bmd{p}_\alpha.
\label{eqn:tensor_product}
\end{equation}
On using the identity
\begin{equation}
\frac{d}{dt}(\bm{r}_\alpha  \otimes \bm{p}_\alpha ) = \bm{v}_\alpha ^{\rm{rel}} \otimes \bm{p}_\alpha  + \bm{r}_\alpha  \otimes \bmd{p}_\alpha ,
\end{equation}
equation \eref{eqn:tensor_product} becomes
\begin{equation}
\label{eqn:dyn_virial}
\frac{d}{dt}(\bm{r}_\alpha  \otimes \bm{p}_\alpha ) = \bm{\mcal{W}}_\alpha  + 2\bm{\mcal{T}}_\alpha ,
\end{equation}
where $\bm{\mcal{W}}_\alpha  = \bm{r}_\alpha  \otimes \bm{f}_\alpha $ is the \emph{virial tensor} and $\bm{\mcal{T}}_\alpha  = \frac{1}{2}(\bm{v}_\alpha ^{\rm{rel}} \otimes \bm{p}_{\alpha})$ is the \emph{kinetic tensor} of particle $\alpha$. Equation \eref{eqn:dyn_virial} is called the \emph{dynamical tensor virial theorem}. This ``theorem'', which is simply an alternative form for the balance of linear momentum, becomes useful in a modified form after making the assumption that the atoms in the system are in \emph{stationary motion}. This means that there exists a time scale $\tau$, which is short relative to macroscopic processes but long relative to the characteristic time of the atomic system, over which the atoms remain close to their original positions with bounded positions and velocities. This condition is satisfied for a system of atoms undergoing thermal vibrations about their mean positions as expected in a solid at moderate temperature. To exploit this property of the system, we define the time average of any quantity $f$ over the time $\tau$ as
\begin{equation}
\bar{f} = \frac{1}{\tau}\int_0^\tau f(t) dt
\end{equation}
and apply this averaging to \eref{eqn:dyn_virial}:
\begin{equation}
\frac{1}{\tau} (\bm{r}_\alpha  \otimes \bm{p}_\alpha ) \bigg |_0^\tau = \ol{\bm{\mcal{W}}}_\alpha  + 2\ol{\bm{\mcal{T}}}_\alpha .
\end{equation}
Assuming that $\bm{r}_\alpha  \otimes \bm{p}_\alpha $ is bounded, and a separation of time scales between microscopic and continuum processes exists, the term on the left-hand side can be made as small as desired by taking $\tau$ sufficiently large. Therefore the above equation is reduced to $\ol{\bm{\mcal{W}}}_\alpha  = -2\ol{\bm{\mcal{T}}}_\alpha $. Summing over all particles gives the \emph{tensor virial theorem}:
\begin{equation}
\ol{\bm{\mcal{W}}} = -2\ol{\bm{\mcal{T}}},
\label{eqn:virial_thm}
\end{equation}
where $\ol{\bm{\mcal{W}}}$ is the time-averaged virial tensor of the system,
\begin{equation}
\label{eqn:virial_tensor}
\ol{\bm{\mcal{W}}} = \sum_\alpha  \ol{\bm{r}_\alpha  \otimes \bm{f}_\alpha },
\end{equation}
and $\ol{\bm{\mcal{T}}}$ is the time-averaged kinetic tensor of the system,
\begin{equation}
\label{eqn:kinetic_tensor}
\ol{\bm{\mcal{T}}} = \frac{1}{2}\sum_\alpha  m_\alpha  \ol{\bm{v}_\alpha ^{\rm{rel}} \otimes \bmd{x}_\alpha }.
\end{equation}
The expression for $\ol{\bm{\mcal{T}}}$ can be simplified by substituting in \eref{eqn:separation} and noting that due to separation of time scales $\bmd{x}$ is constant on the atomistic time scale $\tau$, so that
\begin{equation}
\ol{\bm{\mcal{T}}} = \frac{1}{2}\sum_\alpha  m_\alpha  \ol{\bm{v}_\alpha ^{\rm{rel}} \otimes \bm{v}_\alpha ^{\rm{rel}}} + \left [ \ol{\sum_\alpha  m_\alpha  \bm{v}_\alpha ^{\rm{rel}}} \right ] \otimes \bmd{x}.
\end{equation}
The term in the square brackets is zero due to \eref{eqn:identities}, and thus 
\begin{equation}
\ol{\bm{\mcal{T}}} = \frac{1}{2}\sum_\alpha  m_\alpha \ol{\bm{v}_\alpha ^{\rm{rel}} \otimes \bm{v}_\alpha ^{\rm{rel}}}.
\label{eqn:kinetic_tensor_2}
\end{equation}
It is important to note that the virial theorem in \eref{eqn:virial_thm} applies equally to continuum systems at rest as well as those that are not in macroscopic equilibrium and are therefore in a state of motion. This statement hinges on the separation of scales assumption according to which continuum motion occurs so slowly relative to atomistic processes so as to be essentially constant on that time scale.\\

The virial theorem leads to a definition for stress by considering the idea that the forces on the particles in the system can be divided into two parts: an internal part, $\bm{f}_\alpha ^{\rm{int}}$, resulting from the interaction of particle $\alpha$ with other particles in the system and an external part, $\bm{f}_\alpha ^{\rm{ext}}$, due to its interaction with atoms outside the system and due to external fields,
\begin{equation}
\bm{f}_\alpha  = \bm{f}_\alpha ^{\rm{int}} + \bm{f}_\alpha ^{\rm{ext}}.
\label{eqn:split}
\end{equation}
In terms of continuum variables, the external part of the force can be identified with the traction, $\bm{t}$, that the surrounding medium applies to the system of particles and the body force, $\rho\bm{b}$, acting on it, where $\rho$ is the mass density and $\bm{b}$ is body force per unit mass.\footnote{These density, traction and body forces are {\em pointwise} continuum fields, i.e., they are defined at all points but do not include the spatial averaging implicit in the macroscopic continuum description. See \sref{sec:phase} for more details on pointwise fields.} Substituting \eref{eqn:split} into \eref{eqn:virial_tensor}, divides the virial tensor for the system into internal and external parts,
\begin{equation}
\label{eqn:w_split}
\ol{\bm{\mcal{W}}} = \ol{\bm{\mcal{W}}}_{\rm{int}} + \ol{\bm{\mcal{W}}}_{\rm{ext}} = \sum_\alpha  \ol{\bm{r}_\alpha  \otimes \bm{f}_\alpha ^{\rm{int}}} + \sum_\alpha  \ol{\bm{r}_\alpha  \otimes \bm{f}_\alpha ^{\rm{ext}}}.
\end{equation}
Rewriting the external virial as\footnote{We accept this step as an ansatz due to the many assumptions involved, one of which being that $\Omega$ is large enough to express the external forces acting on $\Omega$ in the form of the continuum traction $\bm{t}$.}
\begin{equation}
\ol{\bm{\mcal{W}}}_{\rm{ext}} 
= \sum_\alpha  \ol{\bm{r}_\alpha  \otimes \bm{f}_\alpha ^{\rm{ext}}} 
:= \int_\Omega \bm{x} \otimes \rho\bm{b}\,dV
 + \int_{\partial \Omega} \bm{x} \otimes \bm{t} \, dA,
\end{equation}
where $\Omega$ is the domain occupied by the system of particles and $\partial \Omega$ is a continuous closed surface bounding the particles and separating them from surrounding particles. The variable $\bm{x}$ is a position vector within $\Omega$ and on the surface $\partial \Omega$.
Substituting the Cauchy relation, $\bm{t} = \bm{\stress} \bm{n}$, where $\stress$ is the pointwise Cauchy stress, into the above equation and applying the divergence theorem, we have
\begin{equation}
\ol{\bm{\mcal{W}}}_{\rm{ext}} 
= \int_{\Omega} \left[
  \bm{x}\otimes\rho\bm{b} 
+ \divr_{\bm{x}} (\bm{x} \otimes \stress) 
\right] \, dV 
= \int_{\Omega} \left [ 
  \stress^{\T} + \bm{x} \otimes (\divr_{\bm{x}} \stress + \rho\bm{b}) 
\right ] \, dV.
\end{equation}
We assume that the pointwise fields satisfy the same balance laws as the macroscopic continuum fields. Therefore, the term $(\divr_{\bm{x}} \stress + \rho\bm{b})$ is zero under equilibrium conditions (see \eref{eqn:motion}). We therefore have that
\begin{equation}
\label{eqn:w_ext}
\ol{\bm{\mcal{W}}}_{\rm{ext}} = V\stress^{\T},
\end{equation}
where we have defined the continuum stress field as the average value of $\stress$ over the domain $\Omega$:
\begin{equation}
\stressave := \frac{1}{V}\int_\Omega \stress\, dV,
\end{equation}
Here $V$ is the volume of $\Omega$.\footnote{The definition of this volume is somewhat arbitrary. One can possibly define $V$ as the total volume of the Voronoi cells of the atoms lying within $\Omega$.} Substituting \eref{eqn:w_ext} into \eref{eqn:w_split} and then into the virial theorem in \eref{eqn:virial_thm} gives
\begin{equation}
\stressave = -\frac{1}{V}\left [ \ol{\bm{\mcal{W}}}_{\rm{int}} + 2 \ol{\bm{\mcal{T}}} \right ] ^{\T}.
\end{equation}
Substituting in \eref{eqn:kinetic_tensor_2} and the definition of $\ol{\bm{\mcal{W}}}_{\rm{int}}$ from \eref{eqn:w_split}, we have
\begin{equation}
\label{eqn:virial_1}
\stressave = - \frac{1}{V}\left [ \sum_\alpha  \ol{\bm{f}_\alpha ^{\rm{int}} \otimes \bm{r}_\alpha } + \sum_\alpha  m_\alpha  \ol{\bm{v}_\alpha ^{\rm{rel}} \otimes \bm{v}_\alpha ^{\rm{rel}}} \right ].
\end{equation}
This is the virial stress tensor. It is an expression for the Cauchy stress tensor given entirely in terms of atomistic quantities. Finally to demonstrate the symmetry of the virial stress, we rewrite $\bm{f}^{\rm{int}}_\alpha $ as the sum over its decomposition:
\begin{equation}
\label{eqn:f_int}
\bm{f}^{\rm{int}}_\alpha  
= \sum_{\substack{\beta \\ \beta \ne \alpha}} \bm{f}_{\alpha\beta},
\end{equation}
where $\bm{f}_{\alpha\beta}$ are the terms in the central-force decomposition corresponding to a potential energy extension as explained in \sref{sec:s_motion}. Substituting \eref{eqn:f_int} into \eref{eqn:virial_1}, we obtain
\begin{equation}
\stressave = -\frac{1}{V}\Bigg [ \sum_{\substack{\alpha,\beta \\ \alpha \ne \beta}} \ol{\bm{f}_{\alpha\beta} \otimes \bm{r}_\alpha } + \sum_\alpha  m_\alpha  \ol{\bm{v}_\alpha ^{\rm{rel}} \otimes \bm{v}_\alpha ^{\rm{rel}}} \Bigg ].
\label{eqn:virial_int}
\end{equation}
Now recalling that $\bm{f}_{\alpha\beta}=-\bm{f}_{\beta\alpha}$, we have the following identity
\begin{equation}
\sum_{\substack{\alpha,\beta \\ \alpha \ne \beta}} 
\bm{f}_{\alpha\beta} \otimes \bm{r}_\alpha
=
\frac{1}{2}
\sum_{\substack{\alpha,\beta \\ \alpha \ne \beta}} 
\left(
\bm{f}_{\alpha\beta} \otimes \bm{r}_\alpha +
\bm{f}_{\beta\alpha} \otimes \bm{r}_\beta
\right)
=
\frac{1}{2}
\sum_{\substack{\alpha,\beta \\ \alpha \ne \beta}} 
\bm{f}_{\alpha\beta} \otimes (\bm{r}_\alpha-\bm{r}_\beta).
\label{eqn:fr_identity}
\end{equation}
Substituting \eref{eqn:fr_identity} into \eref{eqn:virial_int}, we have
\begin{equation}
\stressave = -\frac{1}{V}\Bigg [ \frac{1}{2} \sum_{\substack{\alpha,\beta \\ \alpha \ne \beta}} \ol{\bm{f}_{\alpha\beta} \otimes (\bm{r}_\alpha  - \bm{r}_\beta)} + \sum_\alpha  m_\alpha  \ol{\bm{v}_\alpha ^{\rm{rel}} \otimes \bm{v}_\alpha ^{\rm{rel}}} \Bigg ]. 
\label{eqn:virial_2}
\end{equation}
This expression shows that the virial stress is symmetric, since $\bm{f}_{\alpha\beta}$ is parallel to $\bm{r}_\alpha  - \bm{r}_\beta$.

\section{Distance geometry}
\label{sec:geometry}
In \sref{sec:s_motion}, we saw that the interatomic potential energy, $\pot_{\rm int}$, of a set of $N$ particles can be defined as a function on a $3N-6$ dimensional manifold embedded in $\mathbb{R}^{N(N-1)/2}$, where each point on this manifold represents an $N(N-1)/2$-tuple of real numbers which correspond to the distances between the $N$ particles in $\real{3}$. We also noted that these $N(N-1)/2$ distances must satisfy certain geometric constraints in order to be physically meaningful (see footnote \ref{fn:phys_dist}). In this appendix, we discuss the nature of these geometric constraints.

An entire field of geometry, referred to as {\em distance geometry}, has emerged to describe the geometry of sets of points in terms of the distances between them. The subject was first treated systematically by Karl Menger in the early twentieth century and summarized in the book by Blumenthal \cite{Blumenthal1970}.  More recent references include \cite{Crippen1977,HavelKuntz1983,SipplScheraga1986,Braun1987,CrippenHavel1988}. Our discussion below is partly based on \cite{PortaRos2005}, which provided a particularly clear explanation of the subject. 

The key function in distance geometry is the {\em Cayley-Menger determinant}, $\chi:\mathbb{R}^{N(N-1)/2} \to \mathbb{R}$, which is defined as
\begin{equation}
\chi(\zeta_{12},\dots,\zeta_{1N},\zeta_{23},\dots,\zeta_{(N-1)N})
= \det
\begin{bmatrix}
0             & s_{12} & s_{13} & \cdots & s_{1N} &      1 \\
s_{12} &             0 & s_{23} & \cdots & s_{2N} &      1 \\
s_{13} & s_{23} &             0 & \cdots & s_{3N} &      1 \\
       \vdots &        \vdots &        \vdots &        &        \vdots & \vdots \\
s_{1N} & s_{2N} & s_{3N} & \cdots &             0 &      1 \\
            1 &             1 &             1 & \cdots &             1 &      0 
\end{bmatrix},
\end{equation}
where 
\begin{equation}
s_{\alpha \beta} = \zeta_{\alpha \beta}^2.
\end{equation}

For a system of $N$ points, $\{\bm{x}_1,\dots,\bm{x}_N\}$, embedded in $\real{3}$, the Cayley-Menger determinant evaluated at the point, $(r_{12},\dots,r_{(N-1)N}))$, where $r_{\alpha\beta}=\vnorm{\bm{x}_\alpha-\bm{x}_\beta}$, is related to the volume $V_{N-1}$ of a simplex of $N$ points in an ($N-1$)-dimensional space through the relation:
\begin{equation}
\chi(r_{12},\dots,r_{(N-1)N}) = 
\frac {2^{N-1}((N-1)!)^2} {(-1)^N} V_{N-1}^2(\bm{x}_1,\dots,\bm{x}_N).
\end{equation}
For $N=2$,
\begin{equation}
\chi(r_{12})=2L^2,
\label{eq:CMD2}
\end{equation}
where $L=\sqrt{s_{12}}$ is the length of the segment defined by the two points. For $N=3$,
\begin{equation}
\chi(r_{12},r_{13},r_{23})=-16A^2,
\label{eq:CMD3}
\end{equation}
where $A$ is the area of the triangle defined by the three points. For $N=4$,
\begin{equation}
\chi(r_{12},\dots,r_{34})=288V^2,
\label{eq:CMD4}
\end{equation}
where $V$ is the volume of the tetrahedron defined by the four points. For $N\ge5$, we must have 
\begin{equation}
\chi(r_{12},\dots,r_{(N-1)N})=0,
\label{eq:CMD5}
\end{equation}
for points in $\real{3}$ since any simplex with five or more points has zero volume in three-dimensional space.\footnote{This is easier to visualize in two-dimensional space. In that case, a simplex with four vertices (a tetrahedron) would have zero volume since its vertices would be confined to a plane. The same applies to higher-order simplexes and the corresponding higher-order volumes.} 

We now seek to go in the opposite direction. Rather than computing the squared distances $\{s_{\alpha \beta}\}$ from a set of points and using the Cayley-Menger determinants to compute volumes, we seek to verify that a set of distances actually corresponds to a set of points in three-dimensional space. In technical terms, we want the points associated with the distances to be {\em embeddable} in $\real{3}$. In order for this to be the case
the following conditions must be satisfied:\footnote{Actually, a somewhat stronger theorem can be proved. If the points are numbered in such a way that the first four points satisfy conditions \ref{enum:cond1} and \ref{enum:cond2}, then conditions \ref{enum:cond3} and \ref{enum:cond4} need only be applied to groups of points that include these four points. See \cite{Blumenthal1970} for details.}
\begin{enumerate}


\item 
$\chi(r_{\alpha\beta},r_{\alpha\gamma},r_{\beta\gamma})\le 0, 
\qquad \forall \alpha<\beta<\gamma$,
\label{enum:cond1}

\item 
$\chi(r_{\alpha\beta},r_{\alpha\gamma},r_{\alpha\delta},\dots,r_{\gamma\delta})\ge 0, 
\qquad \forall \alpha<\beta<\gamma<\delta$,
\label{enum:cond2}

\item 
$\chi(r_{\alpha\beta},r_{\alpha\gamma},r_{\alpha\delta},r_{\alpha\epsilon},\dots,r_{\delta\epsilon})= 0, 
\qquad \forall \alpha<\beta<\gamma<\delta<\epsilon$,
\label{enum:cond3}

\item 
$\chi(r_{\alpha\beta},r_{\alpha\gamma},r_{\alpha\delta},r_{\alpha\epsilon},r_{\alpha\zeta},\dots,r_{\epsilon \zeta})= 0, 
\qquad \forall \alpha<\beta<\gamma<\delta<\epsilon<\zeta$.
\label{enum:cond4}

\end{enumerate}
For example, Condition~\ref{enum:cond1} states that the Cayley-Menger determinants computed for all distinct triplets of points must be negative or zero. Conditions~\ref{enum:cond2}-\ref{enum:cond4} apply similarly to sets of four, five and six points. The above conditions are easy to understand in terms of \eref{eq:CMD2}-\eref{eq:CMD5}. They enforce the correct sign of areas and volumes in three-dimensional space and the degeneracy condition in \eref{eq:CMD5}.

\begin{figure}
\begin{center}
\includegraphics[height=2in]{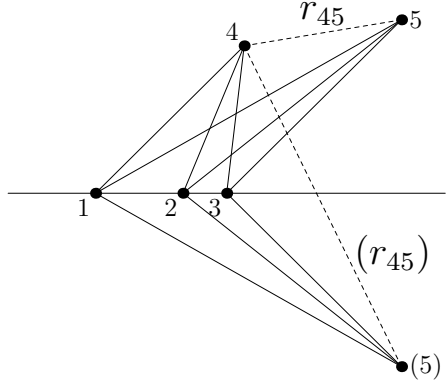}
\caption{A cluster of 5 particles. The cluster is shown projected onto the plane normal to the plane defined by particles 1, 2 and 3 shown as a horizontal line. Atom 4 lies above this plane. There are two possible positions for atom 5 (at the same height above or below the 1-2-3 plane), where the distances $\{r_{12},\dots,r_{35}\}$ are all the same and only distance $r_{45}$ differs. The alternative position of atom 5 and corresponding distance $r_{45}$ are shown in parentheses.  Based on Fig.~2 in \cite{martin1975}.}
\label{fig:r45:twosolns}
\end{center}
\end{figure}

\smallskip
As an example, let us see how this is applied for a cluster of $N=5$ particles.  There are $(5\times4)/2=10$ distances, and only $3\times5-6=9$ degrees of freedom.  There is therefore one more interatomic distance than is needed to describe the configuration and indeed there is one Cayley-Menger determinant coming from condition~\ref{enum:cond3}:
\begin{equation}
\chi(r_{12},\dots,r_{45})
= \det
\begin{bmatrix}
0             & s_{12} & s_{13} & s_{14} & s_{15} & 1 \\
s_{12} &             0 & s_{23} & s_{24} & s_{25} & 1 \\
s_{13} & s_{23} &             0 & s_{34} & s_{35} & 1 \\
s_{14} & s_{24} & s_{34} &             0 & s_{45} & 1 \\
s_{15} & s_{25} & s_{35} & s_{45} &             0 & 1 \\
            1 &             1 &             1 &             1 &             1 & 0 
\end{bmatrix} = 0.\notag
\end{equation}
This expression can be expanded out leading to the following explicit expression \cite{martin1975}:\footnote{Note that Martin \cite{martin1975} has a small typographical error in his relation.}
\begin{multline}
-\frac{1}{24}
\sum_{\alpha=1}^5
\sum_{\beta=1}^5
\sum_{\gamma=1}^5
\sum_{\delta=1}^5
\sum_{\epsilon=1}^5
\Big[
24
s_{\alpha \beta}
s_{\beta \gamma}
s_{\gamma \delta}
s_{\delta \epsilon}
-6
s_{\alpha \beta}
s_{\beta \gamma}
s_{\gamma \delta}
s_{\delta \alpha} \\
-8
s_{\alpha \beta}
s_{\gamma \delta}
s_{\delta \epsilon}
s_{\epsilon \gamma}
-12
(s_{\alpha \beta})^2
s_{\gamma \delta}
s_{\delta \epsilon}
+3
(s_{\alpha \beta})^2
(s_{\gamma \delta})^2
\Big]
=
0,
\label{eq:MartinIdentity}
\end{multline}
where in the above equation $s_{\alpha\beta}=s_{\beta\alpha}$ whenever $\alpha>\beta$. For a given set of nine squared distances, say $\{s_{11},\dots,s_{35}\}$, \eref{eq:MartinIdentity} provides a quadratic equation for the tenth squared distance, $s_{45}$,
\begin{equation}
A(s_{45})^2 + Bs_{45} + C = 0,
\label{eq:r45quadeqn}
\end{equation}
where $A$, $B$ and $C$ are functions of the other squared distances. (See \cite{KlapperDebrota1980} for explicit expressions for $A$, $B$ and $C$.) This means that there are two possible solutions for $s_{45}$ (and hence also for $r_{45}$) when the other distances are set. This situation is demonstrated in \fref{fig:r45:twosolns}.

The above example shows that although only 9 degrees of freedom are necessary to characterize a cluster of 5 particles, selecting a subset of 9 distances is not sufficient. For this reason, any interatomic potential energy extension (see \sref{sec:s_motion}) must be expressed as a function of $N(N-1)/2$ arguments and not just an arbitrary subset of $3N-6$ of them.

\section{Useful lemmas}
\label{ch:noll}
In his 1955 paper on the "Derivation of the fundamental equations of continuum thermodynamics from statistical mechanics" \cite{noll1955}, Walter Noll proves two lemmas that play an important role in his derivation. In this section, for completeness, we derive Noll's first lemma and then extend it to arbitrary curved ``paths of interaction''. We then derive Noll's second lemma for this more generalized case.

Let $\bm{f}(\bm{v},\bm{w})$ be a tensor-valued function of two vectors $\bm{v}$ and $\bm{w}$, which satisfies the following three conditions:
\begin{enumerate}
\item $\bm{f}(\bm{v},\bm{w})$ is defined for all $\bm{v}$ and $\bm{w}$ and is continuously differentiable \label{condition_1}.
\item There exists a $\delta > 0$, such that the auxiliary function $\bm{g}(\bm{v},\bm{w})$, defined through
\begin{equation}
\bm{g}(\bm{v},\bm{w}):=\bm{f}(\bm{v},\bm{w}) \vnorm{\bm{v}} ^{3+\delta} \vnorm{w}^{3+\delta},
\end{equation}
and its gradients $\nabla_{\bm{v}} \bm{g}$ and $\nabla_{\bm{w}}\bm{g}$ are bounded.\label{condition_2}
\item $\bm{f}(\bm{v},\bm{w})$ is antisymmetric, i.e.,
\begin{equation}
\bm{f}(\bm{v},\bm{w}) = -\bm{f}(\bm{w},\bm{v}).
\label{eqn:anti_sym_1}
\end{equation}
\label{condition_3}
\end{enumerate}
\begin{lemma}
\label{lem1}
Under the conditions mentioned above, the following equation holds:\footnote{The expression in Noll's paper appears transposed relative to \eref{eqn:lemma1}. This is because the gradient and divergence operations used by Noll are the transpose of our definitions. See end of \sref{ch:intro}.}
\begin{equation}
\label{eqn:lemma1}
\int_{\bm{y} \in \real{3}} \bm{f}(\bm{x},\bm{y}) \,d\bm{y} = -\frac{1}{2} \divr_{\bm{x}} \int_{\bm{z} \in \real{3}} \left [ \int_{s=0}^{1} \bm{f}(\bm{x} + s\bm{z},\bm{x}-(1-s)\bm{z}) \, ds \right ]\otimes \bm{z} \, d\bm{z}.
\end{equation}
\end{lemma}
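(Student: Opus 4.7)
My plan is to verify the identity by computing the right-hand side directly and showing it reduces to the left-hand side via a chain-rule observation followed by two changes of variables. Set
\begin{equation}
\bm{F}(\bm{x},\bm{z}) := \int_{s=0}^{1} \bm{f}(\bm{x}+s\bm{z},\,\bm{x}-(1-s)\bm{z})\,ds,
\end{equation}
so the RHS reads $-\tfrac{1}{2}\,\divr_{\bm{x}}\!\int_{\real{3}} \bm{F}(\bm{x},\bm{z})\otimes\bm{z}\,d\bm{z}$. The first step is to move the divergence inside the outer $\bm{z}$-integral; this is legitimate because, by condition~(\ref{condition_2}), the auxiliary function $\bm{g}$ and its $\bm{x}$-derivatives are bounded, yielding a decay of order $\vnorm{\bm{z}}^{-(3+\delta)}$ for the integrand that dominates the derivative uniformly. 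Since $\bm{z}$ is independent of $\bm{x}$, the identity $\divr_{\bm{x}}\bigl(\bm{a}(\bm{x})\otimes\bm{b}\bigr)=(\nabla_{\bm{x}}\bm{a})\bm{b}$ reduces the calculation to evaluating $(\nabla_{\bm{x}}\bm{F})\bm{z}$.

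Next I would exploit the key chain-rule identity
\begin{equation}
(\nabla_{\bm{x}} \bm{f}(\bm{x}+s\bm{z},\bm{x}-(1-s)\bm{z}))\bm{z} = \frac{d}{ds}\,\bm{f}(\bm{x}+s\bm{z},\,\bm{x}-(1-s)\bm{z}),
\end{equation}
which follows from condition~(\ref{condition_1}) because both arguments of $\bm{f}$ depend linearly on $\bm{x}$ and their total $\bm{x}$-derivative along the direction $\bm{z}$ coincides with the $s$-derivative along the segment from $\bm{x}-(1-s)\bm{z}$ to $\bm{x}+s\bm{z}$. The inner $s$-integral therefore telescopes:
\begin{equation}
\int_{s=0}^{1}\!\frac{d}{ds}\bm{f}(\bm{x}+s\bm{z},\bm{x}-(1-s)\bm{z})\,ds = \bm{f}(\bm{x}+\bm{z},\bm{x}) - \bm{f}(\bm{x},\bm{x}-\bm{z}).
\end{equation}

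Finally, I would perform the two changes of variables $\bm{y}=\bm{x}+\bm{z}$ in the first term and $\bm{y}=\bm{x}-\bm{z}$ in the second term (both with unit Jacobian on $\real{3}$) and then invoke the antisymmetry condition~(\ref{condition_3}), $\bm{f}(\bm{y},\bm{x})=-\bm{f}(\bm{x},\bm{y})$, to rewrite the first term with $\bm{x}$ in the first slot. The two contributions then coincide and add to $-2\int_{\real{3}}\bm{f}(\bm{x},\bm{y})\,d\bm{y}$. Dividing by $-2$ yields \eref{eqn:lemma1}.

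The steps are all elementary once set up, so the only genuine subtlety is justifying the interchange of $\divr_{\bm{x}}$ with the improper integral over $\bm{z}\in\real{3}$ (and the Fubini-type rearrangement of the $s$- and $\bm{z}$-integrals). Condition~(\ref{condition_2}) is tailored precisely for this: the bound on $\bm{g}$ forces $\bm{f}(\bm{x}+s\bm{z},\bm{x}-(1-s)\bm{z})$ to decay like $\vnorm{\bm{z}}^{-2(3+\delta)}$ for large $\vnorm{\bm{z}}$, uniformly in $s\in[0,1]$, with the same estimate for $\nabla_{\bm{x}}\bm{f}$; this integrable majorant makes dominated convergence applicable and legitimizes differentiation under the integral sign. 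Antisymmetry is used only at the very end, but it is indispensable --- without it the two integrals produced by the telescoping would not combine into a single expression involving $\bm{f}(\bm{x},\bm{y})$.
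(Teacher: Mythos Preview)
Your proposal is correct and follows essentially the same argument as the paper's proof: the key chain-rule identity $(\nabla_{\bm{x}}\bm{f})\bm{z}=\tfrac{d}{ds}\bm{f}$, the telescoping of the $s$-integral to $\bm{f}(\bm{x}+\bm{z},\bm{x})-\bm{f}(\bm{x},\bm{x}-\bm{z})$, and the use of antisymmetry together with the substitutions $\bm{z}=\pm(\bm{y}-\bm{x})$ are all identical. The only cosmetic difference is direction --- the paper starts from the left-hand side, applies antisymmetry first to obtain $\tfrac12\int[\bm{f}(\bm{x},\bm{x}-\bm{z})-\bm{f}(\bm{x}+\bm{z},\bm{x})]\,d\bm{z}$, and then recognizes this as the divergence, whereas you compute the divergence of the right-hand side and reduce it to the left.
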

\begin{proof}
Conditions (1) and (2) guarantee absolute convergence and allow the order of integration to be swapped. From \eref{eqn:anti_sym_1} we have
\begin{equation}
\int_{\bm{y} \in \real{3}}\bm{f}(\bm{x},\bm{y}) d\bm{y} = - \int_{\bm{y} \in \real{3}}\bm{f}(\bm{y},\bm{x}) d\bm{y}.
\end{equation}
Introduce new integration variables: on the left replace $\bm{y}$ with $\bm{z} = \bm{x} - \bm{y}$ and on the right replace $\bm{y}$ with $\bm{z} = \bm{y} - \bm{x}$. Thus,
\begin{equation}
\int_{\bm{z} \in \real{3}}\bm{f}(\bm{x},\bm{x}-\bm{z}) d\bm{z} = - \int_{\bm{z} \in \real{3}}\bm{f}(\bm{x}+\bm{z},\bm{x}) d\bm{z}.
\label{eqn:var_change}
\end{equation}
Note that a minus sign on the left-hand side is dropped by formally reversing the integration bounds. The two terms in \eref{eqn:var_change} are equal to $\int_{\bm{y} \in \real{3}} \bm{f}(\bm{x},\bm{y}) d\bm{y}$, we can therefore write
\begin{equation}
\label{eqn:average}
\int_{\bm{y} \in \real{3}}\bm{f}(\bm{x},\bm{y}) d\bm{y} = \frac{1}{2}\int_{\bm{z} \in \real{3}} \left [ \bm{f}(\bm{x},\bm{x} - \bm{z}) - \bm{f}(\bm{x} + \bm{z},\bm{x}) \right ] d\bm{z}.
\end{equation}
Next, from the chain rule, we have
\begin{equation}
\nabla_{\bm{x}} \bm{f}(\bm{x} + s \bm{z},\bm{x} - (1-s)\bm{z}) = \nabla_{\bm{v}} \bm{f} + \nabla_{\bm{w}} \bm{f},
\label{eqn:chain_rule}
\end{equation}
where $s \in \mathbb{R}$. Similarly,
\begin{equation}
\frac{d}{ds} \bm{f}(\bm{x} + s \bm{z},\bm{x} - (1-s)\bm{z}) = \left ( \nabla_{\bm{v}} \bm{f} + \nabla_{\bm{w}} \bm{f} \right ) \bm{z}.
\label{eqn:differential_f}
\end{equation}
Combining \eref{eqn:chain_rule} and \eref{eqn:differential_f}, we have
\begin{equation}
\left [\nabla_{\bm{x}} \bm{f}(\bm{x} + s \bm{z},\bm{x} - (1-s)\bm{z}) \right ] \bm{z} = \frac{d}{ds} \bm{f}(\bm{x} + s \bm{z},\bm{x} - (1-s)\bm{z}).
\end{equation}
Integrating the above equation with respect to $s$ from $0$ to $1$ gives
\begin{equation}
\label{eqn:final}
\left [ \nabla_{\bm{x}} \int_{s=0}^{1} \bm{f}(\bm{x} + s\bm{z},\bm{x} - (1-s)\bm{z}) ds\right ] \bm{z} = \bm{f}(\bm{x} + \bm{z},\bm{x}) - \bm{f}(\bm{x},\bm{x} - \bm{z}).
\end{equation}
Substituting \eref{eqn:final} into \eref{eqn:average}, and using the identity 
\begin{equation}
\label{eqn:identity_lemma}
(\nabla_{\bm{x}} \bm{T}) \bm{a} = \divr_{\bm{x}} (\bm{T} \otimes \bm{a}),
\end{equation}
where $\bm{T}(\bm{x})$ is a tensor of any order and $\bm{a}$  is a vector which is  not a function of $\bm{x}$, gives \eref{eqn:lemma1}.\qed
\end{proof}
Lemma \ref{lem1} provides us a solution $\bm{F}(\bm{x})$, to the equation
\begin{equation}
\int_{\bm{y} \in \real{3}} \bm{f}(\bm{x},\bm{y}) \, d\bm{y} = \divr_{\bm{x}} \bm{F}(\bm{x}).
\label{eqn:differential_F}
\end{equation}
It is clear that $\bm{F}(\bm{x})$ is only a single solution out of an infinite set of solutions that differ from it by a divergenceless second-order tensor field. The following lemma extends Lemma \ref{lem1} to accommodate other possible solutions of which the solution given in \eref{eqn:lemma1} is a special case. Before that we give the following definition:
\begin{definitions}
\label{def:poi}
For any $\bm{u}$ and $\bm{v} \in \real{3}$, the ``the path of interaction'' of $\bm{u}$ and $\bm{v}$ is a contour that joins $\bm{u}$ and $\bm{v}$ such that for any $\tbm{u}$ and $\tbm{v}$ in $\real{3}$, where $\vnorm{\bm{u}-\bm{v}} = \vnorm{\tbm{u}-\tbm{v}}$, the path of interaction of pairs $(\bm{u},\bm{v})$ and $(\tbm{u},\tbm{v})$ are related by a rigid body transformation $\langle \bm{Q} \mid \bm{c} \rangle$, for some $\bm{Q} \in \bm{SO}(3)$ and $\bm{c} \in \mathbb{R}^3$, with $\bm{Q}(\bm{u}-\bm{v})=\tilde{\bm{u}}-\tilde{\bm{v}}$ and $\bm{Q}=\bm{I}$, whenever $\bm{u}-\bm{v} = \tilde{\bm{u}}-\tilde{\bm{v}}$. See \fref{fig:define_poi}.
\end{definitions}
Basically, this definition enforces the condition that the ``path of interaction'' is a contour whose shape is only a function of the distance between the points that it connects. The shape of the contour for a given distance is assumed to be dictated by the nature of bonding in the material. Here we assume this shape to be known. (See also footnote~\ref{foot:curveshape} on page~\pageref{foot:curveshape}).

\begin{figure}
\centering
\includegraphics[totalheight=0.2\textheight]{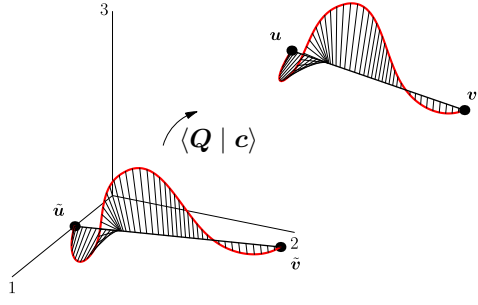}
\caption{The property of a \emph{path of interaction} as mentioned in the definition is illustrated in this figure. Two paths of interaction for the pairs $(\bm{u},\bm{v})$ and $(\tilde{\bm{u}},\tilde{\bm{v}})$ are shown in the figure.}
\label{fig:define_poi}
\end{figure}

Let $(\bm{e}_1,\bm{e}_2,\bm{e}_3)$ be a basis of $\real{3}$. For every $l>0$, let $\bm{\Upsilon}_l:[0,1] \to \real{3}$ be a continuously differentiable contour in $\real{3}$ such that
\begin{subequations}
\begin{equation}
\bm{\Upsilon}_l(s) \cdot \bm{e}_1 = sl, \quad  0 \le s \le 1, \label{eqn:g_prop_1}
\end{equation}
\begin{equation}
\bm{\Upsilon}_l(0) = (0,0,0); \qquad \bm{\Upsilon}_l(1) = (l,0,0). \label{eqn:g_prop_2} 
\end{equation}
\end{subequations}
\fref{fig:contour} describes the properties of the contour $\bm{\Upsilon}_l$ mentioned above.\\

\begin{figure}
\begin{center}
\subfigure[]{\label{fig:adm}\includegraphics[totalheight=0.2\textheight]{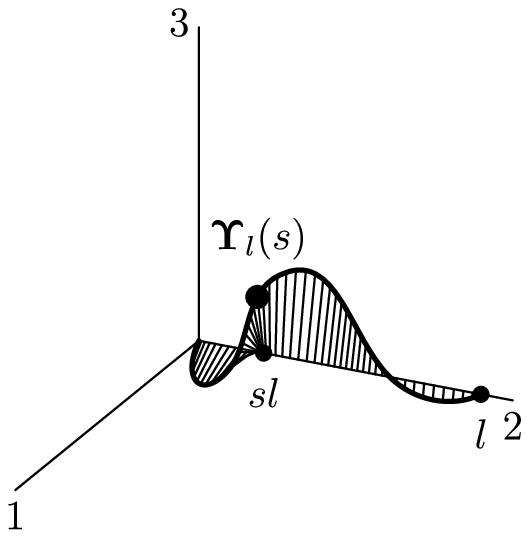}}
\subfigure[]{\label{fig:inadm}\includegraphics[totalheight=0.2\textheight]{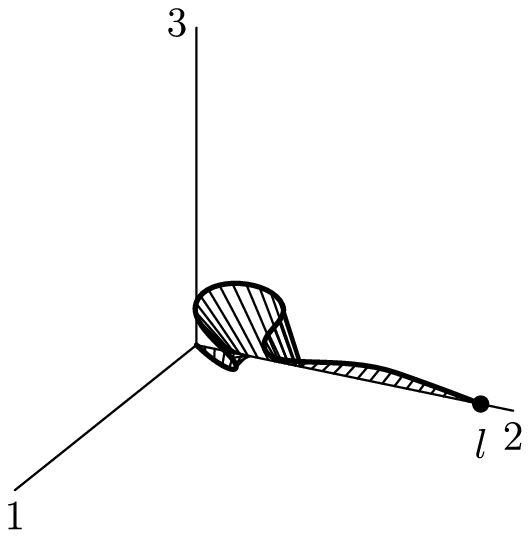}}
\end{center}
\caption{Frame (a) shows an admissible contour and frame (b) shows an inadmissible contour which violates property \eref{eqn:g_prop_1} of $\bm{\Upsilon}_l$.}
\label{fig:contour}
\end{figure}

By the definition of the \emph{path of interaction}, it is clear that the shape of any path of interaction can be described by the contour $\bm{\Upsilon}_l$. Moreover, from its properties it is possible to explicitly define the path of interaction as a function of $\bm{\Upsilon}_l$, a given point $\bm{x}$ through which it passes, and a vector $\bm{z}$ which connects its end points. More precisely, for any $0 \le \bar{s} \le 1$, $\bm{x} \in \real{3}$ and $\bm{z} \in \real{3}$, let $\bm{\bm{\Gamma}}(\cdot \, ;\bar{s},\bm{x},\bm{z}): [0,1] \to \real{3}$ denote a path of interaction in $\real{3}$, such that
\begin{equation}
\bm{\bm{\Gamma}}(s;\bar{s},\bm{x},\bm{z}) = \bm{x} + \Qz \left [\bm{\Upsilon}_{\vnorm{\bm{z}}}(s) -\bm{\Upsilon}_{\vnorm{\bm{z}}}(\bar{s}) \right ],\label{eqn:transform}
\end{equation}
for some $\Qz \in \bm{SO}(3)$, satisfying
\begin{equation}
\Qz\bm{e}_1 = -\frac{\bm{z}}{\vnorm{\bm{z}}}, 
\label{eqn:z_condition}
\end{equation}
as shown in \fref{fig:path}. Let us now verify that $\bm{\Gamma}$ qualifies as a path of interaction. It is clear from \eref{eqn:transform} that the contours $\bm{\Gamma}(s;\bar{s},\bm{x},\bm{z})$ and $\bm{\Upsilon}_{\vnorm{\bm{z}}}(s)$ are related through a rigid body transformation, such that
\begin{subequations}
\label{prop_Gamma}
\begin{equation}
\label{eqn:G_prop_1}
\bm{\bm{\Gamma}}(\bar{s};\bar{s},\bm{x},\bm{z}) = \bm{x},
\end{equation}
\begin{equation}
\label{eqn:G_prop_2}
\bm{\bm{\Gamma}}(1;\bar{s},\bm{x},\bm{z}) - \bm{\bm{\Gamma}}(0;\bar{s},\bm{x},\bm{z}) = -\bm{z}.
\end{equation}
\end{subequations}
\fref{fig:path} describes the properties mentioned above.\\

\begin{figure}
\centering
\includegraphics[totalheight=0.2\textheight]{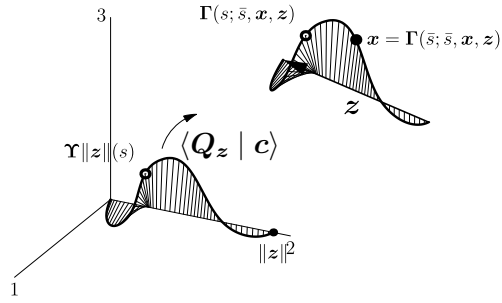}
\caption{The contour $\bm{\Gamma}(.;\bar{s},\bm{x},\bm{z})$ passes through $\bm{x}$ and the vector $\bm{z}$ is the difference between the two end points of the contour. It is related to the contour $\bm{\Upsilon}_{\vnorm{\bm{z}}}$, such that any point $\bm{\Upsilon}_{\vnorm{\bm{z}}}(s)$ is mapped to  $\bm{\Gamma}(s;\bar{s},\bm{x},\bm{z})$ (shown above as a hollow dot) by a rigid body motion $\langle \Qz | \bm{c} \rangle$, where $\Qz$ represents rotation and $\bm{c}= \bm{x} - \Qz\bm{\Upsilon}_{\vnorm{\bm{z}}}(\bar{s})$, represents translation.}
\label{fig:path}
\end{figure}

From \eref{eqn:G_prop_1}, it follows that $\bm{\Gamma}$ passes through the point $\bm{x}$ and from \eref{eqn:G_prop_2} it is clear that $\bm{z}$ is the vector joining its endpoints. Moreover $\Qz$ in \eref{eqn:transform} is made independent of $\bar{s}$ to ensure that the condition $\Qz = \bm{I}$ whenever  $\bm{x}-\bm{y} = \bm{u}-\bm{v}$, in the definition of the path of interaction, is satisfied.  \\

For any $\bm{x} \in \real{3}$, let
\begin{equation}
\mathcal{C}_{\bm{x}} = \{\bm{\bm{\Gamma}}(\cdot;\bar{s},\bm{x},\bm{z}): 0 \le \bar{s} \le 1, \bm{z} \in \real{3} \},
\end{equation}
denote the set of all paths of interaction that pass through $\bm{x}$. Now for every $\bm{y}$ on a contour $\bm{\bm{\Gamma}} \in \mathcal{C}_{\bm{x}}$, let $\bm{y}_{\perp}$ denote the projection of $\bm{y}$ on the line joining the end points of the path. It is easy to see that $\bm{y}_{\perp}(s;\bar{s},\bm{x},\bm{z})$ is given by
\begin{equation}
\bm{y}_{\perp}(s;\bar{s},\bm{x},\bm{z}) = \bm{x} + \Qz \left [ s \vnorm{\bm{z}} \bm{e}_1 - \bm{\Upsilon}_{\vnorm{\bm{z}}}(\bar{s}) \right ].
\end{equation}
Therefore, from \eref{eqn:G_prop_1} we have
\begin{equation}
\bm{x}_{\perp}(\bar{s},\bm{x},\bm{z}) = \bm{y}_{\perp}(\bar{s};\bar{s},\bm{x},\bm{z}) = \bm{x} + \Qz \left [ \bar{s} \vnorm{\bm{z}} \bm{e}_1 - \bm{\Upsilon}_{\vnorm{\bm{z}}}(\bar{s}) \right ].
\end{equation}
Using \eref{eqn:z_condition} this simplifies to 
\begin{equation}
\label{eqn:x_perp}
\bm{x}_{\perp}(\bar{s},\bm{x},\bm{z}) = \bm{x} - \bar{s}\bm{z} - \Qz\bm{\Upsilon}_{\vnorm{\bm{z}}}(\bar{s}).
\end{equation}
We now generalize Noll's Lemma \ref{lem1} to arbitrary paths of interaction.

\begin{lemma}
\label{lem2}
For given paths of interaction on $\real{3} \times \real{3}$, and under conditions \ref{condition_1}-\ref{condition_3} on $\bm{f}(\bm{v},\bm{w})$ given at the start of this appendix, the following equation holds:
\begin{equation}
\label{eqn:lem2}
\int_{\bm{y} \in \real{3}} \bm{f}(\bm{x},\bm{y}) \,d\bm{y} = \frac{1}{2} \divr_{\bm{x}} \int_{\bm{z} \in \real{3}} \left [ \int_{\bar{s}=0}^{1} \bm{f}(\bm{x}_{\perp} + \bar{s} \bm{z}, \bm{x}_{\perp}-(1-\bar{s})\bm{z}) \right ] \otimes \Qz\bm{\Upsilon}'_{\vnorm{\bm{z}}}(\bar{s}) \, d\bar{s}  \, d\bm{z},
\end{equation}
where $\bm{x}_{\perp}(\bar{s},\bm{x},\bm{z})$ is given by \eref{eqn:x_perp}.
\end{lemma}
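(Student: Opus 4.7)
The plan is to adapt Noll's proof of Lemma \ref{lem1} by replacing the straight segment parameterized by $s\in[0,1]$ with the curved path $\bm{\bm{\Gamma}}(\,\cdot\,;\bar{s},\bm{x},\bm{z})$, and to track the two endpoints of that path as $\bar{s}$ varies. First I would use antisymmetry \eref{eqn:anti_sym_1} together with the two changes of variable $\bm{z}=\bm{x}-\bm{y}$ and $\bm{z}=\bm{y}-\bm{x}$, whose validity is guaranteed by the decay hypothesis (2), to rewrite
\begin{equation}
\int_{\real{3}} \bm{f}(\bm{x},\bm{y})\,d\bm{y} = \tfrac{1}{2}\int_{\real{3}} \bigl[\bm{f}(\bm{x},\bm{x}-\bm{z}) - \bm{f}(\bm{x}+\bm{z},\bm{x})\bigr]\,d\bm{z},
\end{equation}
exactly as in \eref{eqn:average} of the proof of Lemma \ref{lem1}.

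The heart of the argument is to exhibit the bracketed integrand as a total $\bar{s}$-derivative along the curved path. Setting $\bm{A}(\bar{s}):=\bm{x}_{\perp}+\bar{s}\bm{z}$ and $\bm{B}(\bar{s}):=\bm{x}_{\perp}-(1-\bar{s})\bm{z}$ with $\bm{x}_{\perp}$ given by \eref{eqn:x_perp}, a short computation using $\bm{\Upsilon}_{\vnorm{\bm{z}}}(0)=\bm{0}$ and $\Qz\bm{\Upsilon}_{\vnorm{\bm{z}}}(1)=-\bm{z}$ (the latter from \eref{eqn:g_prop_2} and \eref{eqn:z_condition}) shows that $\bm{A}(0)=\bm{x}$, $\bm{B}(0)=\bm{x}-\bm{z}$, $\bm{A}(1)=\bm{x}+\bm{z}$, and $\bm{B}(1)=\bm{x}$, so the endpoints of the $\bar{s}$-interval reproduce the two terms in the bracket above. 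The key observation is that the $\bar{s}$-dependent offset $-\Qz\bm{\Upsilon}_{\vnorm{\bm{z}}}(\bar{s})$ appearing in $\bm{x}_{\perp}$ enters additively in both $\bm{A}$ and $\bm{B}$, so the two trajectories share a common derivative,
\begin{equation}
\frac{d\bm{A}}{d\bar{s}} = \frac{d\bm{B}}{d\bar{s}} = -\Qz\bm{\Upsilon}'_{\vnorm{\bm{z}}}(\bar{s}),
\end{equation}
while both depend on $\bm{x}$ through an identity gradient. The chain rule then yields the exact analogue of \eref{eqn:final},
\begin{equation}
\frac{d}{d\bar{s}}\bm{f}(\bm{A}(\bar{s}),\bm{B}(\bar{s})) = \bigl[\nabla_{\bm{x}}\bm{f}(\bm{A}(\bar{s}),\bm{B}(\bar{s}))\bigr]\bigl(-\Qz\bm{\Upsilon}'_{\vnorm{\bm{z}}}(\bar{s})\bigr),
\end{equation}
and integrating from $\bar{s}=0$ to $1$ produces an explicit primitive for $\bm{f}(\bm{x}+\bm{z},\bm{x})-\bm{f}(\bm{x},\bm{x}-\bm{z})$.

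The remainder is bookkeeping: applying identity \eref{eqn:identity_lemma} (valid because $\Qz\bm{\Upsilon}'_{\vnorm{\bm{z}}}(\bar{s})$ is independent of $\bm{x}$), substituting back into the symmetrized form, and exchanging $\divr_{\bm{x}}$ with the outer $\bm{z}$-integration (justified by hypothesis (2) together with the assumed continuous differentiability of $\bm{\Upsilon}_l$) delivers \eref{eqn:lem2}. I expect no substantive obstacle: the computation is essentially a relabelling of Noll's argument in which the constant vector $\bm{z}$ of the straight-bond case is replaced by the local tangent $-\Qz\bm{\Upsilon}'_{\vnorm{\bm{z}}}(\bar{s})$ to the path of interaction. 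A useful consistency check is that for the straight-bond choice $\bm{\Upsilon}_l(s)=(sl,0,0)$ one has $\bm{x}_{\perp}=\bm{x}$ and $-\Qz\bm{\Upsilon}'_{\vnorm{\bm{z}}}(\bar{s})=\bm{z}$, so \eref{eqn:lem2} collapses to \eref{eqn:lemma1}.
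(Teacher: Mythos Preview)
Your proposal is correct and follows essentially the same route as the paper's proof: both start from the symmetrized identity \eref{eqn:average}, then use the chain rule together with $\nabla_{\bm{x}}\bm{x}_{\perp}=\bm{I}$ and $\frac{d\bm{x}_{\perp}}{d\bar{s}}=-\bm{z}-\Qz\bm{\Upsilon}'_{\vnorm{\bm{z}}}(\bar{s})$ to exhibit the integrand as a total $\bar{s}$-derivative, and finish with identity \eref{eqn:identity_lemma}. Your introduction of the names $\bm{A}(\bar{s})$, $\bm{B}(\bar{s})$ and the explicit endpoint verification is a slight notational streamlining of the same computation.
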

\begin{proof}
From \eref{eqn:average} we have
\begin{equation}
\label{eqn:average_*}
\int_{\bm{y} \in \real{3}} \bm{f}(\bm{x},\bm{y}) \,d\bm{y} = \frac{1}{2}\int_{\bm{z} \in \real{3}} \left [ \bm{f}(\bm{x},\bm{x} - \bm{z}) - \bm{f}(\bm{x} + \bm{z},\bm{x}) \right ] d\bm{z}.
\end{equation}
Next, from the chain rule, we have
\begin{equation}
\nabla_{\bm{x}} \bm{f}(\bm{x}_{\perp} + \bar{s} \bm{z}, \bm{x}_{\perp}-(1-\bar{s})\bm{z}) = (\nabla_{\bm{v}} \bm{f} + \nabla_{\bm{w}} \bm{f}) \nabla_{\bm{x}} \bm{x}_{\perp},
\end{equation}
where $\bar{s} \in \mathbb{R}$. From \eref{eqn:x_perp} we have $\nabla_{\bm{x}} \bm{x}_{\perp} = \bm{I}$. Therefore
\begin{equation}
\label{eqn:chain_rule_perp}
\nabla_{\bm{x}} \bm{f}(\bm{x}_{\perp} + \bar{s} \bm{z}, \bm{x}_{\perp}-(1-\bar{s})\bm{z}) = \nabla_{\bm{v}} \bm{f} + \nabla_{\bm{w}} \bm{f}.
\end{equation}
Similarly,
\begin{equation}
\label{eqn:differential_f_perp_1}
\frac{d}{d\bar{s}} \bm{f}(\bm{x}_{\perp} + \bar{s} \bm{z}, \bm{x}_{\perp}-(1-\bar{s})\bm{z}) = \left ( \nabla_{\bm{v}} \bm{f} + \nabla_{\bm{w}} \bm{f} \right ) \left (\frac{d \bm{x}_{\perp}}{d\bar{s}} + \bm{z} \right ).
\end{equation}
From \eref{eqn:x_perp}, we have
\begin{equation}
\label{eqn:x_perp_der}
\frac{d \bm{x}_{\perp}}{d\bar{s}} = -\bm{z} - \Qz \bm{\Upsilon}'_{\vnorm{\bm{z}}}(\bar{s}).
\end{equation}
Substituting \eref{eqn:x_perp_der} into \eref{eqn:differential_f_perp_1}, we have
\begin{equation}
\label{eqn:differential_f_perp_2}
\frac{d}{d\bar{s}} \bm{f}(\bm{x}_{\perp} + \bar{s} \bm{z}, \bm{x}_{\perp}-(1-\bar{s})\bm{z}) = -\left ( \nabla_{\bm{v}} \bm{f} + \nabla_{\bm{w}} \bm{f} \right ) \Qz \bm{\Upsilon}'_{\vnorm{\bm{z}}}(\bar{s}).
\end{equation}
Combining \eref{eqn:chain_rule_perp} and \eref{eqn:differential_f_perp_2}, we have
\begin{equation}
\frac{d}{d\bar{s}}\bm{f}(\bm{x}_{\perp} + \bar{s} \bm{z}, \bm{x}_{\perp}-(1-\bar{s})\bm{z}) = -\left [ \nabla_{\bm{x}} \bm{f}(\bm{x}_{\perp} + \bar{s} \bm{z}, \bm{x}_{\perp}-(1-\bar{s})\bm{z}) \right ] \Qz \bm{\Upsilon}'_{\vnorm{\bm{z}}}(\bar{s}).
\end{equation}
Integrating both sides over the interval $\bar{s} \in [0,1]$, we have
\begin{align}
\label{eqn:int_0_1}
\bm{f}(\bm{x}_{\perp}(1,\bm{x},\bm{z}) + \bm{z},\bm{x}_{\perp}(1,\bm{x},\bm{z})) &- \bm{f}(\bm{x}_{\perp}(0,\bm{x},\bm{z}),\bm{x}_{\perp}(0,\bm{x},\bm{z}) - \bm{z}) = \notag \\ 
&-\int_{\bar{s}=0}^{1} [ \nabla_{\bm{x}} \bm{f}(\bm{x}_{\perp} + \bar{s} \bm{z}, \bm{x}_{\perp}-(1-\bar{s})\bm{z}) \Qz \bm{\Upsilon}'_{\vnorm{\bm{z}}}(\bar{s}) \, d\bar{s}.
\end{align}
Using \eref{eqn:x_perp}, \eref{eqn:z_condition}, and property \eref{eqn:g_prop_2} of $\bm{\Upsilon}_l$, we have
\begin{equation}
\label{eqn:x_perp_end}
\bm{x}_{\perp}(0,\bm{x},\bm{z})=\bm{x} \qquad ; \qquad \bm{x}_{\perp}(1,\bm{x},\bm{z})=\bm{x}.
\end{equation}
Substituting, \eref{eqn:x_perp_end} into \eref{eqn:int_0_1} and using the identity \eref{eqn:identity_lemma}, we have
\begin{equation}
\bm{f}(\bm{x}+\bm{z},\bm{x}) - \bm{f}(\bm{x},\bm{x}-\bm{z}) = - \divr_{\bm{x}} \int_{\bar{s}=0}^{1} \bm{f}(\bm{x}_{\perp} + \bar{s} \bm{z}, \bm{x}_{\perp}-(1-\bar{s})\bm{z}) \otimes \Qz \bm{\Upsilon}'_{\vnorm{\bm{z}}}(\bar{s}) \, d\bar{s}.
\end{equation}
If we now substitute the above equation into \eref{eqn:average_*}, the lemma is proved.\qed
\end{proof}

\begin{lemma}
\label{lem3}
Let $\Omega \subset \real{3}$, with a piecewise smooth surface $\mathcal{S}$. For given paths of interaction on $\real{3} \times \real{3}$, and under the conditions \ref{condition_1}-\ref{condition_3} on $\bm{f}(\bm{v},\bm{w})$ given at the start of this appendix, the following equation holds:
\begin{align}
\int_{\bm{x} \in \Omega} & \int_{\bm{y} \in \Omega^{\rm{c}}} \bm{f}(\bm{x},\bm{y}) \,d\bm{y} \, d\bm{x} \notag \\
&= \frac{1}{2} \int_{\mathcal{S}} \int_{\bm{z} \in \real{3}} \int_{\bar{s}=0} ^{1} \bm{f}(\bm{x}_{\perp} + \bar{s} \bm{z}, \bm{x}_{\perp} - (1-\bar{s})\bm{z}) (\Qz\bm{\Upsilon}'_{\vnorm{\bm{z}}}(\bar{s}) \cdot \bm{n} ) \,d\bar{s}\,d\bm{z} \,d\mathcal{S}(\bm{x}),
\label{eqn:lem3}
\end{align}
where $\bm{x}_{\perp}(\bar{s},\bm{x},\bm{z})$ is given by \eref{eqn:x_perp} and $\Omega^{\rm{c}}:=\real{3} \backslash \Omega$.
\end{lemma}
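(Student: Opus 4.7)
The plan is to obtain \eref{eqn:lem3} as an immediate corollary of Lemma \ref{lem2} by integrating its pointwise identity over $\bm{x}\in\Omega$ and then applying the divergence theorem. First, I would invoke Lemma \ref{lem2} for the given $\bm{f}$ and paths of interaction, which yields, at each $\bm{x}\in\real{3}$,
\begin{equation}
\int_{\bm{y}\in\real{3}}\bm{f}(\bm{x},\bm{y})\,d\bm{y} \;=\; \divr_{\bm{x}}\bm{T}(\bm{x}),
\notag
\end{equation}
where
\begin{equation}
\bm{T}(\bm{x}) := \tfrac{1}{2}\int_{\bm{z}\in\real{3}}\int_{\bar{s}=0}^{1} \bm{f}\bigl(\bm{x}_\perp+\bar{s}\bm{z},\,\bm{x}_\perp-(1-\bar{s})\bm{z}\bigr)\otimes \Qz\bm{\Upsilon}'_{\vnorm{\bm{z}}}(\bar{s})\,d\bar{s}\,d\bm{z}.
\notag
\end{equation}
I would then integrate this identity over $\bm{x}\in\Omega$ and apply the divergence theorem on the right, obtaining
\begin{equation}
\int_{\Omega}\!\int_{\real{3}}\bm{f}(\bm{x},\bm{y})\,d\bm{y}\,d\bm{x} \;=\; \int_{\mathcal{S}}\bm{T}(\bm{x})\bm{n}\,d\mathcal{S}(\bm{x}).
\notag
\end{equation}
Substituting the explicit form of $\bm{T}\bm{n}$ reproduces exactly the right-hand side of \eref{eqn:lem3}, since $(\bm{a}\otimes\bm{b})\bm{n}=(\bm{b}\cdot\bm{n})\bm{a}$.

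Next, I would reduce the left-hand side of the identity above to the stated form $\int_\Omega\!\int_{\Omega^{\rm c}}$. Splitting the inner integral as $\int_{\real{3}}=\int_\Omega+\int_{\Omega^{\rm c}}$, the claim reduces to
\begin{equation}
\int_{\Omega}\!\int_{\Omega}\bm{f}(\bm{x},\bm{y})\,d\bm{y}\,d\bm{x} \;=\; \bm{0}.
\notag
\end{equation}
This follows at once from the antisymmetry condition \eref{eqn:anti_sym_1}: swap the dummy variables $\bm{x}\leftrightarrow\bm{y}$ in the double integral, use $\bm{f}(\bm{y},\bm{x})=-\bm{f}(\bm{x},\bm{y})$, and conclude that the integral equals its own negative. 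Absolute convergence, which legitimizes the variable swap and Fubini on the $\Omega\times\Omega$ region, is supplied by condition (\ref{condition_2}) on $\bm{f}$.

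What remains to check carefully is the justification of the two analytic manipulations: (i) interchanging $\int_\Omega$ with the outer divergence in passing from the pointwise Lemma \ref{lem2} identity to its volume-integrated form, and (ii) applying the divergence theorem to $\bm{T}$ over $\Omega$ with piecewise smooth boundary $\mathcal{S}$. Both rest on showing that the tensor field $\bm{T}(\bm{x})$ is of class $C^{1}$ with locally integrable derivatives, which is the main obstacle of the argument. I would establish this by appealing to conditions (\ref{condition_1}) and (\ref{condition_2}): the hypothesized decay of $\bm{f}(\bm{v},\bm{w})\|\bm{v}\|^{3+\delta}\|\bm{w}\|^{3+\delta}$ and of its gradients, together with the mild regularity of the contours $\bm{\Upsilon}_l$ and of the rotation $\Qz$ assumed in Section~\ref{sec:gen_stress}, ensure that differentiation under the integral sign is legitimate in the definition of $\bm{T}$. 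Once this regularity is in hand, the standard divergence theorem applies directly to the piecewise smooth $\mathcal{S}$, completing the proof. No further assumptions beyond those already invoked for Lemma \ref{lem2} are needed.
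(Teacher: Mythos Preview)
Your proof is correct and follows essentially the same approach as the paper: use antisymmetry to replace $\int_\Omega\int_{\Omega^{\rm c}}$ by $\int_\Omega\int_{\real{3}}$, invoke Lemma~\ref{lem2} to write the inner integral as $\divr_{\bm{x}}\bm{T}(\bm{x})$, and apply the divergence theorem. The only difference is that the paper performs these three steps in a slightly different order and omits the explicit discussion of regularity that you include.
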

\begin{proof}
We immediately see that due to antisymmetry of $\bm{f}(\bm{v},\bm{w})$, we have
\begin{equation}
\int_{\bm{x} \in \Omega} \int_{\bm{y} \in \Omega} \bm{f}(\bm{x},\bm{y}) \,d\bm{y} \, d\bm{x} = \bm{0}.
\end{equation}
Therefore, we have
\begin{equation}
\label{eqn:anti_sym_2}
\int_{\bm{x} \in \Omega} \int_{\bm{y} \in \Omega^{\rm{c}}} \bm{f}(\bm{x},\bm{y}) \,d\bm{y} \, d\bm{x} = \int_{\bm{x} \in \Omega} \int_{\bm{y} \in \real{3}} \bm{f}(\bm{x},\bm{y}) \,d\bm{y} \, d\bm{x}.
\end{equation}
From Lemma \ref{lem2}, we have
\begin{equation}
\int_{\bm{y} \in \real{3}} \bm{f}(\bm{x},\bm{y}) \, d\bm{y} = \divr_{\bm{x}} \bm{g}(\bm{x}),
\end{equation}
with
\begin{equation}
\bm{g}(\bm{x}) = \frac{1}{2} \int_{\bm{z} \in \real{3}} \int_{\bar{s}=0}^{1} \bm{f}(\bm{x}_{\perp}+\bar{s}\bm{z},\bm{x}_{\perp} - (1-\bar{s}) \bm{z}) \otimes \Qz\bm{\Upsilon}'_{\vnorm{\bm{z}}}(\bar{s}) \, d\bar{s} \, d\bm{z},
\end{equation}
where $\bm{x}_{\perp}(\bar{s},\bm{x},\bm{z})$ is given by \eref{eqn:x_perp}. From the divergence theorem we have
\begin{equation}
\int_{\bm{x} \in \Omega} \divr_{\bm{x}} \bm{g}(\bm{x}) \, d\bm{x} = \int_{\mathcal{S}} \bm{g}(\bm{x}) \cdot \bm{n}(\bm{x}) \, d\mathcal{S}(\bm{x}).
\label{eqn:divergence}
\end{equation}
Using \eref{eqn:anti_sym_2}-\eref{eqn:divergence}, we obtain \eref{eqn:lem3}.\qed
\end{proof}

\section{Derivation of the Hardy stress from the Murdoch--Hardy procedure}
\label{sec:hardy_limit}
The Hardy stress, $\stressmh^{\rm{H}}_{w,\rm{v}}$, was derived in \sref{sec:murdoch_proc} subject to the condition:
\begin{equation}
\label{eqn:condition}
\divr_{\bm{x}} \stressmh^{\rm{H}}_{w,\rm{v}}(\bm{x},t) = \sum_{\substack{\alpha,\beta \\  \alpha \neq \beta}} \bm{f}_{\alpha\beta} w(\bm{x}_\alpha - \bm{x}).
\end{equation}
Recall that the expression for $\stressmh^{\rm{H}}_{w,\rm{v}}$ given in \eref{eqn:hardy_stress} was obtained by using the generator function $\bm{g}^{\rm{H}}$ (see \eref{eqn:g_1}), which is actually a distribution. Due to the inherent obstacles present in this procedure (see footnote~\ref{fn:mollifier} on page~\pageref{fn:mollifier}), we now derive the expression for the Hardy stress given in \eref{eqn:hardy_stress}, by avoiding the use of the Dirac delta distribution.

\begin{definition}
Consider the following definitions for the function $\eta(\bm{x})$ and associated functions taken from \cite{evans}:
\begin{enumerate}
\item
Define  a mollifier $\eta \in C^{\infty}(\real{3})$ by 
\begin{equation}
\eta(\bm{r}) := \left \{ 
\begin{array}{ll}
C \rm{exp} \left ( \frac{1}{\vnorm{\bm{r}}^2 - 1} \right )  &\mbox{if $\vnorm{\bm{r}} < 1$}, \\
0 &\mbox{if $\vnorm{\bm{r}} \ge 1$}, 
\end{array} 
\right.
\end{equation}
where the constant $C>0$ is selected so that $\int_{\real{3}} \eta \, d\bm{r} = 1$. 

\item
For each $\epsilon>0$, set
\begin{equation}
\eta_\epsilon(\bm{r}) := \frac{1}{\epsilon^3} \eta \left ( \frac{\bm{r}}{\epsilon} \right ).
\end{equation}
The family of functions $\eta_\epsilon$ are $C^\infty$ and satisfy
\begin{equation}
\int_{\real{3}} \eta_\epsilon \, d\bm{r} = 1.
\end{equation}
The support of $\eta_\epsilon$ is contained in a ball of radius $\epsilon$ centered at $\bm{0}$.

\item
If the function $h:\real{3} \to \mathbb{R}$ is locally integrable, define its mollification $h_\epsilon(\bm{r})$ as
\begin{equation}
h_\epsilon(\bm{r}) := \int_{\real{3}} \eta_\epsilon (\bm{r} - \bm{y}) h(\bm{y}) \, d\bm{y}.
\label{eqn:heps}
\end{equation}
\end{enumerate}
\end{definition}
We use the following property of mollifiers in our derivation:
\begin{equation}
\label{eqn:property}
h_\epsilon \to h \ \ \text{almost everywhere as} \ \ \epsilon \to 0.
\end{equation}
For the proof of \eref{eqn:property}, refer to \cite{evans}.
Equation \eref{eqn:condition} can be rewritten as
\begin{equation}
\label{eqn:condition_split}
\divr_{\bm{x}} \stressmh^{\rm{H}}_{w,\rm{v}}(\bm{x},t) = \sum_{\substack{\alpha,\beta \\  \alpha \neq \beta}} \bm{f}_{\alpha\beta} \sqrt{w(\bm{x}_\alpha - \bm{x}) w(\bm{x}_\alpha - \bm{x})},
\end{equation}
since $w(\bm{x}_\alpha - \bm{x})>0$. Now, since $\sqrt{w(\bm{x}_\beta - \bm{y})}$ is locally integrable, using property \eref{eqn:property}, we have
\begin{equation}
\label{eqn:property_apply}
\sqrt{w(\bm{x}_\alpha- \bm{x})} = \lim_{\epsilon \to 0} \int_{\real{3}} \sqrt{w(\bm{x}_\beta - \bm{y})} \eta_{\epsilon}(\bm{x}_\beta - \bm{x}_\alpha + \bm{x} - \bm{y}) \, d\bm{y},
\end{equation}
where referring to \eref{eqn:heps}, $h(\bm{y})=\sqrt{w(\bm{x}_\beta-\bm{y})}$ and $\bm{r}=\bm{x}_\beta - \bm{x}_\alpha + \bm{x}$. Using \eref{eqn:property_apply}, \eref{eqn:condition_split} can be rewritten as
\begin{align}
\divr_{\bm{x}} \stressmh^{\rm{H}}_{w,\rm{v}}(\bm{x},t) &= \lim_{\epsilon \to 0} \sum_{\substack{\alpha,\beta \\ \alpha \ne \beta}} \int_{\real{3}} \bm{f}_{\alpha\beta}  \sqrt{w(\bm{x}_\alpha - \bm{x}) w(\bm{x}_\beta - \bm{y})} \, \eta_\epsilon(\bm{x}_\beta - \bm{x}_\alpha + \bm{x} - \bm{y}) \, d\bm{y} \notag \\
&=: \lim_{\epsilon \to 0} \sum_{\substack{\alpha,\beta \\ \alpha \ne \beta}} \int_{\real{3}} \bm{g}^\epsilon_{\alpha\beta}(\bm{x},\bm{y},t) \, d\bm{y}.
\end{align}
Now, note that the function $\bm{g}^\epsilon_{\alpha\beta}$ is anti-symmetric with respect to its arguments, $\bm{x}$ and $\bm{y}$, for each $\epsilon>0$ and satisfies all the necessary conditions for the application of Lemma \ref{lem1} in Appendix \ref{ch:noll}. Therefore, from \ref{eqn:lemma1} it follows that
\begin{align}
\divr_{\bm{x}} \stressmh^{\rm{H}}_{w,\rm{v}}(\bm{x},t) = &\lim_{\epsilon \to 0} \sum_{\substack{\alpha,\beta \\ \alpha \ne \beta}} \left ( -\frac{1}{2} \divr_{\bm{x}} \int_{\real{3}} \left [ \int_{s=0}^{1} \bm{f}_{\alpha\beta} \sqrt{w(\bm{x}_\alpha - \bm{x} - s\bm{z}) w(\bm{x}_\beta - \bm{x} + (1-s)\bm{z})} \right.\right. \notag \\
& \times \, \eta_\epsilon(\bm{x}_\beta - \bm{x}_\alpha + \bm{z}) \, ds \bigg ] \otimes \bm{z} \, d\bm{z} \bigg ) \notag \\
&= \lim_{\epsilon \to 0} \sum_{\substack{\alpha,\beta \\ \alpha \ne \beta}} \left (- \frac{1}{2} \int_{\real{3}} \int_{s=0}^{1} \nabla_{\bm{x}} \left ( \bm{f}_{\alpha\beta} \sqrt{w(\bm{x}_\alpha - \bm{x} - s\bm{z}) w(\bm{x}_\beta - \bm{x} + (1-s)\bm{z})} \right ) \right. \notag \\
& \times \, \eta_\epsilon(\bm{x}_\beta - \bm{x}_\alpha + \bm{z}) \bm{z} \, ds \, d\bm{z} \bigg ),
\end{align}
where in the last equality we have used the identity given in \eref{eqn:identity_lemma} and the fact that $\eta_\epsilon$ is independent of $\bm{x}$ in the above equation. Since $w$ is positive, and if $w$ ,s a continuously differentiable function, it follows that 
\begin{equation}
\nabla_{\bm{x}} \left (\bm{f}_{\alpha\beta} \sqrt{w(\bm{x}_\alpha - \bm{x} - s\bm{z}) w(\bm{x}_\beta - \bm{x} + (1-s)\bm{z})} \right ) \bm{z}
\label{eqn:newh}
\end{equation}
is a locally integrable function of $\bm{z}$. Therefore, by property \ref{eqn:property} and using the property, $\eta_\epsilon(\bm{r})=\eta_\epsilon(-\bm{r})$, we have
\begin{align}
\divr_{\bm{x}} \stressmh^{\rm{H}}_{w,\rm{v}}(\bm{x},t) &= -\frac{1}{2} \sum_{\substack{\alpha,\beta \\ \alpha \ne \beta}} \int_{s=0}^{1} \nabla_{\bm{x}} \left [ \bm{f}_{\alpha\beta} w(\bm{x}_\alpha - \bm{x} - s(\bm{\bm{x}_\alpha - \bm{x}_\beta})) \right ] (\bm{x}_\alpha - \bm{x}_\beta) \, ds \notag \\ 
&= \divr_{\bm{x}} \Bigg [ \frac{1}{2} \sum_{\substack{\alpha,\beta \\ \alpha \ne \beta}} \int_{s=0}^{1} -[ \bm{f}_{\alpha\beta} w( (1-s) \bm{x}_\alpha + s \bm{x}_\beta- \bm{x} ) \otimes (\bm{x}_\alpha - \bm{x}_\beta)] \, ds \Bigg ],\label{eqn:divr_hardy_stress}
\end{align}
where referring to \eref{eqn:heps}, $\bm{r}=\bm{x}_\alpha-\bm{x}_\beta$, $\bm{y}=\bm{z}$, and $h(\bm{z})$ is given in \eref{eqn:newh}.  Comparing both sides of \eref{eqn:divr_hardy_stress}, we arrive at the expression for the Hardy stress tensor given in \eref{eqn:hardy_stress}.

\begin{acknowledgements}
The authors would like to thank Dr. Marcel Arndt for assisting with the translation of Noll's paper \cite{noll1955}. In addition, the authors would like to thank Roger Fosdick, Richard~D.~James and Ryan~S.~Elliott for many stimulating conversations regarding various aspects of this paper and, in particular, the issue of the extension of potential energy functions. Additional helpful conversations with Ronald~E.~Miller, Eliot~Fried, Richard~B.~Lehoucq and Andrew~L.~Ruina are gratefully acknowledged.
\end{acknowledgements}

\bibliographystyle{spmpsci}       
\bibliography{admal_tadmor_references}        
\end{document}